\newcommand{\longversion}[1]{#1}
\newcommand{\shortversion}[1]{}
\newcommand{\footnoteitext}[1]{\stepcounter{footnote}
  \footnotetext[\thefootnote]{#1}}
\newcites{sec}{Appendix References}
\def\etal{et~al.{}}
\DeclareFontFamily{U}{matha}{\hyphenchar\font45}
\DeclareFontShape{U}{matha}{m}{n}{
      <5> <6> <7> <8> <9> <10> gen * matha
      <10.95> matha10 <12> <14.4> <17.28> <20.74> <24.88> matha12
      }{}
\DeclareSymbolFont{matha}{U}{matha}{m}{n}
\DeclareMathSymbol{\squplus}{2}{matha}{"5D}
\newcommand{\raisemath}[1]{\mathpalette{\raisem@th{#1}}}
\newcommand{\raisem@th}[3]{\raisebox{#1}{$#2#3$}}
\newcommand{\pushright}[1]{\ifmeasuring@#1\else\omit\hfill\ensuremath{\displaystyle#1}\fi\ignorespaces}
\newcommand{\pushleft}[1]{\ifmeasuring@#1\else\omit$\displaystyle#1$\hfill\fi\ignorespaces}
\providecommand{\leftsquigarrow}{%
	\mathrel{\mathpalette\reflect@squig\relax}%
}
\newcommand{\reflect@squig}[2]{%
	\reflectbox{$\m@th#1\rightsquigarrow$}%
}
\DeclareMathOperator{\disj}{DISJ}%
\DeclareMathOperator{\choice}{CH}%
\DeclareMathOperator{\weight}{WGT}%
\DeclareMathOperator{\opt}{OPT}%
\DeclareMathOperator{\wght}{wght}%
\DeclareMathOperator{\cst}{cst}%
\DeclareMathOperator{\cnt}{cnt}%
\DeclareMathOperator{\bnd}{bnd}%
\DeclareMathOperator{\cntc}{\#\cdot}%
\DeclareMathOperator{\optimize}{\leftsquigarrow}
\DeclareMathOperator{\type}{type}
\newcommand{\intr}{\textit{int}}
\newcommand{\leaf}{\textit{leaf}}
\newcommand{\rem}{\textit{rem}}
\newcommand{\join}{\textit{join}}
\DeclareMathOperator{\Mod}{Mod}
\DeclareMathOperator{\SSR}{SatRules}
\DeclareMathOperator{\UpdateStates}{UpdtWgt}
\DeclareMathOperator{\UpdateRedStates}{UpdtWgt\&Ch}
\DeclareMathOperator{\post}{post-order}
\DeclareMathOperator{\kmin}{kmin}
\newcommand{\Tab}[1]{\ensuremath{\text{Child-Tabs}}}
\newcommand{\Tabs}[1]{\ensuremath{\text{Tables[#1]}}}
\renewcommand{\Big}{} 
\newcommand{\CCC}{\ensuremath{\mathcal{C}}}%
\newcommand{\SSS}{\ensuremath{\mathcal{S}}}%
\newcommand{\TTT}{\ensuremath{\mathcal{T}}}%
\newcommand{\ASP}{\textsc{Asp}\xspace}
\newcommand{\QBF}{\textsc{Qbf}\xspace}
\newcommand{\pname}[1]{\textsc{#1}\xspace}
\newcommand*\mcup{\mathbin{\mathpalette\mcupinn\relax}}
\newcommand*\mcupinn[2]{\vcenter{\hbox{$\mathsurround=0pt
  \ifx\displaystyle#1\textstyle\else#1\fi\bigcup$}}}
\newcommand{\emptyfunc}{\emptyset}
\newcommand{\NAT}{\ensuremath{\mathbb{N}}}
\newcommand{\inputPredColor}{orange!55!red}
\newcommand{\outputPredColor}{blue!45!black}
\newcommand{\statePredColor}{green!62!black}
\algrenewcommand\algorithmicensure{\textbf{Output:}}
\algrenewcommand{\algorithmiccomment}[1]{\emph{// #1}}
\newcommand{\tuplecolor}[1]{\textcolor{#1}}
\lstdefinelanguage{dflat}{
	numberstyle=\tiny,
	otherkeywords={:-},
	morekeywords={not},
	keywordstyle=\bfseries,
	emph={numChildNodes,initial,final,currentNode,childNode,bag,current,introduced,removed,atLevel,atNode,root,rootOf,leaf,leafOf,sub,childItem,childAuxItem,childCost,childOr,childAnd,childAccept,childReject,childRow},
	moreemph=[2]{item,auxItem,extend,cost,currentCost,length,or,and,accept,reject},
	alsoletter={\#}, 
	morecomment=[l]{\%},
	emphstyle=\color{\inputPredColor},
	emphstyle=[2]\color{\outputPredColor},
	literate={:-}{{$\leftarrow$}}2 {!=}{{$\neq$}}1, 
	breakindent=3em,
	escapechar=@, 
	captionpos=b,
        frame=bt, 
        numbers=left
}
  \newtheorem{remark}{Remark}%
  \newtheorem{theorem}{Theorem}%
  \newtheorem{lemma}{Lemma}%
  \newtheorem{proposition}{Proposition}%
  \newtheorem{example}{Example}%
  \newtheorem{definition}{Definition}%
  \newtheorem{corollary}{Corollary}%
\newtheorem{observation}{Observation}
  \renewenvironment{example}{\begin{EXa}}{\hfill\ensuremath{\blacksquare}\end{EXa}}
  \spnewtheorem{EXa}{Example}{\bfseries}{\normalfont}
\newcommand{\MAI}[1]{\ensuremath{#1^+_a}}%
\newcommand{\MAIRR}[2]{\ensuremath{#1^+_{#2}}}%
\newcommand{\MAR}[1]{\ensuremath{#1^-_a}}%
\newcommand{\MARRR}[2]{\ensuremath{#1^-_{#2}}}%
\newcommand{\MARR}[2]{\ensuremath{#1^-_{#2}}}%
\newcommand{\MAIR}[2]{\ensuremath{#1^+_{#2}}}%
\newcommand{\tabval}{\ensuremath{u}}
\newcommand{\tab}[1]{\ensuremath{\tau_{#1}}}
\newcommand{\at}{\text{\normalfont at}}
\newcommand{\oo}{\text{\normalfont cst}}
\newcommand{\att}[1]{\ensuremath{\at_{\hspace{-0.05em}\leq\hspace{-0.05em}#1}}}
\newcommand{\atto}{\ensuremath{\att{t}}}
\newcommand{\progt}[1]{\ensuremath{\prog_{\hspace{-0.05em}\leq\hspace{-0.05em}#1}}}
\newcommand{\progtneq}[1]{\ensuremath{\prog_{\hspace{-0.05em}<\hspace{-0.05em}#1}}}
\newcommand{\dpa}{\ensuremath{\mathcal{DP}}}
\newcommand{\por}{\vee}
\newcommand{\eqdef}{\ensuremath{\,\mathrel{\mathop:}=}}
\newcommand{\hsep}{\leftarrow\,}
\newcommand{\Card}[1]{|#1|}
\newcommand{\AspCons}{\pname{Cons}} %
\newcommand{\AspComp}{\pname{AS}} %
\newcommand{\AspCount}{\pname{\#Asp}} %
\newcommand{\AspCountO}{\pname{\AspCount{}O}} %
\newcommand{\PRIMSAT}{\ensuremath{{\algo{MOD}}}\xspace}
\newcommand{\PRIM}{\ensuremath{{\algo{PRIM}}}\xspace}
\newcommand{\INCSAT}{\ensuremath{{\algo{IMOD}}}\xspace}
\newcommand{\INC}{\ensuremath{{\algo{INC}}}\xspace}
\newcommand{\problemFont}[1]{\textsc{#1}}
\newlength\problemlength
\newcommand\dproblem[3]{%
\begin{center}
\fbox{%
\begin{minipage}{.93	\linewidth}%
\begin{list}{}{\labelwidth\problemlength \labelsep.7em \rightmargin1.5em
\leftmargin\problemlength \advance\leftmargin by3em
\parsep0ex \itemsep.2ex plus.1ex}
\item[{\sl Problem:\hfill}] {\problemFont{#1}}
\item[{\sl Input:  \hfill}] #2
\item[{\sl Task: \hfill}] #3
\end{list}
\end{minipage}
}
\end{center}
}
\newenvironment{indented}{\begin{changemargin}{1cm}{0cm}}{\end{changemargin}}
\let\phi\varphi
\let\epsilon\varepsilon
\renewcommand{\models}{\vDash}
\newcommand{\calT}{\mathcal{T}}
\newcommand{\card}[1]{\left|#1\right|}
\newcommand{\CCard}[1]{\|#1\|}
\newcommand{\algo}[1]{\ensuremath{\mathsf{#1}}}
\newcommand{\NP}{\ensuremath{\textsc{NP}}\xspace}
\newcommand{\co}{\ensuremath{\textsc{co}}}
\newcommand{\bigO}[1]{\ensuremath{{\mathcal O}(#1)}}
\newcommand{\tw}[1]{\mathit{tw}(#1)}
\newcommand{\SB}{\{}%
\newcommand{\SM}{\mid}%
\newcommand{\SE}{\}}%
\def\hy{\hbox{-}\nobreak\hskip0pt} 
\newcommand{\solver}[1]{\mbox{\text{#1}}\xspace}
\newcommand{\depqbfz}{\solver{DepQBF0}}
\newcommand{\sharpsat}{\solver{SharpSAT}}
\newcommand{\dynasp}[1]{\ensuremath{\solver{DynASP2}(#1)}}
\newcommand{\prog}{\ensuremath{\Pi}}
\tikzstyle{arg}=[draw, thick, circle]
\colorlet{afnodecolor}{green!20!blue!10}
\colorlet{tdnodecolor}{green!20!blue!10}
\colorlet{subfwnodecolor}{black!2}
\colorlet{subfwafinactivenodecolor}{white}
\colorlet{vertexTopColor}{white}
\colorlet{vertexBottomColor}{black!10}
\tikzstyle{afnode} = [draw,thick,shape=circle,minimum size=8mm,font=\normalsize,fill=afnodecolor]
\tikzstyle{afedge} = [->,draw,thick]
\tikzstyle{tdnode} = [draw,rounded corners,top color=vertexTopColor,bottom color=vertexBottomColor,minimum size=1.5em]
\tikzstyle{stdnode} = [tdnode, font=\scriptsize]
\tikzstyle{stdnodecompact} = [stdnode, inner sep = 1.5pt, outer sep = 0.1pt]
\tikzstyle{stdnodetable} = [stdnode, inner sep = 1.5pt, outer sep = 0]
\tikzstyle{stdnodenum} = [minimum size=1.5em, font=\scriptsize]
\tikzstyle{tdedge} = [-,draw,thick]
\tikzstyle{tdlabel} = [draw=none, rectangle, fill=none, inner sep=0pt, font=\scriptsize]
\tikzstyle{subfwnode} = [draw,thick,shape=rectangle,thin,rounded corners,minimum size=9mm,fill=subfwnodecolor,label distance=-2.5mm]
\tikzstyle{subfwafactivenode} = [draw,thick,shape=circle,minimum size=6mm,inner sep = 0pt,font=\scriptsize,fill=afnodecolor]
\tikzstyle{subfwafinactivenode} = [draw,thick,shape=circle,minimum size=6mm,inner sep = 0pt,font=\scriptsize,fill=white,dotted]
\tikzstyle{subfwafinactiveedge} = [->,draw,thick,dotted]
\tikzstyle{itemTree}=[level distance=2em,sibling distance=4ex,child anchor=west,grow'=right,right,align=left,every node/.style={draw,dashed,draw opacity=0.2,font=\footnotesize}]
\tikzstyle{itemTreeRoot}=[solid,inner sep=2]
\tikzstyle{orNode}=[label=left:$\lor$]
\tikzstyle{andNode}=[label=left:$\land$]
\tikzstyle{acceptNode}=[label=right:$\top$]
\tikzstyle{rejectNode}=[label=right:$\bot$]
\begin{document}
\shortversion{%
\mainmatter

\title{
	Answer Set Solving with \\ Bounded Treewidth Revisited}
\titlerunning{
	Answer Set Solving with Bounded Treewidth Revisited}  
%
\author{Johannes~K.~Fichte \and Markus Hecher \and\\
Michael Morak \and Stefan Woltran}
\authorrunning{Fichte et al.} 
%
\tocauthor{Johannes~K.~Fichte, Michael Morak, Markus Hecher and Stefan Woltran}
\institute{TU Wien, Austria\\
\email{lastname@dbai.tuwien.ac.at}}

} 
\longversion{%
  \title{Answer Set Solving with Bounded Treewidth
    Revisited\thanks{This is the author’s self-archived copy including
      detailed proofs. A preliminary version of the paper was
      presented on the workshop TAASP'16. Research was supported by
      the Austrian Science Fund (FWF), Grant Y698.}}
  \author{Johannes~K.~Fichte\footnote{Also
      affiliated with the Institute of Computer Science and
      Computational Science at
      University of Potsdam, Germany.}, Michael Morak, Markus Hecher and Stefan Woltran\\[3pt]
    TU Wien, Austria\\
    lastname@dbai.tuwien.ac.at }
  \date{}%
}

\maketitle              
\begin{abstract}
  Parameterized algorithms are a way to solve hard problems more
  efficiently, given that a specific parameter of the input is
  small. In this paper, we apply this idea to the field of answer set
  programming (ASP).  To this end, we propose two kinds of graph
  representations of programs to exploit their treewidth as a
  parameter. Treewidth roughly measures to which extent the internal
  structure of a program resembles a tree. Our main contribution is
  the design of parameterized dynamic programming algorithms, which
  run in linear time if the treewidth and weights of the given program
  are bounded. Compared to previous work, our algorithms handle the
  full syntax of ASP.  Finally, we report on an empirical evaluation
  that shows good runtime behaviour for benchmark instances of low
  treewidth, especially for counting answer sets.

\shortversion{%
  \keywords{parameterized algorithms, tree decompositions}
}
\end{abstract}

\section{Introduction}
Parameterized algorithms~\cite{Niedermeier06,CyganEtAl15} have
attracted considerable interest in recent years and allow to tackle
hard problems by directly exploiting a small parameter of the input
problem. One particular goal in this field is to find guarantees that
the runtime is exponential exclusively in the parameter, and polynomial
in the input size (so-called fixed-parameter tractable algorithms). A parameter
that has been researched extensively is
treewidth~\cite{RobertsonSeymour86,BodlaenderKoster08}.  Generally
speaking, treewidth measures the closeness of a graph to a tree, based
on the observation that problems on trees are often easier than on
arbitrary graphs. A parameterized algorithm exploiting small treewidth
takes a tree decomposition, which is an arrangement of a graph into a
tree, and evaluates the problem in parts, via dynamic programming (DP)
on the tree decomposition. 

ASP~\cite{BrewkaEiterTruszczynski11,Lifschitz08} is a logic-based
declarative modelling language and problem solving framework where
solutions, so called answer sets, of a given logic program directly
represent the solutions of the modelled
problem. Jakl~\etal~\cite{JaklPichlerWoltran09} give a DP algorithm
for disjunctive rules only, whose runtime is linear in the input size
of the program and double exponential in the treewidth of a particular
graph representation of the program structure.
However, modern ASP systems allow for an extended syntax that
includes, among others, weight rules and choice
rules. Pichler~\etal~\cite{PichlerEtAl14} investigated the complexity of
programs with weight rules. They also presented DP algorithms
for programs with cardinality rules (i.e., restricted version of
weight rules), but without disjunction.

In this paper, we propose DP algorithms for finding answer sets that
are able to directly treat all kinds of ASP rules. While such rules can be
transformed into disjunctive rules, we avoid the resulting polynomial
overhead with our algorithms.
In particular, we present two approaches based on two different types
of graphs representing the program structure. Firstly, we consider 
the primal graph, which allows for an intuitive algorithm that also treats
the extended ASP rules.
While for a given disjunctive program the treewidth of the primal graph may
be larger than treewidth of the graph representation used by
Jakl~\etal~\cite{JaklPichlerWoltran09}, our algorithm uses
simpler data structures and lays the foundations to understand how we can
handle also extended rules.
Our second graph representation is the incidence graph, a
generalization of the representation used by Jakl~\etal.  Algorithms
for this graph representation are more sophisticated, since weight and
choice rules can no longer be completely evaluated in the same
computation step.  Our algorithms yield upper bounds that are linear
in the program size, double-exponential in the treewidth, and
single-exponential in the maximum weights. We extend two algorithms to
count optimal answer sets. For this particular task, experiments show
that we are able to outperform existing systems from multiple domains,
given input instances of low treewidth, both randomly generated and
obtained from real-world graphs of traffic networks.  Our system is
publicly available~on~github\footnote{See
  \url{https://github.com/daajoe/dynasp}.}.






\section{Formal Background} %
\label{sec:preliminaries}%
\shortversion{\noindent \textit{Answer Set programming (ASP).} }%
\longversion{\subsection{Answer Set programming (ASP)}}%
\emph{ASP} is a declarative modeling and problem solving framework;
for a full introduction, see,~e.g.,
\cite{BrewkaEiterTruszczynski11,Lifschitz08}.
State-of-the-art ASP grounders support the full ASP-Core-2
language~\cite{aspcore2} and output smodels input
format~\cite{lparse}, which we will use for our algorithms.
Let $\ell$, $m$, $n$ be non-negative integers such that
$\ell \leq m \leq n$, $a_1$, $\ldots$, $a_n$ distinct propositional
atoms, $w$, $w_1$, $\ldots$, $w_n$ non-negative integers, and
$l \in \{a_1, \neg a_1\}$.
A \emph{choice rule} is an expression of the form, 
$\{a_1; \ldots; a_\ell \} \hsep a_{\ell+1}, \ldots, a_m, \neg a_{m+1},
\ldots, \neg a_n$,
%
%
%
a \emph{disjunctive rule} is of the form
%
$a_1\por \cdots \por a_\ell \hsep a_{\ell+1}, \ldots, a_{m}, \neg
a_{m+1}, \ldots, \neg a_n$ and
%
%
%
%
%
a \emph{weight rule} is of the form
%
$a_{\ell} \hsep w \leqslant \{ a_{\ell + 1} = w_{\ell + 1}, \ldots, a_m = w_m,
\, \neg a_{m+1} = w_{m+1}, \ldots, \neg a_n = w_n \}$.
%
%
%
Finally, an
\emph{optimization rule} is an expression of the form
%
%
$\optimize l[w]$. 
%
%
%
%
A \emph{rule} is either a disjunctive, a choice, a weight, or an
optimization rule.
\longversion{
}
%
For a choice, disjunctive, or weight rule~$r$, let
$H_r \eqdef \{a_1, \ldots, a_\ell\}$, 
%
$B^+_r \eqdef \{a_{\ell+1}, \ldots, a_{m}\}$,
%
and $B^-_r \eqdef \{a_{m+1}, \ldots, a_n\}$.
%
%
%
For a weight rule~$r$, let $\wght(r,a)$ map atom $a$ to its
corresponding weight~$w_i$ in rule~$r$ if $a=a_i$ for
$\ell+1 \leq i \leq n$ and to $0$ otherwise, let
$\wght(r,A) \eqdef \sum_{a \in A} \wght(r,a)$ for a set $A$ of atoms,
and let $\bnd(r)\eqdef w$ be its \emph{bound}.
For an optimization rule~$r$, let $\cst(r) \eqdef w$ and if $l=a_1$,
let $B^+_r\eqdef \{a_1\}$ and $B^-_r\eqdef \emptyset$; or if
$l=\neg a_1$, let $B^-_r\eqdef \{a_1\}$ and $B^+_r\eqdef \emptyset$.
For a rule $r$, let $\at(r) \eqdef H_r \cup B^+_r \cup B^-_r$ denote
its \emph{atoms} and
$B_r \eqdef B^+_r \cup \SB \neg b \SM b \in B^-_r \SE$ 
its \emph{body}.
A \emph{program}~$\prog$ is a set of rules. Let $\at(\prog) \eqdef \SB
\at(r) \SM r \in \prog \SE$
and let $\choice(\prog), \disj(\prog),
\opt(\prog)$ and $\weight(\prog)$ denote the set of all choice, disjunctive,
optimization and weight rules in~$\prog$, respectively.
\longversion{%
} 
A set $M \subseteq \at(\prog)$ \emph{satisfies} a rule~$r$ if
(i)~$(H_r \cup B^-_r) \cap M \neq \emptyset$ or $B^+_r \not\subseteq
M$ for~$r\in \disj(\prog)$,
(ii)~$H_r \cap M \neq \emptyset$ or $\Sigma_{a_i \in M \cap B^+_r} \;
w_i + \Sigma_{a_i \in B^-_r \setminus M} \; w_i< \bnd(r)$ for~$r \in
\weight(\prog)$, or 
(iii) $r\in\choice(\prog)\cup \opt(\prog)$.
%
$M$ is a model of~$\prog$, denoted by $M \models \prog$, if $M$
satisfies every rule~$r \in \prog$. 
Further, let
$\Mod(\CCC,\prog) \eqdef \SB C \SM C \in \CCC, C \models \prog \SE$
for $\CCC\subseteq 2^{\at(\prog)}$.
\longversion{%
}
The \emph{reduct}~$r^M$  
(i)~of a choice rule~$r$ is the set
$\SB a \leftarrow B^+_r \SM a \in H_r \cap M, B^-_r \cap M =
\emptyset\SE$ of rules,
(ii)~of a disjunctive rule~$r$ is the singleton
$\SB H_r \leftarrow B^+_r \SM B^-_r \cap M = \emptyset\SE$, and
(iii)~of a weight rule~$r$ is the singleton
$\SB H_r \leftarrow w' \leqslant [ a = \wght(r, a) \SM a \in B^+_r]\SE$
where
%
%
$w' = \bnd(r) - \allowbreak \Sigma_{a \in B^-_r \setminus M}\wght(r,
a)$.
$\prog^M := \SB r' \SM r' \in r^M, r \in \prog \SE$ is called \emph{GL
  reduct} of $\prog$ with respect to~$M$.
%
%
%
%
%
%
%
%
%
%
A set~$M \subseteq \at(\prog)$ is an \emph{answer set} of
program~$\prog$ if (i) 
 $M \models \prog$
and (ii) there is no
$M' \subsetneq M$ such that $M' \models \prog^M$, that is,
$M$ is \emph{subset minimal with respect to $\Pi^M$}.
\longversion{%
}%
We call
$\oo(\prog,M,A)\eqdef\allowbreak \Sigma_{r \in \opt(\prog),\ A \cap
  [(B^+_r \cap M) \cup (B^-_r \setminus M)] \neq \emptyset} \cst(r)$
the \emph{cost} of model $M$ for $\prog$ with respect to the set
$A \subseteq \at(\prog)$. An answer set~$M$ of $\prog$ is
\emph{optimal} if its cost is minimal over all answer sets.
%
%
%
%
%
%
\begin{figure}[t]%
\vspace{-1em}
\centering
\hspace{-2em}\begin{tikzpicture}[node distance=7mm,every node/.style={fill,circle,inner sep=2pt}]
\node (a) [label={[text height=1.5ex,yshift=0.0cm,xshift=0.05cm]left:$d$}] {};
\node (b) [right of=a,label={[text height=1.5ex]right:$a$}] {};
\node (c) [below left of=b,label={[text height=1.5ex,yshift=0.09cm,xshift=0.05cm]left:$c$}] {};
\node (d) [below right of=b,label={[text height=1.5ex,yshift=0.09cm,xshift=-0.05cm]right:$b$}] {};
\draw (a) to (b);
\draw (b) to (c);
\draw (b) to (d);
\draw (c) to (d);
\end{tikzpicture}\hspace{-0.5em}%
\begin{tikzpicture}[node distance=0.5mm]
\tikzset{every path/.style=thick}

\node (leaf1) [tdnode,label={[yshift=-0.25em,xshift=0.5em]above left:$t_1$}] {$\{a,b,c\}$};
\node (leaf2) [tdnode,label={[xshift=-1.0em, yshift=-0.15em]above right:$t_2$}, right = 0.1cm of leaf1]  {$\{a,d\}$};
\coordinate (middle) at ($ (leaf1.north east)!.5!(leaf2.north west) $);
\node (join) [tdnode,ultra thick,label={[]left:$t_3$}, above  = 1mm of middle] {$\{a\}$};

\coordinate (top) at ($ (join.north east)+(3.5em,0) $);
\coordinate (bot) at ($ (top)+(0,-4em) $);

\draw [dashed] (top) to (bot);
\draw [<-] (join) to (leaf1);
\draw [<-] (join) to (leaf2);
\end{tikzpicture}\hspace{-0.0em}%
\begin{tikzpicture}[node distance=7mm,every node/.style={fill,circle,inner sep=2pt}]
\node (a) [label={[text height=1.5ex,yshift=0.0cm,xshift=0.12cm]left:$d$}] {};
\node (b) [right = 0.5cm of a,label={[text height=1.5ex,xshift=0.12cm]left:$a$}] {};
\node (c) [right = 0.5cm of b,label={[text height=1.5ex,xshift=0.12cm]left:$b$}] {};
\node (d) [right = 0.5cm of c,label={[text height=1.5ex,xshift=-0.05cm]right:$c$}] {};
\node (r3) [below = 0.5cm of a,label={[text height=1.5ex]right:${r_3}$}] {};
\node (r1) [below = 0.5cm of c,label={[text height=1.5ex,xshift=0.12cm]left:${r_1}$}] {};
\node (r2) [below = 0.5cm of d,label={[text height=1.5ex,xshift=-0.05cm]right:${r_2}$}] {};
\draw (a) to (r3);
\draw (b) to (r3);
\draw (b) to (r1);
\draw (c) to (r1);
\draw (d) to (r1);
\draw (b) to (r2);
\draw (c) to (r2);
\draw (d) to (r2);
\end{tikzpicture}\hspace{-0.5em}%
\begin{tikzpicture}[node distance=0.5mm]
\tikzset{every path/.style=thick}

\node (leaf0) [tdnode,label={[]left:$t_1$}] {$\{b,c, {r_1}, {r_2}\}$};
\node (leaf1) [tdnode,label={[xshift=0em]left:$t_2$}, above = 0.1cm of leaf0] {$\{a,{r_1}, {r_2}\}$};
\node (leaf2) [tdnode,label={[xshift=0em]above right:$t_3$}, right = 0.1cm of leaf0]  {$\{a, d, {r_3}\}$};
\coordinate (middle) at ($ (leaf1.north east)!.5!(leaf2.north west) $);
\node (join) [tdnode,ultra thick,label={[xshift=-0.25em]right:$t_4$}, right = 0.15cm of leaf1] {$\{a\}$}; 

\draw [->] (leaf0) to (leaf1);
\draw [<-] (join) to (leaf1);
\draw [<-] (join) to (leaf2);
\end{tikzpicture}%
%
\caption{Graph~$G_1$ with a TD of $G_1$ \hspace{-0.05cm}(left) and graph~$G_2$ with a TD
  of $G_2$ \hspace{-0.05cm}(right).}%
\label{fig:graph-td}%
\end{figure}
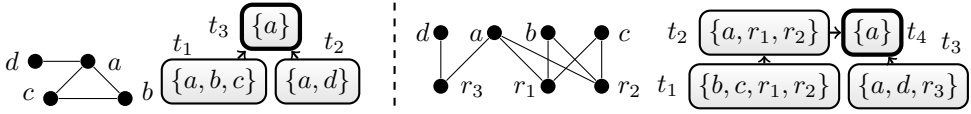
\longversion{%
}%
\begin{example}\label{ex:running1}
  Let
  $\prog \eqdef \SB \overbrace{\{ a; b\} \hsep c}^{r_1};\;
  \overbrace{c \hsep 1 \leqslant \{ b = 1, \neg a = 1 \}}^{r_2};\;
  \overbrace{d \lor a \hsep }^{r_3} \SE$. Then, the sets~$\{a\}$,
  $\{c,d\}$ and $\{ b,c,d \}$ are answer sets of~$\prog$.
\end{example}%
Given a program~$\prog$, we consider the problems of computing an answer
set (called \AspComp) and outputting the number of optimal answer
sets (called \AspCountO).

\longversion{%
  Next, we show that under standard complexity-theoretic assumptions
  \AspCount is strictly harder than \#SAT.

  \begin{theorem}
    \AspCount for programs without optimization is $\cntc\co\NP$\hy
    complete.
  \end{theorem}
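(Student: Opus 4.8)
The plan is to establish membership in $\cntc\co\NP$ and $\cntc\co\NP$\hy hardness under parsimonious reductions separately; recall that $\cntc\co\NP$ consists of the functions $x \mapsto \card{\{\, y : \card{y} \le p(\card{x}),\ (x,y) \in A \,\}}$ for some polynomial $p$ and some $A \in \co\NP$.

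For membership, given a program $\prog$ without optimization rules I take the candidate witnesses to be the sets $M \subseteq \at(\prog)$, which are polynomially bounded in $\card{\prog}$, and argue that ``$M$ is an answer set of $\prog$'' is a $\co\NP$ relation of $(\prog, M)$. Deciding $M \models \prog$ is polynomial (choice rules hold vacuously; for a weight rule one sums the relevant weights and compares with the bound; disjunctive rules are immediate), and the GL reduct $\prog^M$ --- which may also contain rules stemming from choice and weight rules --- is computable from $\prog$ and $M$ in polynomial time by its definition. The remaining requirement, subset\hy minimality of $M$ with respect to $\prog^M$, is a $\co\NP$ predicate, because its negation ``some $M' \subsetneq M$ satisfies $M' \models \prog^M$'' is witnessed by such an $M'$ and hence in $\NP$. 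Since the conjunction of a polynomial\hy time predicate and a $\co\NP$ predicate is in $\co\NP$, we get that $\AspCount$ is in $\cntc\co\NP$.

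For hardness I reduce parsimoniously from the natural counting version of $\SIGMA{2}{p}$\hy complete QBF evaluation: given $\Phi = \exists X\,\forall Y\,\psi$ with $\psi = \bigvee_{d} T_d$ in DNF, count the assignments $\alpha\colon X \to \{0,1\}$ for which $\psi(\alpha,\beta)$ holds for all $\beta\colon Y \to \{0,1\}$. That this problem is $\cntc\co\NP$\hy complete under parsimonious reductions follows from the usual argument: for $f \in \cntc\co\NP$ with witness language $A \in \co\NP$, Tseitin\hy encode the $\NP$\hy verifier whose accepting runs witness membership in the complement of $A$; its auxiliary variables are functionally determined, so they --- together with the verifier's nondeterministic bits --- can be universally quantified, which leaves the $A$\hy witness as the free variables and puts the matrix into DNF. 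Given such a $\Phi$, introduce fresh atoms $x, \bar x$ for $x \in X$, atoms $y, \bar y$ for $y \in Y$, and one atom $w$; for a literal over $X \cup Y$ let $\sigma$ denote the atom encoding it ($x \mapsto x$, $\neg x \mapsto \bar x$, and analogously for $Y$), and set $P_\alpha \eqdef \{ x : \alpha(x) = 1 \} \cup \{ \bar x : \alpha(x) = 0 \}$ and $Y^{\pm} \eqdef \{ y, \bar y : y \in Y \}$. Let $\prog_\Phi$ be the disjunctive (hence optimization\hy free) program with the facts $x \por \bar x \hsep$ and $y \por \bar y \hsep$, the saturation rules $y \hsep w$ and $\bar y \hsep w$, a rule $w \hsep \sigma(\ell_1), \dots, \sigma(\ell_n)$ for each term $T_d = \ell_1 \land \dots \land \ell_n$ of $\psi$, and the constraint $\hsep \neg w$.

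I then claim that $\alpha \mapsto M_\alpha \eqdef P_\alpha \cup Y^{\pm} \cup \{w\}$ is a bijection from the assignments counted by $\Phi$ onto the answer sets of $\prog_\Phi$, which makes the reduction parsimonious and completes the proof. Every model of $\prog_\Phi$ contains $w$ (because of $\hsep \neg w$), so for an answer set $M$ the reduct $\prog_\Phi^M$ is exactly the NAF\hy free part $N$ of $\prog_\Phi$; hence the answer sets are precisely the $\subseteq$\hy minimal models of $N$ that contain $w$, and by the saturation rules and the facts $x \por \bar x \hsep$ each of them has the form $M_\alpha$. The crux --- and the step I expect to be the main obstacle --- is to show that $M_\alpha$ is $\subseteq$\hy minimal in $N$ if and only if $\forall\beta\,\psi(\alpha,\beta)$: if some $\beta^{*}$ falsifies $\psi(\alpha,\cdot)$ then $P_\alpha \cup \{ y : \beta^{*}(y)=1 \} \cup \{ \bar y : \beta^{*}(y) = 0 \}$ is a smaller model of $N$ (no $w$\hy rule body is satisfied); conversely, any model $M' \subsetneq M_\alpha$ of $N$ must, by the disjunctive facts $y \por \bar y \hsep$, contain a full $Y$\hy assignment $\beta$, then some term $T_d$ of $\psi$ is satisfied by $(\alpha,\beta)$, so $\sigma(\ell_1),\dots,\sigma(\ell_n) \in M'$ for that term while $w \notin M'$ (otherwise the saturation rules give $M' = M_\alpha$), contradicting $M' \models (w \hsep \sigma(\ell_1), \dots, \sigma(\ell_n))$. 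The orientation here is essential: deriving $w$ from a falsified clause of a CNF $\psi$ would fail because $P_\alpha$ alone would already model the reduct, whereas deriving $w$ from a satisfied DNF term plus the constraint $\hsep \neg w$ (which discards every non\hy saturated guess) makes the count exact. Finally, since $\numSAT$ is $\cntc\P$\hy complete and $\cntc\P \subsetneq \cntc\co\NP$ unless the polynomial hierarchy collapses, $\cntc\co\NP$\hy completeness of $\AspCount$ also witnesses that counting answer sets is strictly harder than $\numSAT$ under the standard assumption.
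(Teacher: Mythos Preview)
Your proof is correct. The membership argument is essentially the paper's argument unpacked: both rest on the fact that verifying ``$M$ is an answer set of $\prog$'' is in $\co\NP$; the paper first normalizes choice and weight rules away and then cites the known $\co\NP$\hy completeness of stable\hy model checking for disjunctive programs, whereas you argue the $\co\NP$ bound directly for the full rule syntax.

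Your hardness argument, however, takes a genuinely different route. The paper simply invokes the $\cntc\co\NP$\hy hardness of counting subset\hy minimal models of a CNF formula (Durand, Hermann, Kolaitis) and observes that subset\hy minimal models of a CNF and answer sets of a negation\hy free disjunctive program are essentially the same objects, so the reduction is almost the identity. You instead give a self\hy contained parsimonious reduction from counting valid $X$\hy assignments of $\exists X\,\forall Y\,\psi$ with DNF matrix, via the classical Eiter--Gottlob saturation encoding, and you verify the bijection between good $X$\hy assignments and answer sets explicitly. What this buys you is independence from the external minimal\hy model\hy counting result (at the cost of a longer argument and of having to justify, at least in outline, that the $\exists\forall$\hy DNF counting problem is itself $\cntc\co\NP$\hy complete); what the paper's approach buys is brevity and a direct link to an established reference. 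Both are standard and valid; they simply trade external citations for internal detail.
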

  \begin{proof}%
    Observe that programs containing choice and weight rules can be
    compiled to disjunctive ones (normalization) without these rule
    types (see~\cite{BomansonGebserJanhunen16}) using a polynomial
    number (in the original program size) of rules.  Membership
    follows from the fact that, given such a nice program $\prog$ and
    an interpretation $I$, checking whether $I$ is an answer of
    $\prog$ is $\co\NP$\hy complete, see
    e.g.,~\cite{KochL99}. Hardness is a direct consequence of
    $\cntc\co\NP$\hy hardness for the problem of counting subset
    minimal models of a CNF formula~\cite{DurandHK05}, since answer
    sets of negation-free programs and subset-minimal models of CNF
    formulas are essentially the same objects.
  \end{proof}

  \begin{remark}
    The counting complexity of \AspCount including optimization rules
    (i.e., where only optimal answer sets are counted) is slightly
    higher; exact results can be established employing hardness
    results from other sources~\cite{HermannP09}.
  \end{remark}
}%

\shortversion{\smallskip \noindent \textit{Tree Decompositions.}}%
\longversion{\subsection{Tree Decompositions}}%
\longversion{%

} %
Let $G = (V,E)$ be a graph, $T = (N,F,n)$ a rooted tree, and
$\chi: N \to 2^V$ a function that maps each node~$t \in N$ to a set of
vertices. We call the sets $\chi(\cdot)$ \emph{bags} and $N$ the set
of nodes. Then, the pair~${\mathcal{T}} = (T,\chi)$ is a \emph{tree
  decomposition (TD)} of~$G$ if the following conditions hold:
\begin{inparaenum}[(i)]
\item all vertices occur in some bag, that is, for every vertex~$v \in V$ there
  is a node~$t \in N$ with $v \in \chi(t)$;
\item all edges occur in some bag, that is, for every edge~$e \in E$ there is a
  node~$t \in N$ with $e \subseteq \chi(t)$; and
\item the \emph{connectedness condition}: for any three
  nodes~$t_1,t_2,t_3\in N$, if $t_2$ lies on the unique path
  from~$t_1$ to~$t_3$, then
  $\chi(t_1)\cap \chi(t_3) \subseteq \chi(t_2)$.
\end{inparaenum}
We call $\max\SB \Card{\chi(t)} - 1 \SM t \in N\SE$ the \emph{width}
of the TD. The \emph{treewidth}~$\tw{G}$ of a graph~$G$ is the
minimum width over all possible TDs of~$G$.
\longversion{%
}%
\longversion{Note that each graph has a trivial TD~$(T,\chi)$
  consisting of the tree~$(\{n\}, \emptyset, n)$ and the mapping
  $\chi(n) = V$.  It is well known that the treewidth of a tree
  is~$1$, and a graph containing a clique of size $k$ has at least
  treewidth $k-1$.}
For some arbitrary but fixed integer~$k$ and a graph of treewidth at
most~$k$, we can compute a TD of width $\leqslant k$ in
time~$2^{\bigO{k^3}} \cdot \Card{V}$~\cite{BodlaenderKoster08}.
Given a TD $(T,\chi)$ with $T = (N,\cdot,\cdot)$, for a node~$t \in N$
we say that $\type(t)$ is $\leaf$ if $t$ has no children; $\join$ if
$t$ has children~$t'$ and $t''$ with $t'\neq t''$ and
$\chi(t) = \chi(t') = \chi(t'')$; $\intr$ (``introduce'') if $t$ has a
single child~$t'$, $\chi(t') \subseteq \chi(t)$ and
$|\chi(t)| = |\chi(t')| + 1$; $\rem$ (``removal'') if $t$ has a single
child~$t'$, $\chi(t) \subseteq \chi(t')$ and
$|\chi(t')| = |\chi(t)| + 1$. If every node $t\in N$ has at most two
children, $\type(t) \in \{ \leaf, \join, \intr, \rem\}$, and bags of
leaf nodes and the root are empty, then the TD is called \emph{nice}.
For every TD, we can compute a nice TD in linear time without
increasing the width~\cite{BodlaenderKoster08}.
In our algorithms, we will traverse a TD bottom up, therefore, let
$\post(T,t)$ be the sequence of nodes in post-order of the induced
subtree~$T'=(N',\cdot, t)$ of $T$ rooted at~$t$. 

\begin{example}
  Figure~\ref{fig:graph-td} (left) shows a graph~$G_1$ together with a
  TD of~$G_1$ that is of width~$2$. Note that $G_1$ has treewidth~$2$,
  since it contains a clique on the vertices~$\{a,b,c\}$.  Further,
  the TD $\TTT$ in Figure~\ref{fig:running1_prim} is a nice TD of
  $G_1$.
\end{example}%
\shortversion{\smallskip \noindent \textit{Graph Representations of Programs.} }%
\longversion{\subsection{Graph Representations of Programs}}%
In order to use TDs for ASP solving, we need dedicated graph
representations of ASP programs.
The \emph{primal graph}~$P(\prog)$ of program~$\prog$ has the atoms of~$\prog$
as vertices and an edge~$a\,b$ if there exists a rule~$r \in \prog$ and $a,b
\in \at(r)$.
The \emph{incidence graph}~$I(\prog)$ of $\prog$ is the bipartite
graph that has the atoms and rules of~$\prog$ as vertices and an
edge~$a\, r$ if $a \in \at(r)$ for some rule~$r \in \prog$.
These definitions adapt similar concepts from
SAT~\cite{SamerSzeider10b}.

\begin{algorithm}[t]
  \KwData{Table algorithm ${\cal A}$, nice TD~$\TTT=(T,\chi)$ with
    $T=(N,\cdot,n)$ of $G(\prog)$ according to ${\cal A}$.}%
  \KwResult{Table: maps each TD node~$t\in T$ to some computed
    table~$\tau_t$. } %
  \For{\text{\normalfont iterate} $t$ in \text{\normalfont post-order}(T,n)}{\vspace{-0.05em}%
    $\Tab{} \eqdef \SB \Tabs{$t'$} \SM t' \text{ is a child of $t$ in
      $T$}\SE$\;\vspace{-0.05em} %
    $\Tabs{t} \eqdef {\cal A}(t,\chi(t),\prog_t,\atto,\Tab{})$\; %
    \vspace{-0.5em} }\vspace{-0.1em}%
  \caption{Algorithm ${\dpa}_{\cal A}({\cal T})$ for Dynamic Programming on TD ${\cal T}$ for ASP.}
\label{fig:dpontd}
\end{algorithm}
%

\begin{example}
  Recall program~$\prog$ of Example~\ref{ex:running1}. We observe that
  graph~$G_1$ ($G_2$) in the left (right) part of
  Figure~\ref{fig:graph-td} is the primal (incidence) graph
  of~$\prog$.
\end{example}

\shortversion{\smallskip \noindent \textit{Sub-Programs.} }%
\longversion{\subsection{Sub-Programs}}%
Let ${\cal T} = (T, \chi)$ be a nice TD of graph
representation~$H \in \{I(\prog), P(\prog)\}$ of a program
$\prog$. Further, let $T = (N,\cdot,n)$ and $t \in N$.
The \emph{bag-rules} are defined as $\prog_t \eqdef \SB r \SM r \in \prog,
\at(r)\subseteq {\chi(t)} \SE$ if $H$ is the primal graph and as $\prog_t \eqdef
\prog \cap \chi(t)$ if $H$ is the incidence graph.
Further, the set~$\atto \eqdef \SB a \SM a \in \at(\prog) \cap \chi(t'), t' \in
\post(T,t) \SE$ is called \emph{atoms below~$t$}, the \emph{program below $t$}
is defined as $\progt{t} \eqdef \SB r \SM r \in \prog_{t'}, t' \in \post(T,t) \SE$,
and the \emph{program strictly below $t$} is $\progtneq{t}\eqdef
\progt{t}\setminus \prog_t$. It holds that $\progt{n} = \progtneq{n} = \prog$ and
$\att{n}=\at(\prog)$. 
\begin{example}
  Intuitively, TDs of Figure~\ref{fig:graph-td} enable us to evaluate
  $\prog$ by analyzing sub-programs ($\{r_1, r_2\}$ and $\{r_3\}$) and
  combining results agreeing on $a$.  Indeed, for the given TD of
  Figure~\ref{fig:graph-td}~(left), $\progt{t_1}=\{r_1, r_2\}$,
  $\progt{t_2}=\{r_3\}$ and
  $\prog=\progt{t_3}=\progtneq{t_3}=\prog_{t_1} \cup \prog_{t_2}$.
  For the TD of Figure~\ref{fig:graph-td}~(right), we have
  $\progt{t_1} = \{r_1,r_2\}$ and $\att{t_1} = \{b,c\}$, as well as
  $\progt{t_3} = \{r_3\}$ and $\att{t_3} = \{a,d\}$.  Moreover, for TD
  ${\cal T}$ of Figure~\ref{fig:running1_prim},
  $\progt{t_1}\hspace{-0.1em}=\hspace{-0.1em}\progt{t_2}\hspace{-0.1em}=\hspace{-0.1em}\progt{t_3}\hspace{-0.1em}=\hspace{-0.1em}\progtneq{t_4}\hspace{-0.1em}=\hspace{-0.1em}\emptyset$,
  $\att{t_3}=\{a,b\}$ and $\progt{t_4}=\{r_1,r_2\}$.
\end{example}%
\section{ASP via Dynamic Programming on TDs}
\label{sec:algo:dp}
In the next two sections, we propose two dynamic programming (DP)
algorithms, $\dpa_{\PRIM}$ and $\dpa_{\INC}$, for ASP without
optimization rules based on two different graph representations, namely the primal
and the incidence graph. Both algorithms make use of the fact that
answer sets of a given program $\prog$ are (i)~models of $\prog$ and
(ii)~subset minimal with respect to~$\prog^M$. Intuitively, our
algorithms compute, for each TD node~$t$, (i)~sets of atoms---(local)
\emph{witnesses}---representing parts of potential models of~$\prog$,
and (ii)~for each local witness~$M$ subsets of~$M$---(local)
\emph{counterwitnesses}---representing subsets of potential models
of~$\prog^M$ which (locally) contradict that $M$ can be extended to an
answer set of~$\prog$.
We give the the basis of our algorithms in Algorithm~\ref{fig:dpontd}
($\dpa_{\cal A}$), which sketches the general DP scheme for ASP solving on TDs.
Roughly, the algorithm splits the search space based on a given nice TD and
evaluates the input program~$\prog$ in parts. The results are stored in
so-called tables, that is, sets of all possible tuples of witnesses and
counterwitnesses for a given TD node.
To this end, we define the \emph{table algorithms}~$\PRIM$ and $\INC$, which
compute tables for a node~$t$ of the TD using the primal graph~$P(\prog)$ and
incidence graph~$I(\prog)$, respectively.
To be more concrete, given a table algorithm~${\cal A} \in \{\PRIM, \INC\}$,
algorithm~$\dpa_{\cal A}$ visits every node~$t \in T$ in post-order; then, based
on~$\prog_t$, computes a table $\tab{t}$ for node $t$ from the tables of the
children of $t$, and stores $\tab{t}$ in~$\Tabs{t}$.
\footnoteitext{\label{foot:abrev}
  ${\cal S} \sqcup \{e\} \eqdef \SB S \cup \{e\} \SM S \in {\cal S}
  \SE$, $\MAIRR{S}{e} \eqdef S \cup \{e\}$, and
  $\MARRR{S}{e} \eqdef S \setminus \{e\}$}%
\subsection{Using Decompositions of Primal Graphs}\label{sec:prim}
%
%
 \begin{algorithm}[t]
   \KwData{Bag $\chi_t$, bag-rules $\prog_t$ and child tables $\Tab{}$ of node $t$.{~\bf Out:} Table~$\tab{t}$.} 
   %
   \lIf(\tcc*[f]{Abbreviations see
     Footnote~\ref{foot:abrev}.}){$\type(t) = \leaf$}{%
     $\tab{t} \eqdef \Big\{ \Big\langle
     \tuplecolor{\inputPredColor}{\emptyset},~\tuplecolor{\outputPredColor}{\emptyset}
     \Big\rangle \Big\}$%
     %
   }%
  \uElseIf{$\type(t) = \intr$, $a\in \chi_t$ is introduced and $\tau'\in \Tab{}$}{
   \vspace{-0.25em}
   \makebox[7.39cm][l]{$\tab{t} \eqdef \Big\{ \Big\langle \tuplecolor{\inputPredColor}{\MAI{M}},~\tuplecolor{\outputPredColor}{\Mod(\{M\} \cup [\CCC \sqcup \{ a \}] \cup \CCC, \prog_t^{\MAI{M}
     })} \Big\rangle$}
     $\Bigm|\;\langle \tuplecolor{\inputPredColor}{M}, \tuplecolor{\outputPredColor}{\CCC} \rangle \in \tab{}',  \tuplecolor{\inputPredColor}{\MAI{M}}
     \models {\prog}_t \Big\} \;\mcup$
   
     %
     \makebox[7.39cm][l]{\qquad$\,\,\,\hspace{0.07em}\Big\{ \Big\langle \tuplecolor{\inputPredColor}{M},~ \tuplecolor{\outputPredColor}{\Mod(\CCC,\prog_t^M)}\Big\rangle$}%
   $\Bigm|\;\langle \tuplecolor{\inputPredColor}{M}, \tuplecolor{\outputPredColor}{\CCC}
   \rangle \in \tab{}', \tuplecolor{\inputPredColor}{M} \models {\prog}_t \Big\}$
   \vspace{-0.05em}
       %
     }\vspace{-0.05em}%
     \uElseIf{$\type(t) = \rem$, $a \not\in \chi_t$ is removed and $\tau'\in \Tab{}$}{
       \makebox[7.39cm][l]{$\tab{t} \eqdef \Big\{ \Big\langle \tuplecolor{\inputPredColor}{\MAR{M}},~\tuplecolor{\outputPredColor}{\{ \MAR{C}
       \mid C \in \CCC \}}\Big\rangle$}$\Bigm|\;\langle \tuplecolor{\inputPredColor}{M}, \tuplecolor{\outputPredColor}{\CCC}
       \rangle \in \tab{}' \Big\}$
   \vspace{-0.3em}
     } %
     \uElseIf{$\type(t) = \join$ and $\tau', \tau'' \in \Tab{}$ with $\tau' \neq \tau''$}{%
       \makebox[7.39cm][l]{$\tab{t} \eqdef \Big\{ \Big\langle \tuplecolor{\inputPredColor}{M},~ \tuplecolor{\outputPredColor}{(\CCC' \cap \CCC'') \cup (\CCC' \cap \{M\}) \cup (\{M\}
       \cap \CCC'')}\Big\rangle$}$\Bigm|\;\langle \tuplecolor{\inputPredColor}{M}, \tuplecolor{\outputPredColor}{\CCC'}
       \rangle \in \tab{}', \langle \tuplecolor{\inputPredColor}{M}, \tuplecolor{\outputPredColor}{\CCC''} \rangle \in \tab{}''
       \Big\}$
   \vspace{-0.15em}
     } 
     \vspace{-0.15em}
 \caption{Table algorithm~$\algo{PRIM}(t,\chi_t,\prog_t,\cdot,\Tab{})$.}
   \label{fig:prim}
 \end{algorithm}%
In this section, we present our algorithm~\PRIM in two parts:
(i)~finding models of~$\prog$ and (ii)~finding models which are subset
minimal with respect to $\prog^M$. For sake of clarity, we first
present only the first tuple positions (red parts) of
Algorithm~\ref{fig:prim} (\PRIM) to solve 
(i). We call the
resulting table
algorithm~${\PRIMSAT}$.
\begin{example}\label{ex:sat}
  Consider program~$\prog$ from Example~\ref{ex:running1} and in
  Figure~\ref{fig:running1_prim} (left) TD~$\TTT=(\cdot, \chi)$
  of~$P(\prog)$ and the tables~$\tab{1}$, $\ldots$, $\tab{12}$, which
  illustrate computation results obtained during post-order traversal
  of ${\cal T}$ by $\dpa_{\PRIMSAT}$.
  Table~$\tab{1}=\SB \langle\emptyset\rangle \SE$ as
  $\type(t_1) = \leaf$.
  Since $\type(t_2) = \intr$, we construct table~$\tab{2}$
  from~$\tab{1}$ by taking~$M_{1.i}$ and $M_{1.i}\cup \{a\}$ for
  each~$M_{1.i} \in \tab{1}$ (corresponding to a guess on~$a$).  Then,
  $t_3$ introduces $b$ and $t_4$ introduces $c$.
  $\prog_{t_1}=\prog_{t_2}=\prog_{t_3} = \emptyset$, but since
  $\chi(t_4) \subseteq \at(r_1) \cup \at(r_2)$ we have
  $\prog_{t_4} = \{r_1, r_2\}$ for $t_4$.
  In consequence, for each~$M_{4.i}$ of table~$\tab{4}$, we have
  $M_{4.i} \models \{r_1, r_2\}$ since \PRIMSAT enforces
  satisfiability of $\prog_t$ in node~$t$.  We derive tables~$\tab{7}$
  to $\tab{9}$ similarly.  Since $\type(t_5) = \rem$, we remove
  atom~$b$ from all elements in $\tab{4}$ to construct $\tab{5}$. Note
  that we have already seen all rules where $b$ occurs and hence $b$
  can no longer affect witnesses during the remaining traversal. We similarly
  construct
  $\tab{t_6}=\tab{{10}}=\{\langle \emptyset \rangle, \langle a
  \rangle\}$.
  Since $\type(t_{11})=\join$, we construct table~$\tab{11}$ by taking
  the intersection $\tab{6} \cap \tab{{10}}$. Intuitively, this
  combines witnesses agreeing on~$a$.
  Node~$t_{12}$ is again of type~$\rem$.
  By definition (primal graph and TDs) for every~$r \in \prog$,
  atoms~$at(r)$ occur together in at least one common bag.
  Hence, $\prog=\progt{t_{12}}$ and since
  $\tab{12} = \{\langle \emptyset \rangle \}$, we can construct a
  model of~$\prog$ from the tables. For example, we obtain the
  model~$\{a,d\} = M_{11.2} \cup M_{4.2} \cup M_{9.3}$.
\end{example}%
\longversion{
  \begin{observation}\label{foot:clique_prim}
    Let $\prog$ be a program and $\cal T$ a TD of the primal graph of
    $\prog$. Then, for every rule~$r \in \prog$ there is at least one
    bag in $\cal T$ containing all atoms of~$r$.
  \end{observation}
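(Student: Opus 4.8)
The plan is to reduce the statement to the classical fact that every clique of a graph occurs inside a single bag of any tree decomposition. First I would observe that, by the definition of the primal graph, the set $\at(r)$ of atoms of a rule $r \in \prog$ induces a clique in $P(\prog)$: for any two distinct $a,b \in \at(r)$ we have $a,b \in \at(r)$, so $a\,b$ is an edge of $P(\prog)$. Hence it suffices to prove the following: for every graph $G = (V,E)$, every TD $\TTT = (T,\chi)$ of $G$ with $T = (N,\cdot,\cdot)$, and every clique $K \subseteq V$, there is a node $t \in N$ with $K \subseteq \chi(t)$. Instantiating this with $K = \at(r)$ then yields the observation. (If $\Card{\at(r)} \leq 1$ the claim is immediate from TD property~(i), so one may assume $\Card{\at(r)} \geq 2$.) This is precisely the formal justification of the informal remark used in Example~\ref{ex:sat} that the atoms of each rule occur together in some common bag.

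For the clique fact I would argue via the Helly property of subtrees of a tree. For each $v \in K$ set $N_v \eqdef \SB t \in N \SM v \in \chi(t) \SE$. By TD property~(i) every $N_v$ is nonempty, and by the connectedness condition~(iii) the nodes in $N_v$ induce a connected subtree $T_v$ of $T$. For any two $u,v \in K$ the edge $u\,v \in E$ forces, by TD property~(ii), some node $t$ with $\SB u,v \SE \subseteq \chi(t)$, i.e.\ $N_u \cap N_v \neq \emptyset$. Thus $\SB T_v \SM v \in K \SE$ is a family of pairwise intersecting subtrees of the tree $T$, and by the Helly property for subtrees of a tree their intersection is nonempty; any $t \in \bigcap_{v \in K} N_v$ then satisfies $K \subseteq \chi(t)$, as required. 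If one prefers a fully self-contained derivation, the Helly step can instead be proved directly by induction on $\Card{K}$ using only the connectedness condition~(iii) (intersections of subtrees of a tree are again connected, and pairwise intersecting connected subsets of a tree have a common vertex), but this is standard.

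Honestly there is no serious obstacle here: the statement is essentially the well-known fact — already quoted in the excerpt in the form ``a graph containing a clique of size $k$ has at least treewidth $k-1$'' — that cliques live in bags. The only ingredient that is not a one-line invocation of a TD axiom is the Helly property of subtrees of a tree, so that is the step I would single out as the ``main'' point; it is classical and its proof is routine, so I would either cite it or include the short inductive argument sketched above.
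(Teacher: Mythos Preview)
Your proposal is correct and follows essentially the same approach as the paper: observe that $\at(r)$ forms a clique in $P(\prog)$, then invoke the standard fact that cliques are contained in a single bag. The paper's proof is terser---it simply states that the TD must contain each edge in some bag and be connected, and asserts the conclusion---whereas you spell out the Helly-property argument in full; but the underlying idea is identical.
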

  \begin{proof}
    By Definition the primal graph contains a clique on all atoms $a$
    participating in a rule $r$.  Since a TD must contain each edge of
    the original graph in some bag and has to be connected, it follows
    that there is at least one bag containing all (clique) atoms $a$
    of $r$.
  \end{proof}
}%
\begin{figure}[t]
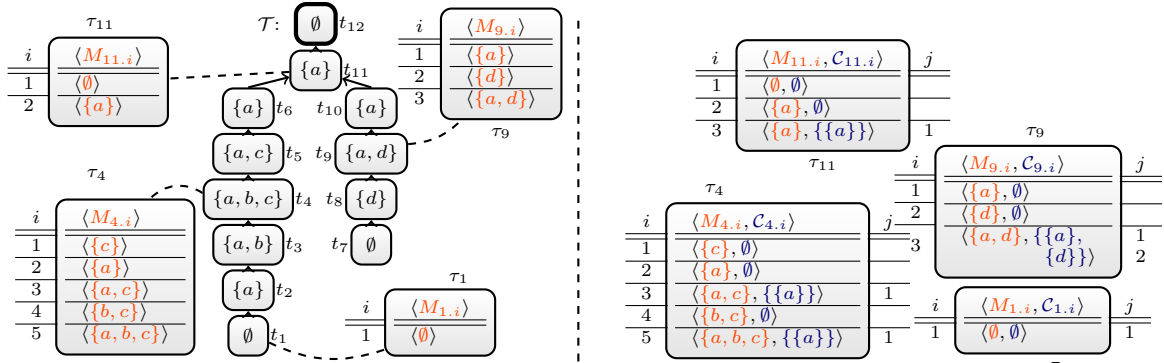
%
\centering %
\begin{minipage}{.50\linewidth}
  \begin{tikzpicture}[node distance=0.5mm]
\tikzset{every path/.style=thick}

\node (l1) [stdnode,label={[tdlabel, xshift=0em,yshift=+0em]right:${t_1}$}]{$\emptyset$};
\node (i1) [stdnode, above=of l1, label={[tdlabel, xshift=0em,yshift=+0em]right:${t_2}$}]{$\{a\}$};
\node (i12) [stdnode, above=of i1, label={[tdlabel, xshift=0em,yshift=+0em]right:${t_3}$}]{$\{a,b\}$};
\node (i13) [stdnode, above=of i12, label={[tdlabel, xshift=0em,yshift=+0em]right:${t_4}$}]{$\{a,b,c\}$};
\node (r1) [stdnode, above=of i13, label={[tdlabel, xshift=0em,yshift=+0em]right:${t_5}$}]{$\{a,c\}$};
\node (r12) [stdnode, above=of r1, label={[tdlabel, xshift=0em,yshift=+0em]right:${t_6}$}]{$\{a\}$};
\node (l2) [stdnode, right=2.5em of i12, label={[tdlabel, xshift=0em,yshift=+0em]left:${t_7}$}]{$\emptyset$};
\node (i2) [stdnode, above=of l2, label={[tdlabel, xshift=0em,yshift=+0em]left:${t_8}$}]{$\{d\}$};
\node (i22) [stdnode, above=of i2, label={[tdlabel, xshift=0em,yshift=+0em]left:${t_9}$}]{$\{a,d\}$};
\node (r2) [stdnode, above=of i22, label={[tdlabel, xshift=0em,yshift=+0em]left:${t_{10}}$}]{$\{a\}$};
\node (j) [stdnode, above left=of r2, yshift=-0.25em, label={[tdlabel, xshift=0em,yshift=+0em]right:${t_{11}}$}]{$\{a\}$};
\node (rt) [stdnode,ultra thick, above=of j, label={[tdlabel, xshift=0em,yshift=+0em]right:${t_{12}}$}]{$\emptyset$};
\node (label) [font=\scriptsize,left=of rt]{${\cal T}$:};
\node (leaf1) [stdnode, left=1em of i1, yshift=0.5em, label={[tdlabel, xshift=2em,yshift=+1em]above left:$\tab{4}$}]{\input{graph0_norm/tables/leaf0}};
\node (leaf1b) [stdnodenum,left=of leaf1,xshift=0.6em,yshift=0pt]{\input{graph0_norm/tables/leaf0num}};
\node (leaf2b) [stdnodenum,right=2.5em of j,xshift=-0.75em,yshift=+0.25em]  {\input{graph0_norm/tables/leaf1num}};
\node (leaf2) [stdnode,right=-0.4em of leaf2b, label={[tdlabel, xshift=0em,yshift=-0.25em]below:$\tab{9}$}]  {\input{graph0_norm/tables/leaf1}};
\coordinate (middle) at ($ (leaf1.north east)!.5!(leaf2.north west) $);
\node (join) [stdnode,left=2em of r12, yshift=1em, label={[tdlabel, xshift=2em,yshift=+0.25em]above left:$\tab{{11}}$}] {\input{graph0_norm/tables/join}};
\node (joinb) [stdnodenum,left=-0.45em of join] {\input{graph0_norm/tables/joinnum}};
\node (leaf0n) [stdnodenum,yshift=0.5em, right=2.5em of l1] {\input{graph0_norm/tables/leaf00num}};
\node (leaf0) [stdnode,right=-0.5em of leaf0n, label={[tdlabel, xshift=-1em,yshift=0.15em]above right:$\tab{1}$}] {\input{graph0_norm/tables/leaf00}};
\coordinate (top) at ($ (leaf2.north east)+(0.6em,-0.5em) $);
\coordinate (bot) at ($ (top)+(0,-12.9em) $);

\draw [->] (j) to (rt);
\draw [<-] (j) to ($ (r12.north)$);
\draw [<-] (j) to ($ (r2.north)$);
\draw [<-](r2) to (i22);
\draw [->](i2) to (i22);
\draw [->](l2) to (i2);
\draw [->](l1) to (i1);
\draw [<-](i12) to (i1);
\draw [<-](i13) to (i12);
\draw [<-](r1) to (i13);
\draw [<-](r12) to (r1);

\draw [dashed] (j) to (join);
\draw [dashed, bend right=20] (i22) to (leaf2);
\draw [dashed, bend right=40] (i13) to (leaf1);
\draw [dashed, bend left=22] (leaf0) to (l1);
\draw [dashed] (top) to (bot);
\end{tikzpicture}
%
\end{minipage}
\begin{minipage}{.48\linewidth}
\vspace{0.5em}
\begin{tikzpicture}[node distance=1mm]
\tikzset{every path/.style=thick}

\coordinate (start);
\node (leaf1) [stdnode,label={[tdlabel,  xshift=-2em,yshift=0.25em]above:$\tab{4}$}] {\input{graph0_norm/tables/leaf0_asp}};
\node (leaf1b) [stdnodenum,left=of leaf1,xshift=0.75em] {\input{graph0_norm/tables/leaf0_aspnum}};
\node (leaf1c) [stdnodenum,right=of leaf1,xshift=-0.75em] {\input{graph0_norm/tables/leaf0_aspnumb}};

\node (leaf0) [stdnode,right=1.3em of leaf1c, yshift=-1.5em, xshift=0em, label={[tdlabel,  yshift=-0.25em,xshift=1.0em,]below:$\tab{{1}}$}] {\input{graph0_norm/tables/leaf00_asp}};
\node (leaf0b) [stdnodenum,left=of leaf0,xshift=0.75em] {\input{graph0_norm/tables/leaf00_aspnum}};
\node (leaf0c) [stdnodenum,right=of leaf0,xshift=-0.75em] {\input{graph0_norm/tables/leaf00_aspnumb}};

\node (leaf2b) [stdnodenum,above=of leaf0b,xshift=-0.75em,yshift=0.85em]  {\input{graph0_norm/tables/leaf1_aspnum}};
\node (leaf2) [stdnode,right=of leaf2b, xshift=-0.75em, yshift=-0.45em, label={[tdlabel,  xshift=+0.1cm,yshift=0.25em]above:$\tab{9}$}]  {\input{graph0_norm/tables/leaf1_asp}};
\node (leaf2c) [stdnodenum,right=of leaf2, xshift=-0.75em]  {\input{graph0_norm/tables/leaf1_aspnumb}};

\coordinate (middle) at ($ (leaf1.north east)!.5!(leaf2.north west) $);
\node (join) [stdnode,above=2mm of middle,yshift=.25em, xshift=-3em, label={[tdlabel,  yshift=-0.25em]below:$\tab{{11}}$}] {\input{graph0_norm/tables/join_asp}};
\node (joinb) [stdnodenum,left=of join,xshift=0.75em] {\input{graph0_norm/tables/join_aspnum}};
\node (joinc) [stdnodenum,right=of join,xshift=-0.75em] {\input{graph0_norm/tables/join_aspnumb}};
%
%
%
\end{tikzpicture}
%
\vspace{-1.2em}
\end{minipage}
\caption{Selected DP tables of~\PRIMSAT~(left) and~\PRIM~(right) for
  nice TD~$\TTT$.}
\label{fig:running1_prim}
\label{fig:running1_prim_asp}
\end{figure}%
\PRIM is given in Algorithm~\ref{fig:prim}.  Tuples in~$\tab{t}$ are
of the form~$\langle M, \CCC \rangle$.  Witness~$M \subseteq \chi(t)$
represents a model of~$\prog_t$ witnessing the existence of
$M' \supseteq M$ with $M' \models \progt{t}$.  The
family~$\CCC \subseteq 2^M$ contains sets of models~$C \subseteq M$ of
the GL reduct~$(\prog_t)^M$. $C$ witnesses the existence of a set~$C'$
with counterwitness~$C \subseteq C'\subsetneq M'$ and
$C' \models (\progt{t})^{M'}$.
There is an answer set of~$\prog$ if table~$t_n$ for root~$n$ contains
$\langle \emptyset, \emptyset \rangle$.
Since in Example~\ref{ex:sat} we already explained the first tuple position 
and thus the witness part, we only briefly describe the parts
for counterwitnesses.
In the introduce case, we want to store only counterwitnesses for not
being minimal with respect to the GL reduct of the bag-rules.
Therefore, in Line~3 we construct for $\MAI{M}$ counterwitnesses from
either some witness $M$ ($M \subsetneq \MAI{M}$), or of any
$C \in \CCC$, or of any $C \in \CCC$ extended by~$a$
(every~$C \in\CCC$ was already a counterwitness before).
Line~4 ensures that only counterwitnesses that are models of the GL
reduct $\prog_t^M$ are stored (via $\Mod(\cdot, \cdot)$).
Line~6 restricts counterwitnesses to its bag content, and Line~8
enforces that child tuples agree on counterwitnesses.

\begin{example}\label{ex:prim:min}
  Consider Example~\ref{ex:sat}, its TD~$\TTT=(\cdot,\chi)$,
  Figure~\ref{fig:running1_prim} (right), and the tables~$\tab{1}$,
  $\ldots$, $\tab{12}$ obtained by $\dpa_{\PRIM}$.
  Since we have $\at(r_1) \cup \at(r_2) \subseteq \chi(t_4)$, we
  require $C_{4.i.j} \models \{r_1, r_2\}^{M_{4.i}}$ for each
  counterwitness~$C_{4.i.j} \in \CCC_{4.i}$ in tuples of~$\tab{4}$.
  For $M_{4.5} = \{a,b,c\}$ observe that the only counterwitness
  of
  $\{r_1, r_2\}^{M_{4.5}} = \{a \leftarrow c, b \leftarrow c, c
  \leftarrow 1 \leq \{b = 1\}\}$ is $C_{4.5.1} = \{a\}$.
  Note that witness $M_{11.2}$ of table $\tab{11}$ is the result of
  joining $M_{4.2}$ with $M_{9.1}$ and witness $M_{11.3}$
  (counterwitness $C_{11.3.1}$) is the result of joining $M_{4.3}$
  with $M_{9.3}$ ($C_{4.3.1}$ with $C_{9.3.1}$), and $M_{4.5}$ with
  $M_{9.3}$ ($C_{4.5.1}$ with $C_{9.3.2}$).  $C_{11.3.1}$ witnesses
  that neither $M_{4.3} \cup M_{9.3}$ nor $M_{4.5} \cup M_{9.3}$ forms
  an answer set of~$\prog$.
  Since $\tab{12}$ contains $\langle \emptyset, \emptyset \rangle$
  there is no counterwitness for $M_{11.2}$, we can construct an
  answer set of~$\prog$ from the tables,~e.g., $\{a\}$ can be
  constructed from $M_{4.2} \cup M_{9.1}$.
\end{example}%
\begin{theorem}\label{thm:prim:runtime}
  Given a program~$\prog$, the algorithm ${\dpa}_{\PRIM}$ is correct
  and runs in time~$\bigO{2^{2^{k+2}} \cdot \CCard{P(\prog)}}$ where
  $k$~is the treewidth of the primal graph~$P(\prog)$.
  \vspace{-0.5em}
\end{theorem}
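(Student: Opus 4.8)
The plan is to prove correctness by fixing, for every node $t$ of the nice TD $\TTT=(T,\chi)$ of $P(\prog)$, a precise characterization of the computed table $\tab{t}$ in terms of the sub\hy program $\progt{t}$, and to verify it by induction along $\post(T,n)$; the runtime then drops out of a size bound for the tables together with the fact that a nice TD has linearly many nodes. Concretely, call $M\subseteq\chi(t)$ a \emph{witness at $t$} if there is a set $\hat M\subseteq\att{t}$ with $\hat M\cap\chi(t)=M$ and $\hat M\models\progt{t}$, and for such an $\hat M$ put $\mathrm{cw}_t(\hat M)\eqdef\SB \hat C\cap\chi(t)\SM \hat C\subsetneq\hat M,\ \hat C\models(\progt{t})^{\hat M}\SE$. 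The invariant to establish is: $\langle M,\CCC\rangle\in\tab{t}$ iff $M$ is a witness at $t$ and $\CCC=\mathrm{cw}_t(\hat M)$ for some extension $\hat M$ of $M$. (Distinct extensions of the same $M$ may yield distinct families, which is why $\tab{t}$ may contain several tuples with equal first component; cf.\ $\tab{11}$ in Figure~\ref{fig:running1_prim}.) Granting this, the correctness part of the theorem follows: at the root $n$ we have $\chi(n)=\emptyset$ and, since the atoms of every rule occur together in some bag (Observation~\ref{foot:clique_prim}), $\progt{n}=\prog$; hence $\langle\emptyset,\emptyset\rangle\in\tab{n}$ iff some $\hat M\models\prog$ admits no $\hat C\subsetneq\hat M$ with $\hat C\models\prog^{\hat M}$, i.e.\ iff $\prog$ has an answer set.

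The induction has four cases. The leaf case is immediate. For $\type(t)=\intr$ with introduced atom $a$ and child $t'$, note $\att{t}=\att{t'}$ and $\progt{t}=\progt{t'}\cup\prog_t$, so a model of $\progt{t}$ is exactly a model of $\progt{t'}$ that in addition satisfies the newly complete bag\hy rules $\prog_t$; the two comprehensions in Algorithm~\ref{fig:prim} realize the guesses ``$a\in\hat M$'' and ``$a\notin\hat M$'', carry counterwitnesses along (with or without $a$), and then $\Mod(\cdot,\prog_t^{M})$ (Lines~3--4) drops those violating the reduct of the new bag\hy rules --- here one uses that for $r\in\prog_t$ the reduct $r^{\hat M}$ depends only on $\hat M\cap\at(r)\subseteq\chi(t)$, hence is determined by $M$. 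For $\type(t)=\rem$ with removed atom $a$, the connectedness condition together with Observation~\ref{foot:clique_prim} shows that no rule of $\prog\setminus\progt{t}$ mentions $a$, so projecting $a$ out of witnesses and counterwitnesses (Line~6) neither invalidates anything nor loses anything relevant. The $\join$ case is the heart of the argument: if $\hat M'\subseteq\att{t'}$ and $\hat M''\subseteq\att{t''}$ restrict to the same $M$ on $\chi(t)=\chi(t')=\chi(t'')$, then $\hat M'\cup\hat M''\models\progt{t}$ because each rule of $\progt{t}$ lies entirely within one branch; moreover $\hat C\subseteq\hat M'\cup\hat M''$ satisfies $(\progt{t})^{\hat M'\cup\hat M''}$ iff its two branch\hy restrictions satisfy the respective reducts, and $\hat C$ is a \emph{proper} subset overall iff it already differs on $\chi(t)$ --- in which case it must be a submodel on both sides --- or it agrees on $\chi(t)$ but is a proper subset on at least one side; this is precisely the asymmetric term $(\CCC'\cap\CCC'')\cup(\CCC'\cap\SB M\SE)\cup(\SB M\SE\cap\CCC'')$ of Line~8. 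In each case one checks soundness (every produced tuple arises from a genuine $\hat M$) and completeness (every genuine tuple is produced) simultaneously.

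For the runtime, a witness is a subset of $\chi(t)$, so there are at most $2^{k+1}$ of them, and for a fixed witness $M$ the second component is one of at most $2^{2^{\Card{M}}}\le 2^{2^{k+1}}$ subfamilies of $2^M$; hence $\Card{\tab{t}}\le 2^{k+1}\cdot 2^{2^{k+1}}\le 2^{2^{k+2}}$. A nice TD of $P(\prog)$ of width $k$ has $\bigO{\CCard{P(\prog)}}$ nodes and can be computed within the stated bound. At each node, $\tab{t}$ is assembled from the children's tables by iterating over (pairs of) tuples sharing a witness and, per candidate, checking $\models\prog_t$ and evaluating $\Mod(\cdot,\prog_t^{M})$; since each rule need only be handled at a bounded number of bags, the total rule\hy processing cost over all nodes is linear in $\CCard{P(\prog)}$, while the remaining combinatorial work per node is polynomial in $\Card{\tab{t}}$. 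Multiplying the per\hy node cost by the number of nodes gives $\bigO{2^{2^{k+2}}\cdot\CCard{P(\prog)}}$.

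The main obstacle I expect is the correctness invariant and, within it, the $\join$ case: one has to pin down the asymmetric counterwitness formula and, more subtly, prove \emph{soundness}, i.e.\ that every tuple the algorithm outputs really corresponds to some extension $\hat M$, not merely that all genuine tuples are output. A related point that needs care everywhere is that the GL reduct $(\progt{t})^{\hat M}$ is formed with respect to the full extension $\hat M$, while the algorithm only manipulates its bag restriction $M$; this is reconciled by the two structural facts invoked above --- a bag\hy rule's reduct depends only on the bag, and a removed atom appears in no later rule.
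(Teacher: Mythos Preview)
Your approach is essentially the paper's: characterize $\tab{t}$ semantically via (what the paper calls) partial solutions and their bag\hy projections, verify this by induction over the four node types of a nice TD, and read off the runtime from the table\hy size bound $2^{k+1}\cdot 2^{2^{k+1}}$ together with the linear number of nodes. Two small remarks. First, in the introduce case you write $\atto=\att{t'}$; this should be $\atto=\att{t'}\cup\{a\}$ (your subsequent ``guess $a\in\hat M$ / $a\notin\hat M$'' argument already uses the correct statement, so this is just a slip). Second, your invariant fixes $\CCC$ to be \emph{all} projected counterwitnesses $\mathrm{cw}_t(\hat M)$ of a specific $\hat M$; this is sharper than the paper's Definition~\ref{def:partialsol}, where $\hat{\CCC}$ is an arbitrary set of counter\hy models, and it is exactly what makes the root criterion ``$\langle\emptyset,\emptyset\rangle\in\tab{n}$ iff $\prog$ has an answer set'' go through directly (with the looser notion one still has to argue that the algorithm always produces the full family, which your formulation makes explicit).
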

\shortversion{%
  We omit the proof due to space reasons and refer to
  Appendix~\ref{proof:thm:prim:runtime}.
}%
\longversion{%
  \begin{proof}
    We refer to Appendix~\ref{proof:thm:prim:runtime}.
  \end{proof}
}

%
\begin{algorithm}[t]
   \KwData{Bag $\chi_t$, bag-rules $\prog_t$, atoms-below $\atto$, child tables $\Tab{}$ of $t$.{~\bf Out:} Tab.~$\tab{t}$.} 
   %

   %
   %
   \lIf(\tcc*[f]{\hspace{-0.05em}Abbreviations see Footnote~\ref{foot:sigma}.\hspace{-0.05em}}){$\type(t) = \leaf$}{%
     $\tab{t} \eqdef \Big\{ \Big\langle
     \tuplecolor{\inputPredColor}{\emptyset},
     \tuplecolor{\statePredColor}{\emptyfunc},
     ~\tuplecolor{\outputPredColor}{\emptyset} \Big\rangle \Big\}$
   \vspace{-0.00em}}%
   %
   %
   %
   \uElseIf{$\type(t) = \intr$, $a \in \chi_t \setminus \prog_t$ is introduced and $\tau'\in \Tab{}$}{     
     \vspace{-0.25em}$\tab{t} \eqdef \Big\{ \Big\langle \tuplecolor{\inputPredColor}{\MAI{M}}, \tuplecolor{\statePredColor}{\sigma \squplus
     \SSR(\dot\prog_t^{(t,\sigma)},\MAI{M}) },~\tuplecolor{\outputPredColor}{\{\langle M,
       \sigma \squplus \SSR(\dot\prog_t^{(t,\sigma,{\MAI{M}})},M)\rangle \}~\cup}$

  \vspace{-0.2em}%
       \makebox[1.5cm][l]{}\makebox[7.05cm][l]{$\tuplecolor{\outputPredColor}{\{ \langle\MAI{C},
       \rho \squplus \SSR(\dot\prog_t^{(t,\rho,{\MAI{M}})},\MAI{C})\rangle \mid \langle C,
       \rho\rangle \in \CCC \}~\cup}$}

\vspace{-0.2em}%
       \makebox[1.5cm][l]{}\makebox[7.05cm][l]{$\tuplecolor{\outputPredColor}{\{ \langle C,
       \rho \squplus \SSR(\dot\prog_t^{(t,\rho,{\MAI{M}})},C)\rangle \mid \langle C, \rho
       \rangle \in \CCC \}}\Big\rangle$} 
	$\Bigm|\;\langle \tuplecolor{\inputPredColor}{M}, \tuplecolor{\statePredColor}{\sigma}, \tuplecolor{\outputPredColor}{\CCC} \rangle \in \tab{}'\Big\}\;\mcup$

	\qquad\,\,\,$\Big\{ \Big\langle \tuplecolor{\inputPredColor}{M}, \tuplecolor{\statePredColor}{\sigma \squplus
     \SSR(\dot\prog_t^{(t,\sigma)},M)},$%

	\vspace{-0.15em}%
     \makebox[1.5cm][l]{}\makebox[7.05cm][l]{$\tuplecolor{\outputPredColor}{\{ \langle C,
       \rho \squplus \SSR(\dot\prog_t^{(t,\rho,M)},C) \rangle \mid \langle C, \rho\rangle \in \CCC
       \}}\Big\rangle$}
     $\Bigm|\;\langle \tuplecolor{\inputPredColor}{M}, \tuplecolor{\statePredColor}{\sigma}, \tuplecolor{\outputPredColor}{\CCC} \rangle \in \tab{}'
     \Big\}$

     %
   \vspace{-0.10em}}
   %
   %
   \uElseIf{$\type(t) = \intr$, $r \in \chi_t \cap \prog_t$ is introduced and $\tau'\in \Tab{}$}{
     $\tab{t} \eqdef\Big\{ \Big\langle \tuplecolor{\inputPredColor}{M}, \tuplecolor{\statePredColor}{\MAIR{\sigma}{r} \squplus
     \SSR(\{\dot r\}^{(t,\MAIR{\sigma}{r}\})},M)},$
 
 	\vspace{-0.15em}%
     \makebox[1.5cm][l]{}\makebox[7.55cm][l]{$\tuplecolor{\outputPredColor}{\{ \langle C, \MAIR{\rho}{r} \squplus
       \SSR(\{\dot r\}^{(t,\MAIR{\rho}{r},M)},C) \rangle \mid \langle C, \rho \rangle \in \CCC\}} \Big\rangle$}
     $\Bigm|\;\langle \tuplecolor{\inputPredColor}{M}, \tuplecolor{\statePredColor}{\sigma}, \tuplecolor{\outputPredColor}{\CCC} \rangle \in \tab{}' \Big\}$
   \vspace{-0.10em}}
   %
   %
   %
   \uElseIf{$\type(t) = \rem$, $a \not\in \chi_t$ is removed atom and $\tau'\in \Tab{}$}{
       $\tab{t} \eqdef\Big\{ \Big\langle \tuplecolor{\inputPredColor}{\MAR{M}},\tuplecolor{\statePredColor}{\sigma \squplus
       \UpdateStates(\prog_t,M,a)},$
	   
	   \makebox[1.5cm][l]{}\makebox[7.55cm][l]{$\tuplecolor{\outputPredColor}{\{ \langle \MAR{C}, \rho \squplus \UpdateRedStates(\prog_t,M,C,a) \rangle
       \mid \langle C, \rho \rangle \in \CCC \}} \Big\rangle$}
       $\Bigm|\;\langle \tuplecolor{\inputPredColor}{M}, \tuplecolor{\statePredColor}{\sigma}, \tuplecolor{\outputPredColor}{\CCC} \rangle \in \tab{}'\Big\}$
       %
%
     %
   \vspace{-0.07em}}%
   %
   %
   \uElseIf{$\type(t) = \rem$, $r \not\in \chi_t$ is removed rule and $\tau'\in \Tab{}$}{
     $\tab{t} \eqdef$ \makebox[6.645cm][l]{$\Big\{\Big\langle \tuplecolor{\inputPredColor}{M},
       \tuplecolor{\statePredColor}{\MARR{\sigma}{\hspace{-0.1em}\{r\}}},$~%
       $\tuplecolor{\outputPredColor}{\big\{ \langle C, \MARR{\rho}{\hspace{-0.1em}\{r\}} \rangle \mid \langle C, \rho
       \rangle \in \CCC, \rho(r) =  
       \infty \big\}} \Big\rangle$}%
%
     $\Bigm|\;\langle \tuplecolor{\inputPredColor}{M}, \tuplecolor{\statePredColor}{\sigma}, \tuplecolor{\outputPredColor}{\CCC} \rangle \in \tab{}', \sigma 
     (r) =  \infty\hspace{-0.015em} \Big\}$ 
     \vspace{-0.02em}}

   %
   %
   \uElseIf{$\type(t) = \join$ and $\tau', \tau'' \in \Tab{}$ with $\tau' \neq \tau''$}{
     $\tab{t} \eqdef\Big\{ \Big\langle \tuplecolor{\inputPredColor}{M}, \tuplecolor{\statePredColor}{\sigma' \squplus \sigma''},~\tuplecolor{\outputPredColor}{\{
     \langle C, \rho' \squplus \rho''\rangle \mid \langle C, \rho'
     \rangle \in \CCC', \langle C, \rho'' \rangle \in \CCC''\}~\cup}$%
  
     \makebox[1.5cm][l]{}$\tuplecolor{\outputPredColor}{\{ \langle M, \rho \squplus \sigma''\rangle
     \mid \langle M, \rho \rangle \in \CCC' \}~\cup}$%
     
     \makebox[1.5cm][l]{}\makebox[4.7cm][l]{$\tuplecolor{\outputPredColor}{\{ \langle M, \sigma'
       \squplus \rho \rangle \mid \langle M, \rho \rangle \in \CCC'' \}}%
       \Big\rangle$}%
       \,\,$\Bigm|\;\langle \tuplecolor{\inputPredColor}{M}, \tuplecolor{\statePredColor}{\sigma'}, \tuplecolor{\outputPredColor}{\CCC'} \rangle \in \tab{}',
       \langle \tuplecolor{\inputPredColor}{M}, \tuplecolor{\statePredColor}{\sigma''}, \tuplecolor{\outputPredColor}{\CCC''} \rangle \in \tab{}'' \Big\}$ 
     \vspace{-0.15em}%
   }
   \vspace{-0.05em}
\caption{Table algorithm~$\algo{INC}(t,\chi_t,\prog_t,\atto,\Tab{})$.}
\label{fig:incinc}
\end{algorithm}


%
\subsection{Using Decompositions of Incidence Graphs}\label{sec:inc}
Our next algorithm ($\dpa_{\INC}$) takes the incidence graph as graph
representation of the input program.  The treewidth of the incidence
graph is smaller than the treewidth of the primal graph plus one,
cf.,~\cite{SamerSzeider10b,FichteSzeider15}.  More importantly, the
incidence graph does not enforce cliques on $\at(r)$ for some rule~$r$.
The incidence graph, compared to the primal graph, additionally
contains rules as vertices and its relationship to the atoms in terms
of edges. By definition, we have no guarantee that all atoms of a rule
occur together in the same bag of TDs of the incidence graph.
For that reason, we \emph{cannot} locally check the satisfiability of a rule
when traversing the TD without additional stored information (so-called
\emph{rule-states} that intuitively represent how much of a rule is already
(dis-)satisfied).
We only know that for each rule~$r$ there is a
path~$p=t_\intr,t_1,\ldots,t_m,t_\rem$ where $t_\intr$ introduces~$r$
and $t_\rem$ removes~$r$ and when considering $t_\rem$ in the table
algorithm we have seen all atoms that occur in
rule~$r$.
%
%
Thus, on removal of~$r$ in $t_\rem$ we ensure that $r$ is satisfied
while taking rule-states for choice and weight rules into account.
Consequently, our tuples will contain a witness, its rule-state, 
and counterwitnesses and their rule-states.

\shortversion{%
\footnoteitext{\label{foot:sigma}\label{foot:abrevtwo}%
  $\sigma \squplus \rho\hspace{-0.15em}\eqdef\hspace{-0.1em}\SB
  (x,\hspace{-1.0em}{\underset{(x,c_1)\in\sigma}{\hspace{-0.2em}\Sigma}}\hspace{-1.2em}c_1
  \hspace{0.05em}+\hspace{-0.95em} 
  {\underset{\hspace{0.8em}(x,c_2)\in\rho}{\hspace{-0.5em}\Sigma}\hspace{-1.7em}c_2}\hspace{0.0em})
  \SM (x,\cdot)\in \sigma\cup\rho$ $\hspace{-0.3em}
  \SE$;
  $\MAIR{\sigma}{r}\hspace{-0.15em}\eqdef\hspace{-0.1em}\sigma \cup
  \{(r,0)\}$;
  $\MARR{\sigma}{S}\hspace{-0.15em}\eqdef\hspace{-0.1em}\{(x,y) \in
  \sigma \mid x\not\in S\}$.
}
} \longversion{%
  \footnoteitext{\label{foot:sigma}\label{foot:abrevtwo}%
    $\sigma \squplus \rho\eqdef\SB (x,{\Sigma_{(x,c_1) \in\sigma}}c_1+
    {\Sigma}_{(x,c_2)\in\rho}c_2) \SM (x,\cdot)\in \sigma\cup\rho
    \SE$;
    $\MAIR{\sigma}{r}\eqdef\sigma \cup \{(r,0)\}$;
    $\MARR{\sigma}{S}\eqdef\{(x,y) \in \sigma \mid x\not\in
    S\}$.
}
}

\newcommand{\sat}{\text{sat}}
\newcommand{\undec}{\text{un}}
\newcommand{\SatRules}{SR}
A tuple in
$\tab{t}$ for Algorithm~\ref{fig:incinc} (\INC) is a triple~$\langle
M, \sigma, \CCC \rangle$. The set~$M \subseteq
\at(\prog)\cap\chi(t)$ represents again a witness. A
\emph{rule-state}~$\sigma$ is a mapping~$\sigma: \prog_t \rightarrow
\NAT_0 \cup \{\infty\}$. A rule state for
$M$ represents whether rules of
$\chi(t)$ are either (i)~satisfied by a superset of
$M$ or (ii)~undecided for~$M$.
Formally, the set~$\SatRules(\prog_t,\sigma)$ of satisfied bag-rules~$\prog_t$
consists of each rule~$r \in \prog_t$ such that $\sigma(r) = \infty$.
Hence, $M$ witnesses a model~$M'\supseteq M$ where $M' \models
\progtneq{t} \cup \SatRules(\prog_t,\sigma)$.
$\CCC$ concerns counterwitnesses.

We compute a new
rule-state~$\sigma$ from a rule-state, ``updated'' bounds for weight
rules ($\UpdateStates$), and satisfied rules
($\SSR$, defined below).  We define $\UpdateStates(\prog_t, M, a)
\eqdef \sigma'$ depending on an atom~$a$ with $\sigma'(r) \eqdef
\wght(r,\{a\} \cap [(B^-_r \setminus M) \cup (B^+_r \cap M)])$, if $r
\in \weight(\prog_t)$.
We use binary
operator~$\squplus^{\ref{foot:abrevtwo}}$ to combine rule-states,
which ensures that rules satisfied in at least one operand remain
satisfied. Next, we explain the meaning of rule-states.

\begin{example}\label{ex:rulestates}%
  Consider program~$\prog$ from Example~\ref{ex:running1} and
  TD~$\TTT'=(\cdot,\chi)$ of~$I(\prog)$ and the tables~$\tab{1}$,
  $\ldots$, $\tab{18}$ in Figure~\ref{fig:running1_inc} (left).
  We are only interested in the first two tuple positions (red and
  green parts) and implicitly assume that ``$i$'' refers to Line~$i$
  in the respective table.  Consider $M_{4.1}=\{c\}$ in
  table~$\tab{4}$. Since $H_{r_2} = \{c\}$, witness $M_{4.1}=\{c\}$
  satisfies rule~$r_2$.
  As a result, $\sigma_{4.1}(r_2)=\infty$ remembering satisfied
  rule~$r_2$ for~$M_{4.1}$. Since $c \notin M_{4.2}$ and
  $B^+_{r_1}=\{c\}$, $M_{4.2}$ satisfies rule~$r_1$, resulting in
  $\sigma_{4.2}(r_1)=\infty$. Rule-state~$\sigma_{4.1}(r_1)$
  represents that $r_1$ is undecided for $M_{4.2}$.
  For weight rule~$r_2$, rule-states remember the sum of body weights
  involving removed atoms.  Consider $M_{6.2}=M_{6.3}=\emptyset$ of
  table~$\tab{6}$. We have $\sigma_{6.2}(r_2)\neq\sigma_{6.3}(r_2)$,
  because $M_{6.2}$ was obtained from some~$M_{5.{i}}$ of
  table~$\tab{5}$ with $b\not\in M_{5.{i}}$ and $b$ occurs in
  $B^+_{r_2}$ with weight~$1$, resulting in $\sigma_{6.3}(r_2)=1$;
  whereas $M_{6.3}$ extends some~$M_{5.{j}}$ with $b \notin M_{5.j}$.
\end{example}%
\noindent In order to decide in node~$t$ whether a witness satisfies
rule~$r\in\prog_t$, we check satisfiability of
program~$\dot{\cal R}(r)$ constructed by~$\dot{\cal R}$, which maps
rules to state-programs.  Formally, for
$M \subseteq \chi(t) \setminus \prog_t$,
$\SSR(\dot{\cal R},M)\eqdef \sigma$ where $\sigma(r) \eqdef \infty$ if
$(r,{\cal R})\in\dot{\cal R}$ and $M \models {\cal R}$.
\begin{definition}\label{def:bagprogram}%
  Let $\prog$ be a program, $\TTT=(\cdot,\chi)$ be a TD of $I(\prog)$,
  $t$ be a node of $\TTT$, ${\cal P} \subseteq \prog_t$, and
  $\sigma: \prog_t \rightarrow \NAT_0 \cup \{\infty\}$ be a
  rule-state.
  The \emph{state-program}~${\cal P}^{(t,\sigma)}$ is obtained
  from~${\cal P} \cup \{ \leftarrow B_r \mid r \in \choice({\cal P}),
  H_r \subsetneq \att{t}\}$\shortversion{\footnote{%
      We require to add
      $\{\leftarrow B_r \mid r\in \choice({\cal P}), H_r \subsetneq
      \att{t}\}$ in order to decide satisfiability for corner cases of
      choice rules involving counterwitnesses of Line~3 in
      Algorithm~\ref{fig:incinc}.}}%
  \longversion{\footnote{%
      We require to add
      $\{\leftarrow B_r \mid r\in \choice({\cal P}), H_r \subsetneq
      \att{t}\}$ in order to decide satisfiability for corner cases of
      choice rules
      involving\newline \hbox{~}\hspace{1.2em} counterwitnesses of Line~3 in
      Algorithm~\ref{fig:incinc}.}} %
  by \shortversion{\vspace{-1.25ex}}
  \begin{enumerate}
  \item \shortversion{\hspace{-0.2em}}removing rules~$r$ with~$\sigma(r)=\infty$
    (``already satisfied rules'');
  \item \shortversion{\hspace{-0.20em}}removing from every rule all
    literals~$a, \neg a$ with $a \not\in \chi(t)$; and
  \item \shortversion{\hspace{-0.50em}} setting new bound
    $\max\{ 0, \bnd(r)\hspace{-0.0em} - \sigma(r)
    \hspace{-0.0em}\hspace{-0.05em}- \hspace{-0.05em} \wght(r, \at(r)
    \setminus \atto)\}$ for weight
    rule~$r$.
  \end{enumerate}
  \shortversion{\vspace{-0.5em} }%
  We
  define~$\dot{\cal P}^{(t,\sigma)}: {\cal P} \rightarrow 2^{{\cal
      P}^{(t,\sigma)}}$ by
  $\dot{\cal P}^{(t,\sigma)}(r) \eqdef \{r\}^{(t,\sigma)}$ for
  $r\in {\cal P}$.
\end{definition}%
\begin{example}
  Observe
  $\prog_{t_1}^{(t_1,\emptyfunc)} = \{\{b\} \leftarrow c, \leftarrow
  c, c \leftarrow 0 \leq \{ b = 1 \} \}$ and
  $\prog_{t_2}^{(t_2,\emptyfunc)} = \{\{a\} \leftarrow, \leftarrow 1
  \leq \{ \neg a = 1 \} \}$ for $\prog_{t_1}$, $\prog_{t_2}$ of
  Figure~\ref{fig:graph-td}~\hspace{-0.3em}(right).
\end{example}
\longversion{%

}%
The following example provides an idea how we compute models of a
given program using the incidence graph.  The resulting algorithm
\INCSAT is the same as \INC, except that only the first two tuple positions (red
and green parts) are considered.
\begin{example}
  Again, we consider~$\prog$ of Example~\ref{ex:running1} and in
  Figure~\ref{fig:running1_inc}~(left) ${\cal T}'$ as well as
  tables~$\tab{1}$, $\ldots$, $\tab{18}$.  Table
  $\tab{1}= \{\langle \emptyset, \emptyfunc \rangle \}$ as
  $\type(t_1)=\leaf$.  Since $\type(t_2)=\intr$ and $t_2$ introduces
  atom~$c$, we construct $\tab{2}$ from $\tab{1}$ by taking
  $M_{2.1}\eqdef M_{1.1}\cup \{c\}$ and $M_{2.2} \eqdef M_{1.1}$ as
  well as rule-state $\emptyfunc$.  
  Because $\type(t_3)=\intr$ and $t_3$ introduces rule~$r_1$, we
  consider state program $L_3\eqdef \{r_1\}^{(t_3,\{(r_1,0)\})}=$
  \hbox{$\SB \hsep c \SE$} for $\SSR(\dot L_3,M_{2.1})=\{(r_1,0)\}$ as
  well as $\SSR(\dot L_3,M_{2.2})=\{(r_1,\infty)\}$ (according to
  Line~9 of Algorithm~\ref{fig:incinc}).
  Because $\type(t_4)=\intr$ and $t_4$ introduces rule~$r_2$, we
  consider $M_{3.1} \eqdef M_{2.1}$ and $M_{3.2} \eqdef M_{2.2}$ and
  state program
  $L_4\eqdef \{r_2\}^{(t_4,\{(r_2,0)\})}=\SB c \hsep 0 \leqslant \{\}
  \SE =$ \hbox{$\SB c \hsep \SE$} for
  $\SSR(\dot L_4,M_{3.1})=\{(r_2,\infty)\}$ as well as
  $\SSR(\dot L_4,M_{3.2})=\{(r_2,0)\}$ (see Line~9).
  Node~$t_5$ introduces~$b$ (table not shown) and node~$t_6$
  removes~$b$. Table~$\tab{6}$ was discussed in
  Example~\ref{ex:rulestates}.  When we remove~$b$ in $t_6$ we have
  decided the ``influence'' of $b$ on the satisfiability of $r_1$ and
  $r_2$ and thus all rules where $b$ occurs.
  Tables~$\tab{7}$ and~$\tab{8}$ can be derived similarly. Then, $t_9$
  removes rule~$r_2$ and we ensure that every witness~$M_{9.1}$ can be
  extended to a model of~$r_2$, i.e., witness candidates for $\tab{9}$
  are $M_{8.i}$ with~$\sigma_{8.i}(r_2)=\infty$. The remaining tables
  are derived similarly. For example, table~$\tab{17}$ for join
  node~$t_{17}$ is derived analogously to table $\tab{17}$ for
  algorithm~\PRIM in Figure~\ref{fig:running1_prim}, but, in addition,
  also combines the rule-states as specified in
  Algorithm~\ref{fig:incinc}.
\end{example}
\begin{figure}[t]
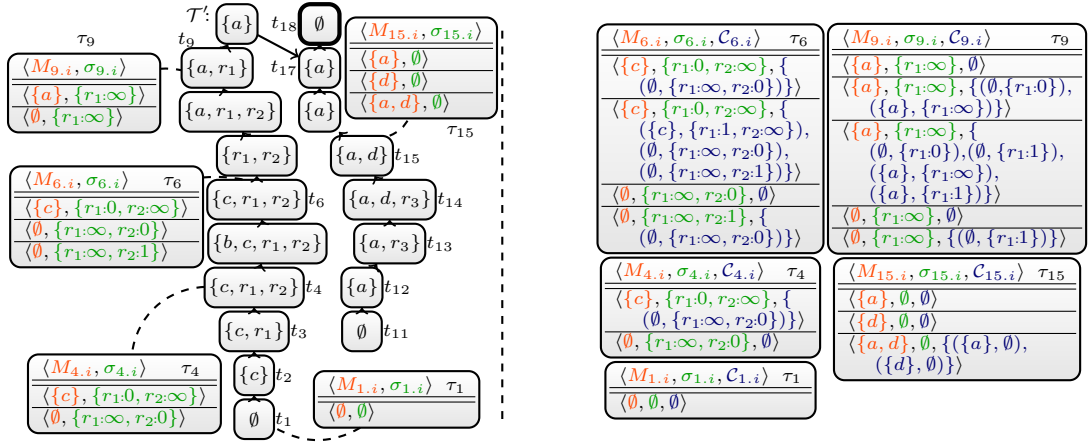
%
  \centering %
  \begin{minipage}{.48\linewidth}
    \hspace{0.2em}%
    \begin{tikzpicture}[node distance=0.5mm]
\tikzset{every path/.style=thick}

\node (l1) [stdnodecompact, label={[tdlabel, xshift=0em,yshift=+0em]right:${t_1}$}]{$\emptyset$};
\node (i1) [stdnodecompact, above=of l1, label={[tdlabel, xshift=0em,yshift=+0em]right:${t_2}$}]{$\{c\}$};
\node (i12) [stdnodecompact, above=of i1, label={[tdlabel, xshift=0em,yshift=+0em]right:${t_3}$}]{$\{c,r_1\}$};
\node (i13) [stdnodecompact, above=of i12, label={[tdlabel, xshift=0em,yshift=+0em]right:${t_4}$}]{$\{c,r_1,r_2\}$};
\node (r1) [stdnodecompact, above=of i13, xshift=0.5em, label={[tdlabel, xshift=0em,yshift=+0em]right:$ $}]{$\{b,c,r_1,r_2\}$};
\node (r12) [stdnodecompact, above=of r1, xshift=-0.4em, label={[tdlabel, xshift=0em,yshift=+0em]right:${t_6}$}]{$\{c,r_1,r_2\}$};
\node (r13) [stdnodecompact, above=of r12, label={[tdlabel, xshift=0em,yshift=+0em]right:$ $}]{$\{r_1,r_2\}$};
\node (i14) [stdnodecompact, above=of r13, xshift=-1em, label={[tdlabel, xshift=0em,yshift=+0.75em]left:$ $}]{$\{a,r_1,r_2\}$};
\node (r14) [stdnodecompact, above=of i14, xshift=-0.5em, label={[tdlabel, xshift=0em,yshift=+0.0em]above left:${t_9}$}]{$\{a,r_1\}$};
\node (r15) [stdnodecompact, above=of r14,xshift=0.75em, label={[tdlabel, xshift=0em,yshift=+0em]right:$ $}]{$\{a\}$};
\node (l2) [stdnodecompact, right=2.0em of i12, label={[tdlabel, xshift=0em,yshift=+0em]right:$t_{11}$}]{$\emptyset$};
\node (i2) [stdnodecompact, above=of l2, label={[tdlabel, xshift=0em,yshift=+0em]right:$t_{12}$}]{$\{a\}$};
\node (i22) [stdnodecompact, above=of i2, xshift=1.05em, label={[tdlabel, xshift=0em,yshift=+0em]right:$t_{13}$}]{$\{a,r_3\}$};
\node (i23) [stdnodecompact, above=of i22, xshift=0.0em, label={[tdlabel, xshift=0em,yshift=+0em]right:$t_{14}$}]{$\{a,d,r_3\}$};
\node (r2) [stdnodecompact, above=of i23, xshift=-1.05em, label={[tdlabel, xshift=0em,yshift=+0em]right:${t_{15}}$}]{$\{a,d\}$};
\node (r22) [stdnodecompact, above=of r2, xshift=-1.6em, label={[tdlabel, xshift=0em,yshift=+0em]right:$ $}]{$\{a\}$};
\node (j) [stdnodecompact, above=of r22, label={[tdlabel, xshift=0em,yshift=+0em]left:$t_{17}$}]{$\{a\}$};
\node (rt) [stdnodecompact, ultra thick,above=of j, label={[tdlabel, xshift=0em,yshift=+0em]left:$t_{18}$}]{$\emptyset$};
\node (label) [font=\scriptsize,left=of r15,yshift=0.40em,xshift=0.3em]{${\cal T}'$\hspace{-0.25em}:};
\coordinate (r12pos) at ($ (r12.north)+(-0.5em,0) $);
\node (leaf1) [stdnodetable, left=0.5em of i1, yshift=-0.7em, label={[tdlabel, xshift=0em,yshift=+0.25em]above left:$ $}]{\input{graph0_norm/tables_inc/leaf0}};
\node (leaf2) [stdnodetable,right=0.2em of j, label={[tdlabel, xshift=1.5em,yshift=-0.25em]below:$\tab{15}$}]  {\input{graph0_norm/tables_inc/leaf1}};
\coordinate (middle) at ($ (leaf1.north east)!.5!(leaf2.north west) $);
\node (join) [stdnodetable,left=0.15em of r12, yshift=-0.65em, label={[tdlabel, xshift=2em,yshift=-0.25em]below left:$ $}] {\input{graph0_norm/tables_inc/join}};
\node (r14n) [stdnodetable,left=0.80em of r14, yshift=-0.95em, label={[tdlabel, xshift=2em,yshift=+0.25em]above left:$\tab{{9}}$}] {\input{graph0_norm/tables_inc/r14}};
\node (leaf0) [stdnodetable,right=1.5em of l1,yshift=0.75em, label={[tdlabel, xshift=-0.0em,yshift=0.15em]above right:$ $}] {\input{graph0_norm/tables_inc/leaf00}};
\coordinate (top) at ($ (leaf2.north east)+(0.2em,-0.5em) $);
\coordinate (bot) at ($ (top)+(0,-14.8em) $);

\draw [->] (j) to (rt);
\draw [->] (r15) to (j);
\draw [->](i2) to (i22);
\draw [->](r2) to (r22);
\draw [->](r22) to (j);
\draw [->](i22) to (i23);
\draw [->](i23) to (r2);
\draw [->](l2) to (i2);
\draw [->](l1) to (i1);
\draw [<-](i12) to (i1);
\draw [<-](i13) to (i12);
\draw [<-](r1) to (i13);
\draw [->](r12) to (r13);
\draw [->](r13) to (i14);
\draw [->](i14) to (r14);
\draw [->](r14) to (r15);
\draw [<-](r12) to (r1);

\draw [dashed,bend left=7] (join) to (r12pos);
\draw [dashed,bend left=7] (r14n) to (r14);
\draw [dashed, bend right=20] (r2) to (leaf2);
\draw [dashed, bend right=40] (i13) to (leaf1);
\draw [dashed, bend left=30] (leaf0) to (l1);
\draw [dashed] (top) to (bot);
\end{tikzpicture}
%
  \end{minipage}
\begin{minipage}{.51\linewidth}
  \hspace{0.05em}%
  \begin{tikzpicture}[node distance=1mm]
\tikzset{every path/.style=thick}

\node (leaf1) [stdnodecompact,label={[tdlabel,  xshift=-0.25em,yshift=0.25em]above:$ $}] {\input{graph0_norm/tables_inc/leaf0_asp}};

\node (leaf0) [stdnodecompact,below=0.15em of leaf1, yshift=0em, xshift=0em, label={[tdlabel,  yshift=-0.0em,xshift=0.0em,]right:$ $}] {\input{graph0_norm/tables_inc/leaf00_asp}};


\node (join) [stdnodecompact,above=0.15em of leaf1,yshift=0.0em, xshift=-0em, label={[tdlabel,  yshift=-0.25em]below:$ $}] {\input{graph0_norm/tables_inc/join_asp}};
\node (r14b) [stdnodecompact,right=0.15em of join,yshift=0em, xshift=-0em, label={[tdlabel,  yshift=-0.25em]below:$ $}] {\input{graph0_norm/tables_inc/r14_asp}};

\node (leaf2) [stdnodecompact,below=0.15em of r14b, xshift=0.0em, yshift=-0.0em, label={[tdlabel,  xshift=+0.1cm,yshift=0.25em]above:$ $}]  {\input{graph0_norm/tables_inc/leaf1_asp}};

%
%
%
\end{tikzpicture}
%
\end{minipage}
\caption{Selected DP tables of~\INCSAT~(left) and~\INC~(right) for
  nice TD~$\TTT'$.}
\label{fig:running1_inc}
\end{figure}%
Since we already explained how to obtain models, we only briefly
describe how we handle the counterwitness part.
Family~$\CCC$ consists of tuples~$(C, \rho)$ where
$C\subseteq \at(\prog) \cap \chi(t)$ is a \emph{counterwitness} in~$t$
to~$M$.
Similar to the rule-state~$\sigma$ the rule-state~$\rho$ for $C$ under
$M$ represents whether rules of the GL reduct~$\prog_t^M$ are either
(i)~satisfied by a superset of $C$ or (ii)~undecided for~$C$.  Thus,
$C$ witnesses the existence of $C' \subsetneq M'$ satisfying
$C' \models (\progtneq{t} \cup \SatRules(\prog_t,\rho))^{M'}$ since
$M$ witnesses a model~$M'\supseteq M$ where
$M' \models \progtneq{t} \cup
\SatRules(\prog_t,\rho)$. 
In consequence, there exists an answer set of $\prog$ if the root
table contains $\langle \emptyset, \emptyfunc, \emptyset \rangle$.  In
order to locally decide rule satisfiability for counterwitnesses, we
require state-programs under witnesses.

\begin{definition}\label{def:bagreduct}%
  Let $\prog$ be a program, $\TTT=(\cdot,\chi)$ be a TD of $I(\prog)$,
  $t$ be a node of $\TTT$, ${\cal P} \subseteq \prog_t$,
  $\rho: \prog_t \rightarrow \NAT_0 \cup \{\infty\}$ be a rule-state
  and $M\subseteq\at(\prog)$.  We define
  \emph{state-program~${\cal P}^{(t,\rho,M)}$} by
  ${[{\cal S}^{(t,\rho)}]}^M$ where
  ${\cal S} \eqdef {\cal P} \cup \{ \leftarrow B_r \mid r \in
  \choice({\cal P}), \rho(r) > 0\}$, and
  $\dot{\cal P}^{(t,\rho,M)}: {\cal P} \rightarrow 2^{{\cal
      P}^{(t,\rho,M)}}$ by
  $\dot{\cal P}^{(t,\rho,M)}(r) \eqdef \{r\}^{(t,\rho,M)}$ for
  $r\in {\cal P}$.
\end{definition}%

\noindent %
We compute a new rule-state~$\rho$ for a counterwitness from an
earlier rule-state, satisfied rules ($\SSR$), and both (a)~``updated''
bounds for weight rules or (b)~``updated'' value representing whether
the head can still be satisfied ($\rho(r)\leq 0$) for choice rules~$r$
($\UpdateRedStates$). Formally,
$\UpdateRedStates(\prog_t, M, C, a) := \sigma'$ depending on an
atom~$a$ with
(a)~$\sigma'(r) \eqdef \wght(r,\{a\} \cap [(B^-_r \setminus M) \cup
(B^+_r \cap C)])$, if $r \in \weight(\prog_t)$; and
(b)~$\Card{\{a\} \cap H_r \cap (M \setminus C)}$, if
$r \in \choice(\prog_t)$.

\begin{algorithm}[t]%
  \KwData{Bag $\chi_t$, bag-rules $\prog_t$, atoms-below $\atto$, child tables $\Tab{}$ of $t$.{~\bf Out:} Tab.~$\tab{t}$.} 
  \tcc{For~$\langle \tuplecolor{\inputPredColor}{M}, \tuplecolor{\statePredColor}{\sigma}, \tuplecolor{\outputPredColor}{\CCC}, c, n
    \rangle$, we only state affected parts (cost~$c$ and
    count~$n$); {\scriptsize``$\dotsc$''} indicates
    computation as before. 
    $\lbag \dotsc \rbag$ denotes a multiset. 
  } 




   %
   %

   \lIf{$\type(t) = \leaf$}{$\tab{t} \eqdef \Big\{ \Big\langle \tuplecolor{\inputPredColor}{\emptyset}, \dotsc, 0, 1 \Big\rangle \Big\}$}
     %
   %
   %
   %
   \uElseIf{$\type(t) = \intr$, $a \in \chi_t \setminus \prog_t$ is introduced and $\tau'\in \Tab{}$}{%

     \makebox[8.01cm][l]{$\tab{t} \eqdef \Big\{ \Big\langle
       \tuplecolor{\inputPredColor}{M}, \dotsc, \oo(\prog, \emptyset,
       \{a\}) + c, n \Big\rangle$}
     $\Bigm|\;\langle \tuplecolor{\inputPredColor}{M}, \tuplecolor{\statePredColor}{\sigma}, \tuplecolor{\outputPredColor}{\CCC},
     c,n \rangle \in \tab{}' \Big\}\;\mcup$

     %
     %
     %
     
     \makebox[8.01cm][l]{$\qquad\;\,\,\Big\{ \Big\langle
       \tuplecolor{\inputPredColor}{\MAI{M}},\dotsc, \oo(\prog, \{a\},
       \{a\}) + c, n \Big\rangle$}
     $\Bigm|\;\langle \tuplecolor{\inputPredColor}{M}, \tuplecolor{\statePredColor}{\sigma}, \tuplecolor{\outputPredColor}{\CCC},
     c,n \rangle \in \tab{}' \Big\}$ \vspace{-0.15em}%
   }%
   %
   %
   \uElseIf{$\type(t) = \intr \text{ or }
     \rem$, removed or introduced $r \in \prog_t$, $\tau'\in
     \Tab{}$}{%
     \makebox[8.01cm][l]{$\tab{t} \eqdef \SB \Big\langle \tuplecolor{\inputPredColor}{M}, \dotsc, c,
       n \Big\rangle$} %
     $\Bigm|\;\langle \tuplecolor{\inputPredColor}{M}, \tuplecolor{\statePredColor}{\sigma}, \tuplecolor{\outputPredColor}{\CCC}, c, n \rangle \in \tab{}', \dotsc
     \Big\}$
     %
   }%
   %
   \uElseIf{$\type(t) = \rem$, $a \notin
     \chi_t$ is removed atom and $\tau'\in \Tab{}$}{%
     \makebox[8.01cm][l]{$\tab{t} \eqdef \cnt(\kmin(\lbag \Big\langle
       \tuplecolor{\inputPredColor}{\MAR{M}}, \dotsc, c, n \Big\rangle$} %
     $\Bigm|\; \langle \tuplecolor{\inputPredColor}{M}, \tuplecolor{\statePredColor}{\sigma}, \tuplecolor{\outputPredColor}{\CCC}, c, n \rangle \in \tab{}'
     \rbag))$ \; \vspace{-0.35em}%
   }%
   %
   %
   \uElseIf{$\type(t) = \join$ and $\tau', \tau'' \in
     \Tab{}$ with $\tau' \neq \tau''$}{%
     \makebox[7.05cm][l]{$\tab{t}\eqdef \cnt(\kmin(\lbag\langle
       \tuplecolor{\inputPredColor}{M}, \dotsc, c' + c'' - \oo(\prog,
       M, \chi_t), n' \cdot n'' \Big\rangle$}\;%
     \makebox[4.5cm][l]{}%
     $\Bigm|\; \langle \tuplecolor{\inputPredColor}{M}, \tuplecolor{\statePredColor}{\sigma'},
     \tuplecolor{\outputPredColor}{\CCC'}, c',n' \rangle \in \tab{}', \langle
     \tuplecolor{\inputPredColor}{M}, \tuplecolor{\statePredColor}{\sigma''}, \tuplecolor{\outputPredColor}{\CCC''}, c'',n''
     \rangle \in \tab{}'' \rbag))$
   }%
   \caption{Algorithm~{\#O\INC}$(t,\chi_t,\prog_t,\atto,\Tab{})$.}
\label{fig:opt}
\end{algorithm}%
%
%
\begin{theorem}\label{thm:inc:correctness}
  The algorithm ${\dpa}_{\INC}$ is correct.
\end{theorem}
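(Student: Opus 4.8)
The plan is to prove correctness of ${\dpa}_{\INC}$ by a bottom-up structural induction along the given nice tree decomposition $\TTT=(T,\chi)$ of $I(\prog)$, $T=(N,\cdot,n)$, showing that the table $\tab{t}$ computed at each node $t$ coincides with an \emph{intended table} $\calI_t$ that makes the informal reading of a triple $\langle M,\sigma,\CCC\rangle$ precise. Concretely, $\langle M,\sigma,\CCC\rangle\in\calI_t$ iff there is a set $\hat M\subseteq\att{t}$ with $\hat M\cap\chi(t)=M$ such that (i)~$\hat M\models\progtneq{t}\cup\SatRules(\prog_t,\sigma)$; (ii)~$\sigma$ is the rule-state \emph{induced} by $\hat M$, meaning $\sigma(r)=\infty$ exactly when $\hat M$ already satisfies $r$ irrespective of the not-yet-seen atoms $\at(r)\setminus\att{t}$, and otherwise $\sigma(r)$ records the accumulated body-weight so far for a weight rule resp.\ an indicator of whether the head is still permitted for a choice rule, in the manner of Definition~\ref{def:bagprogram}; and (iii)~$\CCC$ is \emph{exactly} the set of all pairs $(C,\rho)$ for which some $\hat C\subseteq\att{t}$ with $\hat C\cap\chi(t)=C$ and $\hat C\subsetneq\hat M$ satisfies $\hat C\models(\progtneq{t}\cup\SatRules(\prog_t,\rho))^{\hat M}$, with $\rho$ the rule-state induced by $\hat C$ under $\hat M$ in the sense of Definition~\ref{def:bagreduct}. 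At the root this degenerates to: $\langle\emptyset,\emptyfunc,\CCC\rangle\in\calI_n$ iff there is $M'\models\prog$ and $\CCC$ is the set of all $M''\subsetneq M'$ with $M''\models\prog^{M'}$. I would prove $\tab{t}=\calI_t$ by establishing both inclusions (soundness $\tab{t}\subseteq\calI_t$ and completeness $\calI_t\subseteq\tab{t}$) simultaneously.

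The base case is the leaf, where $\chi(t)=\prog_t=\att{t}=\emptyset$: the only candidate is $\hat M=\emptyset$, which has no proper subset, so $\tab{t}=\{\langle\emptyset,\emptyfunc,\emptyset\rangle\}=\calI_t$. For the inductive step I would treat the five node types of a nice TD of an incidence graph. \emph{Introduce atom $a$}: by connectedness $a$ occurs in no bag below $t$, so each $\hat M$ at $t$ arises from a child-$\hat M$ by a genuine guess on $a$; the update $\sigma\squplus\SSR(\dot\prog_t^{(t,\sigma)},\cdot)$ flips exactly the bag-rules completed by $a$ to ``satisfied'', matching the change of the state-programs induced by enlarging $\att{t'}$ to $\att{t}$, and the three counterwitness families produced in the introduce-atom case of Algorithm~\ref{fig:incinc} realize ``$\hat C$ is the child witness $M$'' (admissible since $M\subsetneq\MAI{M}$), ``$a\in\hat C$'', and ``$a\notin\hat C$''. \emph{Introduce rule $r$}: this only extends the domain of each rule-state by $(r,0)$ via $\sigma\mapsto\MAIR{\sigma}{r}$ and evaluates $r$'s state-program against the current $M$ resp.\ $C$, directly by Definitions~\ref{def:bagprogram} and~\ref{def:bagreduct}. \emph{Remove atom $a$}: $\UpdateStates$ and $\UpdateRedStates$ fold the final contribution of $a$ (its body-weight for a weight rule, its head-indicator for a choice rule) into the rule-states before $a$ disappears, with connectedness ensuring this contribution is final. \emph{Remove rule $r$}: this is where satisfaction of $r$ is enforced, keeping only tuples with $\sigma(r)=\infty$ (and counterwitnesses with $\rho(r)=\infty$) and then deleting $r$ from the domain, so $r$ passes from a pending bag-rule of $\prog_t$ to a rule of the strictly-below program that $\hat M$ (resp.\ $\hat C$ in the reduct) must henceforth satisfy. \emph{Join}: witnesses are matched on $M$ and their rule-states combined via $\squplus$, which is correct because $\progtneq{t}=\progtneq{t'}\cup\progtneq{t''}$, a strictly-below rule already satisfied in one branch stays satisfied, and for a still-pending bag-rule the masses contributed by the two branches add (disjointly, by connectedness); for counterwitnesses one takes $(C,\rho'\squplus\rho'')$ over pairs agreeing on $C$ together with the extra families $(M,\rho\squplus\sigma'')$ and $(M,\sigma'\squplus\rho)$, covering the case where the joint counterwitness $\hat C$ coincides with $\hat M$ on one child's subtree (so it is ``invisible'' there) yet is a proper subset of $\hat M$ overall.

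Once $\tab{n}=\calI_n$ is established, the theorem follows: $\langle\emptyset,\emptyfunc,\emptyset\rangle\in\tab{n}$ iff there is a model $M'$ of $\prog$ (using $\progtneq{n}=\prog$ and $\prog_n=\emptyset$) admitting no $M''\subsetneq M'$ with $M''\models\prog^{M'}$, i.e.\ iff $M'$ is an answer set of $\prog$, which is precisely the acceptance criterion of ${\dpa}_{\INC}$; the same invariant restricted to the first two tuple positions yields correctness of $\dpa_{\INCSAT}$, and the characterization $\tab{t}=\calI_t$ also justifies reconstructing answer sets from the tables.

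The main obstacle is the bookkeeping for the counterwitness component. Since a table stores only the bag-projection $M$, not the global extension $\hat M$, one must show that the existential over $\hat M$ in $\calI_t$ can always be realized by a single $\hat M$ that witnesses $\sigma$ \emph{and} every $(C,\rho)\in\CCC$ simultaneously; threading this ``shared-witness'' argument through all five operations is the crux, and it is most delicate at the join (the extra families $(M,\rho\squplus\sigma'')$, $(M,\sigma'\squplus\rho)$ exist for exactly this reason) and at the introduce-atom step. A second subtle point is the behaviour of choice rules inside the GL reduct, where a choice rule degenerates into a family of definite rules and therefore \emph{does} constrain counterwitnesses although it never constrains witnesses; this is why the state-programs of Definitions~\ref{def:bagprogram} and~\ref{def:bagreduct} add the constraints $\{\leftarrow B_r\mid r\in\choice(\prog_t),\ H_r\subsetneq\att{t}\}$ resp.\ $\{\leftarrow B_r\mid \rho(r)>0\}$, and verifying that these corner cases interact correctly with the rule-state updates on introduce and remove nodes is the technically heaviest part. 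By contrast, the base case, introduce/remove of rules, removal of atoms, and the root step are comparatively routine applications of the tree-decomposition connectedness condition.
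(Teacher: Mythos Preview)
Your proposal is correct and follows essentially the same route as the paper: the paper introduces the notions of \emph{partial model}, \emph{partial solution} (a triple $(\hat M,\hat\sigma,\hat{\CCC})$ with $\hat{\CCC}$ the set of all sub-partial-models) and \emph{local partial solution} (its bag projection), proves that the root has a local partial solution $\langle\emptyset,\emptyfunc,\emptyset\rangle$ iff $\prog$ has an answer set, and then establishes soundness and completeness of $\INC$ by an induction over the node types—exactly your $\calI_t$ invariant with the layers given separate names. Your identification of the shared-witness issue at join nodes and the choice-rule corner cases in the reduct matches the technical core of the paper's (sketched) argument.
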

\begin{proof}{(Idea)}
  A tuple at a node~$t$ guarantees that there exists a model for the
  ASP sub-program induced by the subtree rooted at~$t$. Since this can
  be done for each node type, we obtain soundness.  Completeness
  follows from the fact that while traversing the tree decomposition
  every answer set is indeed considered. The full proof is rather
  tedious as each node type needs to be investigated separately. For
  more details, we refer the reader to Appendix~\ref{proof:thm:inc:correctness}.
\end{proof}

\begin{theorem}\label{thm:inc:runtime}
  Given a program~$\prog$, algorithm~${\dpa}_{\INC}$ runs in time
  $\bigO{2^{2^{k +2}\cdot\ell^{k+1}}\cdot \CCard{I(\prog)}}$, where
  $k\eqdef\tw{I(\prog)}$, and
  $\ell\eqdef max\SB 3, \bnd(r) \SM r \in \weight(\prog)\SE$.
\end{theorem}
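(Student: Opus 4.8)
The plan is to bound the running time of $\dpa_{\INC}$ by the product of two quantities: the number of nodes of a nice tree decomposition of $I(\prog)$, and the work $\INC$ performs at a single node; the latter is in turn governed by the sizes of the tables involved. Correctness is not part of this statement (it is Theorem~\ref{thm:inc:correctness}), so only the time analysis is needed, and it parallels the analysis behind Theorem~\ref{thm:prim:runtime}, the sole new ingredient being the $\ell^{k+1}$ factor contributed by the rule-states. First I would recall from Section~\ref{sec:preliminaries} that a nice TD $\TTT=(T,\chi)$ of $I(\prog)$ of width $k=\tw{I(\prog)}$ can be computed in time $2^{\bigO{k^3}}\cdot\CCard{I(\prog)}$ and has $\bigO{k\cdot\CCard{I(\prog)}}$ nodes; since $\ell\ge 3$, both $2^{\bigO{k^3}}$ and any factor depending only on $k$ and $\ell$ are dwarfed by the target bound, so it suffices to show that $\INC$ spends time $2^{2^{k+2}\ell^{k+1}}$ per node up to a factor polynomial in (and, with a careful incremental implementation, charged so as to be linear in) $\CCard{I(\prog)}$.

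The core of the proof is a bound on $\card{\tab{t}}$. A bag has $\card{\chi(t)}\le k+1$ vertices of $I(\prog)$, hence at most $k+1$ atoms and at most $k+1$ rules; so there are at most $2^{k+1}$ candidate witnesses $M$ and at most $2^{k+1}$ candidate counterwitnesses $C$. For the rule-states I would argue that each bag-rule contributes only a bounded number of \emph{relevant} values. A disjunctive rule only ever has value $0$ (``undecided'') or $\infty$ (``satisfied''); a choice rule $r$, by Definitions~\ref{def:bagprogram} and~\ref{def:bagreduct}, influences the state-program machinery only through the presence/absence of the constraint $\leftarrow B_r$, i.e.\ only the distinction $\rho(r)=0$ versus $\rho(r)>0$ ever matters; and for a weight rule $r$ the value of $\sigma(r)$ enters exclusively via the updated bound $\max\{0,\bnd(r)-\sigma(r)-\wght(r,\at(r)\setminus\atto)\}$, so it may be capped at $\bnd(r)\le\ell$, leaving at most $\ell$ relevant values. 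Hence there are at most $\ell^{k+1}$ relevant rule-states $\sigma$ (resp.\ $\rho$), at most $2^{k+1}\ell^{k+1}$ pairs $(C,\rho)$, and therefore at most $2^{\,2^{k+1}\ell^{k+1}}$ possible families $\CCC$; multiplying by the $2^{k+1}\ell^{k+1}$ choices of $(M,\sigma)$ gives $\card{\tab{t}}\le 2^{\,2^{k+1}\ell^{k+1}}\cdot 2^{\bigO{k\log\ell}}$, and the same holds for every child table.

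Next I would bound the per-node work. Building the state-programs $\dot\prog_t^{(t,\sigma)}$ and $\dot\prog_t^{(t,\rho,M)}$, the sets of satisfied bag-rules, the maps $\SSR$, $\UpdateStates$, $\UpdateRedStates$, and deciding $M\models\dot\prog_t^{(\cdot)}$ are all polynomial in $\CCard{I(\prog)}$, and charging this to the part of $\prog$ first seen at $t$ makes the total over the whole traversal linear in $\CCard{I(\prog)}$. For leaf, introduce and remove nodes, $\INC$ scans the single child table once and does a pass over $\CCC$ (of size $\le 2^{k+1}\ell^{k+1}$) per tuple, costing $2^{\,2^{k+1}\ell^{k+1}}\cdot 2^{\bigO{k\log\ell}}$ up to the linear factor. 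The join node is the dominant one: it iterates over pairs $\langle M,\sigma',\CCC'\rangle\in\tau'$, $\langle M,\sigma'',\CCC''\rangle\in\tau''$ sharing a witness $M$ and, for each such pair, forms the new family from $\CCC'\times\CCC''$; this is at most $\card{\tau'}\cdot\card{\tau''}\cdot 2^{\bigO{k\log\ell}}\le 2^{\,2^{k+2}\ell^{k+1}}\cdot 2^{\bigO{k\log\ell}}$ up to the linear factor — here the squaring of the table size is exactly what turns the ``natural'' inner exponent $2^{k+1}\ell^{k+1}$ into $2^{k+2}\ell^{k+1}$. Multiplying by the $\bigO{k\cdot\CCard{I(\prog)}}$ nodes of $\TTT$ and absorbing the remaining factors polynomial in $k$ and $\ell$ into the double-exponential (as is standard for such parameterized bounds) yields the claimed $\bigO{2^{2^{k+2}\ell^{k+1}}\cdot\CCard{I(\prog)}}$.

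The main obstacle, and the only non-routine step, is the second paragraph: pinning down exactly which distinct values of $\sigma(r)$ and $\rho(r)$ can produce genuinely different behaviour of the state-program machinery of Definitions~\ref{def:bagprogram} and~\ref{def:bagreduct} for weight and for choice rules, and then verifying that a single pass through a join node — which combines two tables and hence squares their sizes — leaves the tower at $2^{2^{k+2}\ell^{k+1}}$ rather than $2^{2^{k+3}\ell^{k+1}}$; this is precisely why the table bound must have the constant $1$ in the inner exponent $2^{k+1}\ell^{k+1}$, and why the $\max\{3,\cdot\}$ in the definition of $\ell$ is convenient, as it gives enough slack to swallow the $2^{\bigO{k\log\ell}}$ side factors. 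All other estimates are elementary bookkeeping.
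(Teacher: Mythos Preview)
Your proposal is correct and follows essentially the same approach as the paper: the paper isolates the table-size bound $\card{\tab{t}}\le 2^{k+1}\cdot\ell^{k+1}\cdot 2^{2^{k+1}\ell^{k+1}}$ as a separate proposition (with the same case split on disjunctive/choice/weight rules for the number of relevant rule-state values), then multiplies by the linear number of nice-TD nodes and absorbs the outer $2^{k+1}\ell^{k+1}$ factor into the tower to reach $2^{2^{k+2}\ell^{k+1}}$. Your explicit discussion of the join node squaring the table size is a refinement the paper leaves implicit, but the core argument is identical.
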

\shortversion{%
  \vspace{-0.25em}
} %
\begin{proof}
  We refer the reader to Appendix~\ref{proof:inc:runtime}.
\end{proof}

\smallskip\noindent The runtime bounds stated in
Theorem~\ref{thm:inc:runtime} appear to be worse than in
Theorem~\ref{thm:prim:runtime}. However,
$\tw{I(\prog)} \leq \tw{P(\prog)} + 1$ and
$\tw{P(\prog)} \geq \max \SB \Card{\at(r)} \SM r \in \prog \SE$ for a
given program~$\prog$. Further, there are programs where
$\tw{I(\prog)}=1$, but $\tw{P(\prog)}=k$,~e.g., a program consisting
of a single rule~$r$ with $\Card{\at(r)} = k$.
Consequently, worst-case runtime bounds of~${\dpa}_{\PRIM}$ are at
least double-\allowbreak{}exponential in the rule size and
${\dpa}_{\PRIM}$ will perform worse than ${\dpa}_{\INC}$ on input
programs containing large rules. However, due to the rule-states, data
structures of ${\dpa}_{\INC}$ are much more complex than of
${\dpa}_{\PRIM}$. In consequence, we expect ${\dpa}_{\PRIM}$ to
perform better in practice if rules are small and incidence and
primal treewidth are therefore almost equal.
In summary, we have a trade-off between (i)~a more general parameter
decreasing the theoretical worst-case runtime and (ii)~less
complex data structures decreasing the practical overhead to solve 
\AspComp.
\newcommand{\AAA}{\ensuremath{\mathcal{A}}}%
\subsection{Extensions for Optimization and Counting}\label{sec:alg:ext}
In order to find an answer set of a program with optimization
statements or the number of optimal answer sets (\AspCountO), we
extend our algorithms \PRIM and \INC. Therefore, we augment tuples
stored in tables with an integers~$c$ and $n$ describing the cost and
the number of witnessed sets. Due to space restrictions, we only
present adaptions for~\INC.
\begin{figure*}[t]
\centering
\includegraphics[scale=0.242]{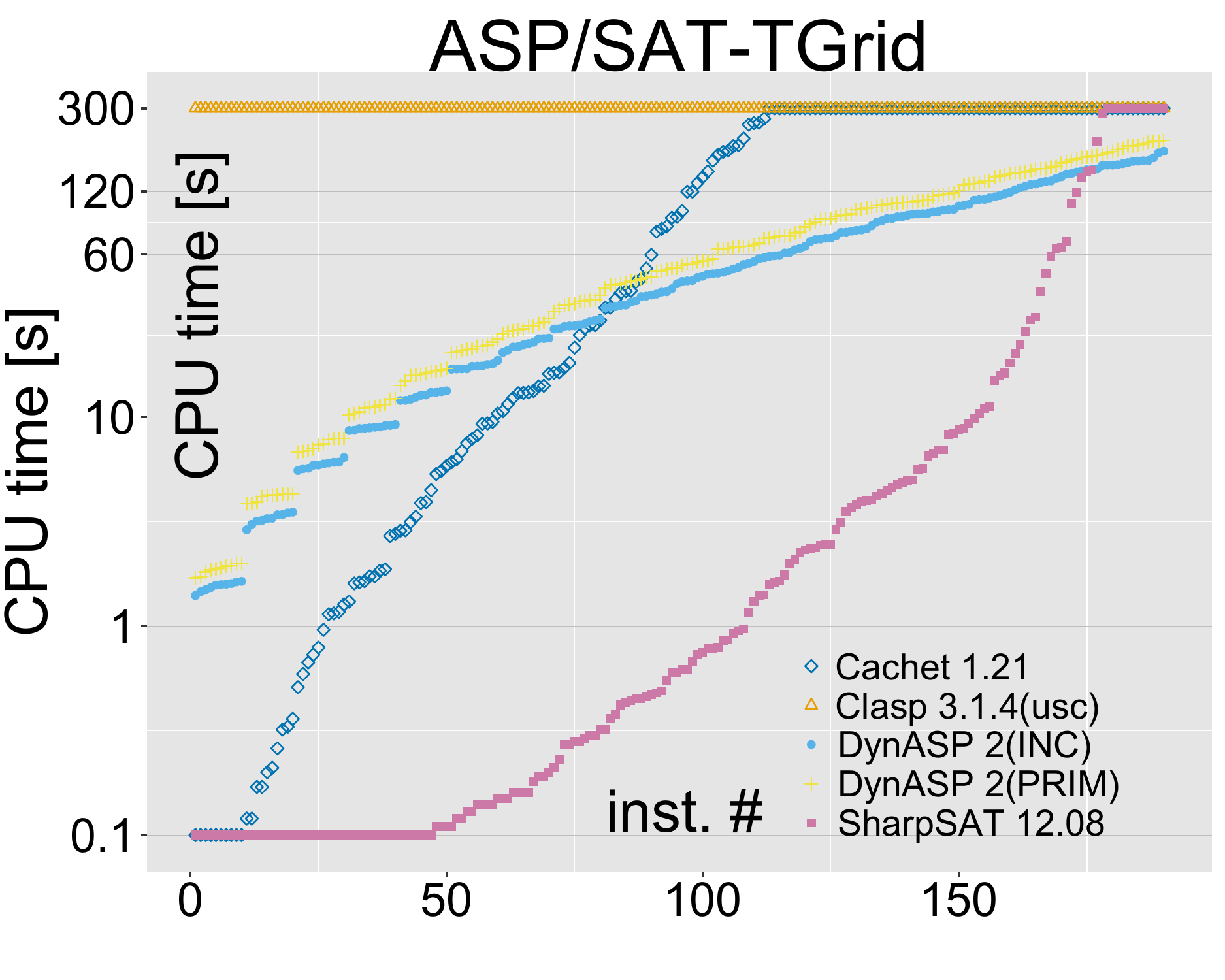}
\includegraphics[scale=0.242]{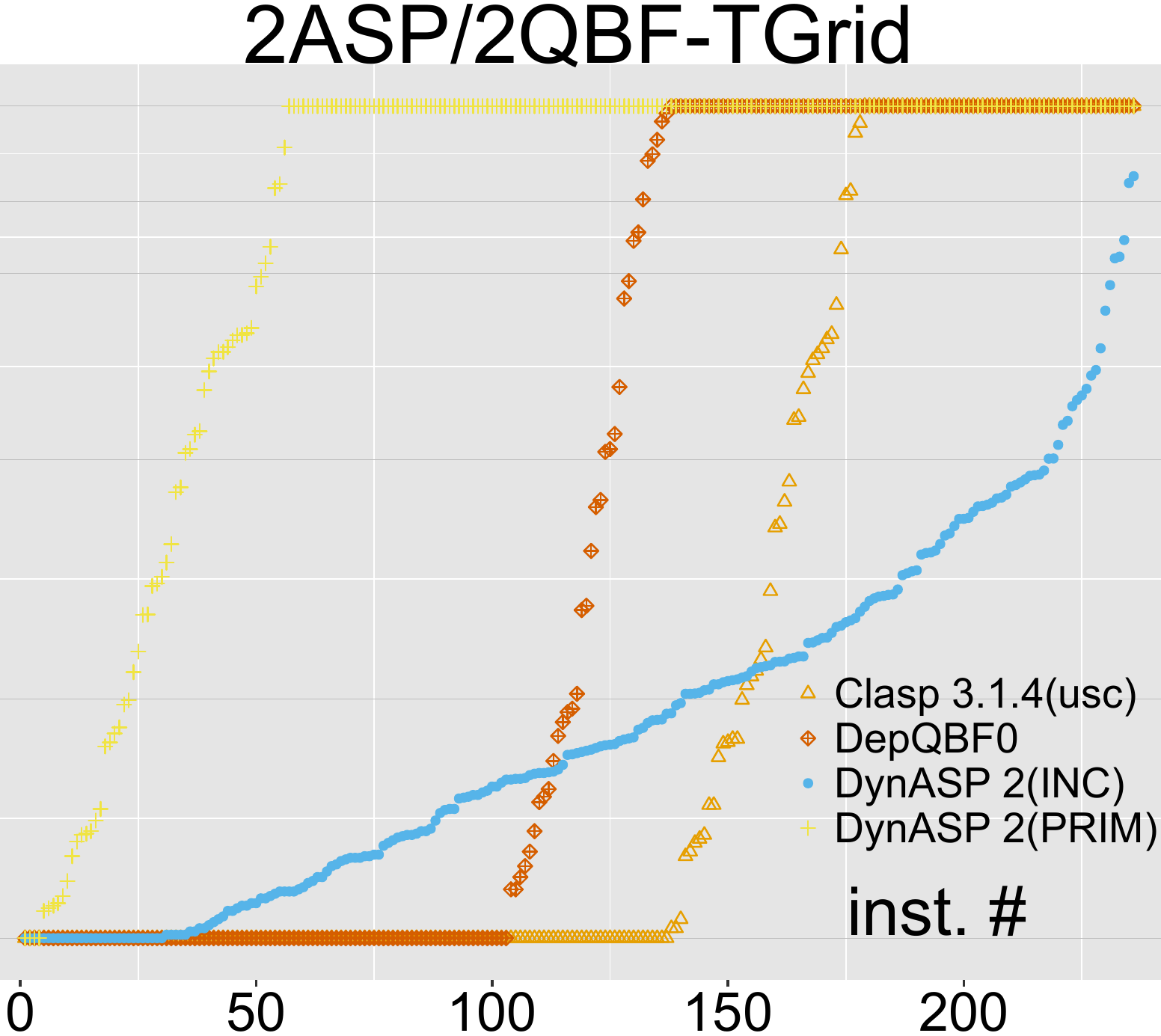}
\includegraphics[scale=0.242]{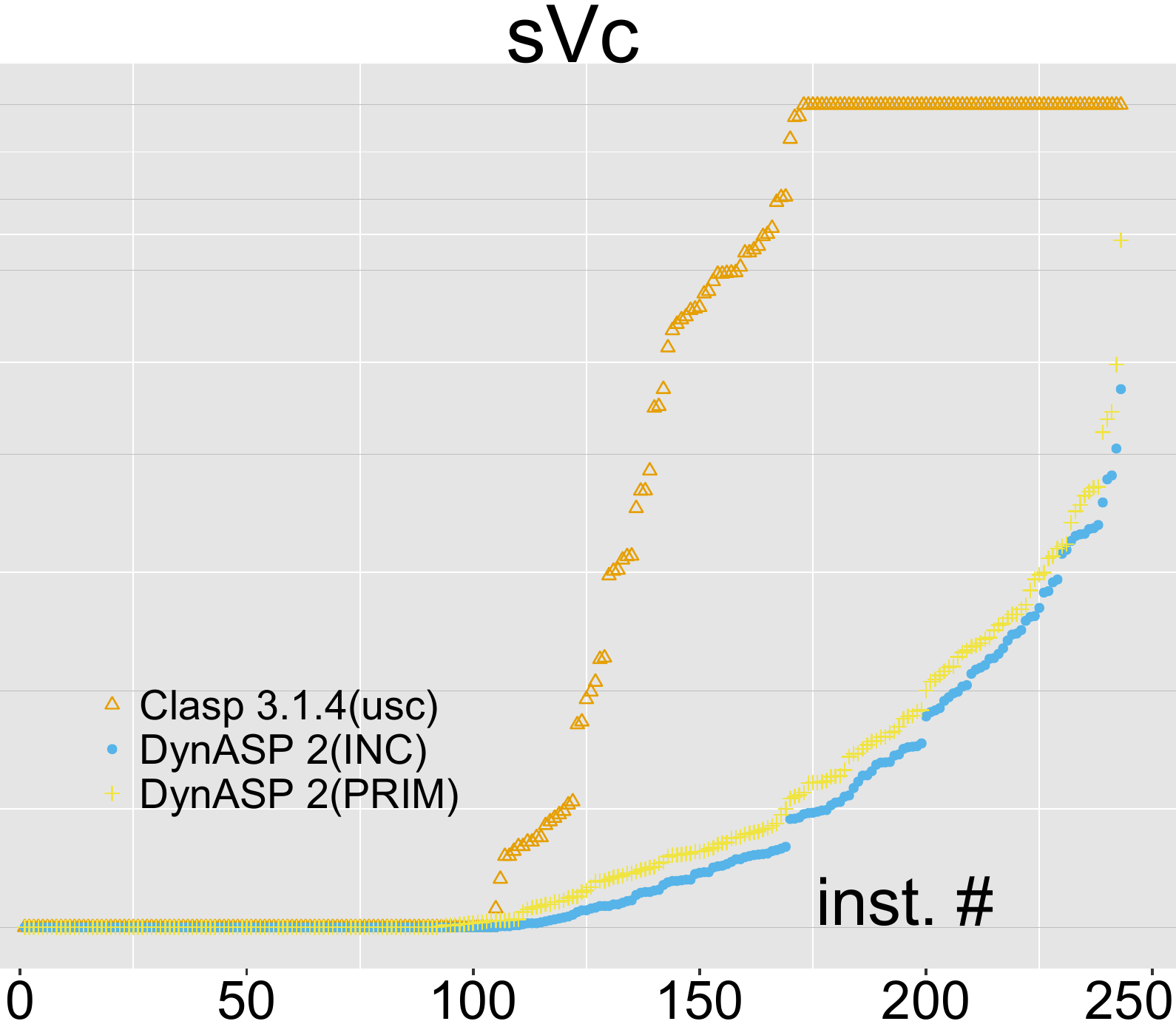}
\caption{Results of randomly generated and selected real-world instances.}
  \label{fig:random}
\end{figure*}%
We state which parts of \INC we adapt to compute the number of optimal
answer sets in Algorithm~\ref{fig:opt} ({\#O\INC}).
To slightly simplify the presentation of optimization rules, we assume
without loss of generality that whenever an atom~$a$ is introduced in
bag~$\chi(t)$ for some node~$t$ of the TD, the optimization rule~$r$,
where $a$ occurs, belongs to the bag~$\chi(t)$.
First, we explain how to handle costs making use of function
$\oo(\prog, M, A)$ as defined in Section~\ref{sec:preliminaries}.
In a leaf (Line~1) we set the (current) cost to~$0$.  If we introduce
an atom~$a$ (Line~2--4) the cost depends on whether $a$ is set to true
or false in $M$ and we add the cost of the ``child'' tuple. Removal of
rules (Line~5--6) is trivial, as we only store the same values. If we
remove an atom (Line~7--8), we compute the minimum costs only for
tuples~$\langle \MAR{M}, \sigma, \CCC, c, n \rangle$ where $c$ is
minimal among $\MAR{M}$, $\sigma$, $\CCC$, that is, for a
multiset~$\SSS$ we let
$\kmin(\SSS) \eqdef \lbag \langle \MAR{M}, \sigma, \CCC, c, n \rangle
\SM c = \min \SB c' : \langle \MAR{M}, \sigma, \CCC, c', \cdot \rangle
\in \SSS \SE, \langle \MAR{M}, \sigma, \CCC, c, n \rangle \in \SSS
\rbag$. We require a multiset notation for counting (see below). If we
join two nodes (Line~9--11), we compute the minimum value in the table
of one child plus the minimum value of the table of the other child
minus the value of the cost for the current bag, which is exactly the
value we added twice.
Next, we explain how to handle the number of witnessed sets that are
minimal with respect to the cost. In a leaf (Line~1), we set the
counter to~$1$. If we introduce/remove a rule or introduce an atom
(Line~2--6), we can simply take the number~$n$ from the child. If we
remove an atom (Line~7--8) we first obtain a multiset from
computing~$\kmin$, which can contain several tuples for
$\MAR{M},\sigma,\CCC,c$ as we obtained $\MAR{M}$ either from
$M \setminus \{a\}$ if $a \in M$ or $M$ if $a \notin M$ giving rise
multiple solutions, that is,
$\cnt(\SSS) \eqdef \SB \langle M,\sigma,\CCC,c,\sum_{\langle
  M,\sigma,\CCC,c,n'\rangle \in \SSS} n' \rangle \SM \langle M,\sigma,
\CCC,c, n\rangle \in \SSS \SE$. If we join nodes (Line~7--9), we
multiply the number~$n'$ from the tuple of one child with the
number~$n''$ from the tuple of the other child, restrict results with
respect to minimum costs, and sum up the resulting numbers.

\begin{corollary}\label{cor:oinc}
  Given a program~$\prog$, algorithm~{\normalfont{${\#O\INC}$}} runs
  in time
  $\bigO{log(m) \cdot 2^{2^{k+2}\cdot \ell^{k+1}}
    \CCard{I(\prog)}^2}$, where $k\eqdef \tw{I(\prog)}$,
  $\ell \eqdef max\{3, \bnd(r) : r \in \weight(\prog)\}$, and
  $m \eqdef \Sigma_{r \in \opt(\prog)}\wght(r)$.
\end{corollary}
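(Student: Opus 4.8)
The plan is to lift both the correctness (Theorem~\ref{thm:inc:correctness}) and the running-time (Theorem~\ref{thm:inc:runtime}) of $\dpa_{\INC}$ to the augmented tuples $\langle M,\sigma,\CCC,c,n\rangle$ produced by ${\#O\INC}$ (Algorithm~\ref{fig:opt}), using the convention that an optimization rule lies in every bag that introduces one of its atoms. For correctness I would strengthen the invariant of Theorem~\ref{thm:inc:correctness} by two clauses about a tuple occurring at node $t$: (i)~$c$ equals the minimum of $\oo(\prog,M',\atto)$ over all models $M'\supseteq M$ of $\progtneq{t}\cup\SatRules(\prog_t,\sigma)$ that are witnessed by $(M,\sigma,\CCC)$ in the sense of Theorem~\ref{thm:inc:correctness} (well defined because, by the locality convention, the contribution of each atom $a$ to the cost is added exactly once, namely when $a$ is introduced), and (ii)~$n$ is the number of such witnessed $M'$, counted up to their restriction to $\at(\prog)$ so that every candidate answer set is accounted for exactly once, whose cost attains that minimum.

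I would then check clauses (i) and (ii) node type by node type, in parallel with the proof of Theorem~\ref{thm:inc:correctness}. In a leaf we set $c=0$ and $n=1$; introducing an atom $a$ adds $\oo(\prog,\{a\},\{a\})$ or $\oo(\prog,\emptyset,\{a\})$ to $c$ and leaves $n$ unchanged; introducing or removing a rule changes neither; removing an atom $a$ first applies $\kmin$, which discards the sub-optimal results of merging the $a\in M$ and $a\notin M$ branches into the same $\MAR{M}$, and then $\cnt$, which sums the counts of the surviving equal-(minimal-)cost merges; and a join node adds the two costs, subtracts the once-double-counted bag contribution $\oo(\prog,M,\chi_t)$, and multiplies the two counts, which is sound because the connectedness condition makes the two subtrees' extensions independent once they agree on $M$. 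Applying the invariant at the root (empty bag) then yields exactly the number of optimal answer sets, so ${\#O\INC}$ is correct.

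For the running time I would start from Theorem~\ref{thm:inc:runtime}: the nice TD has $\bigO{\CCard{I(\prog)}}$ nodes, and the number of triples $(M,\sigma,\CCC)$ per node—hence the number of tuples in every table after the $\cnt(\kmin(\cdot))$ normalisation, which keeps exactly one tuple per triple—is still $\bigO{2^{2^{k+2}\cdot\ell^{k+1}}}$. The only place a multiset strictly larger than a table arises is a join, where it has size at most the product of the two child-table sizes, i.e.\ at most the square of that bound, which is absorbed into the same expression $2^{2^{k+2}\cdot\ell^{k+1}}$ by adjusting the constants in the exponent (exactly as for the join case in the proof of Theorem~\ref{thm:inc:runtime}). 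Each tuple now additionally carries a cost $c\in\{0,\dots,m\}$, i.e.\ $\bigO{\log m}$ bits, and a count $n$ bounded by the number of answer sets, so $n\le 2^{\card{\at(\prog)}}\le 2^{\CCard{I(\prog)}}$, i.e.\ $\bigO{\CCard{I(\prog)}}$ bits; comparing/minimising costs and adding or multiplying counts costs only polynomially in these bit lengths, i.e.\ $\bigO{\log m}$ and $\bigO{\CCard{I(\prog)}}$ per tuple. Multiplying the $\bigO{\CCard{I(\prog)}}$ nodes, the $\bigO{2^{2^{k+2}\cdot\ell^{k+1}}}$ processed (multiset) entries per node, and the $\bigO{\log m+\CCard{I(\prog)}}$ additional per-entry work yields a total of $\bigO{\log(m)\cdot 2^{2^{k+2}\cdot\ell^{k+1}}\cdot\CCard{I(\prog)}^2}$, as claimed.

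The main obstacle I anticipate is the correctness of the counter $n$: the multiplication $n'\cdot n''$ at join nodes and the $\kmin/\cnt$ bookkeeping at removed-atom nodes are only valid relative to a carefully phrased notion of which extensions a tuple "represents", and one must verify that this notion is preserved by every operation (in particular that discarding tuples of non-minimal cost is safe, since any optimal answer set extending them would be dominated, and that no candidate is double-counted across the two join branches). As in Theorem~\ref{thm:inc:correctness}, making this precise is rather tedious and must be done separately for each node type; the running-time part, by contrast, is a routine bookkeeping argument on top of the bound already established for $\dpa_{\INC}$.
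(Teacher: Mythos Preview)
The paper states Corollary~\ref{cor:oinc} without proof; the only related material is the appendix sketch for the \emph{correctness} of $\dpa_{\#O\INC}$, which proceeds exactly as you propose (extend Definitions~\ref{def:partial-model}--\ref{def:localpartialsol} by cost and counter, and redo the induction of Propositions~\ref{thm:soundness} and~\ref{prop:completeness}). Your runtime argument---reuse the table-size bound from Proposition~\ref{prop:inc:space}/Theorem~\ref{thm:inc:runtime}, then charge $\bigO{\log m}$ bits for the cost component and $\bigO{\CCard{I(\prog)}}$ bits for the counter, times $\bigO{\CCard{I(\prog)}}$ nodes---is the intended derivation and yields the stated bound.

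One small point worth tightening: you write that adding/multiplying counts costs $\bigO{\CCard{I(\prog)}}$ ``per tuple''. Addition is linear in the bit length, but a multiplication of two $\bigO{\CCard{I(\prog)}}$-bit integers is not; with schoolbook arithmetic this would give an extra $\CCard{I(\prog)}$ factor. The stated bound still goes through because multiplications occur only at join nodes, and the number of join nodes in a nice TD is $\bigO{\CCard{I(\prog)}}$ while the per-node table work already absorbs one $\CCard{I(\prog)}$ factor; alternatively one hides a polylog in the $\bigO{\cdot}$. Either way the approach is correct, just make that step explicit. The caveat you flag about $\kmin$ being safe for counting (an optimal answer set can never sit in a locally non-minimal-cost tuple, because a cheaper tuple with the same $(M,\sigma,\CCC)$ has exactly the same set of admissible future extensions) is the right justification and matches what the paper's appendix sketch asks one to verify.
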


\section{Experimental Evaluation}
\label{sec:evaluation}

We implemented the algorithms~${{\dpa}_{\PRIM}}$ and~${{\dpa}_{\INC}}$
into a prototypical solver {\dynasp{\cdot}} and performed experiments
to evaluate its runtime behavior.  Clearly, we \emph{cannot} hope to
solve programs with graph representations of high treewidth. However,
programs involving real-world graphs such as graph problems on transit
graphs admit TDs of small width.
We used both random and structured instances for our benchmarks. We
refer to Appendix~\ref{appendix:experiments} for instance, machine and
solver configurations and descriptions.
The random instances (\pname{Sat-TGrid}, \pname{2QBF-TGrid},
\pname{ASP-TGrid}, \pname{2ASP-TGrid}) were designed to have a high
number of variables and solutions and treewidth at most three.
The structured instances model various graph problems (\pname{2Col},
\pname{3Col}, \pname{Ds}, \pname{St} \pname{cVc}, \pname{sVc}) on real
world mass transit graphs.
For a graph, program \pname{2Col} counts all 2-colorings, \pname{3Col}
counts all 3-colorings, \pname{Ds} counts all minimal dominating sets,
\pname{St} counts all Steiner trees, \pname{cVc} counts all
cardinality-minimal vertex covers, and \pname{sVc} counts all
subset-minimal vertex covers.
\newcommand{\blahtab}[1]{{\tiny{#1}}}
\begin{table}[t]
  \centering
  \small
  \begin{tabular}{@{}l@{\hspace{0.1em}}|@{\hspace{0.2em}}
	r@{\hspace{0.2em}}r@{\hspace{0.3em}}
	r@{}			  r@{\hspace{0.3em}}
	r@{\hspace{0.2em}}r@{\hspace{0.3em}}
	r@{\hspace{0.2em}}r@{\hspace{0.3em}}
	r@{\hspace{0.2em}}r@{\hspace{0.3em}}
	r@{\hspace{0.2em}}r@{}}
    \toprule 
            & \multicolumn{2}{c}{\pname{2Col}} & 
					\multicolumn{2}{c}{\pname{3Col}} & 
					\multicolumn{2}{c}{\pname{Ds}} & 
					\multicolumn{2}{c}{\pname{St}} & 
					\multicolumn{2}{c}{\pname{cVc}} & 
					\multicolumn{2}{c}{\pname{sVc}}\\ 
    \midrule
    {\blahtab{Clasp(usc)}} &  31.72 & (21) & 0.10 &(0)  & 8.99 & (3) & 0.21 & (0)  & 29.88 & (21) & 98.34 & (71) \\
    \blahtab{DynASP2(PRIM)} & 1.54 &(0) & 0.53 &(0) & 0.68 & (0) & 79.36 & (221) & 0.99 & (0) & 1.30 & (0)\\
    \blahtab{DynASP2(INC)} & 1.43 &(0) & 0.58 &(0) & 0.54 & (0) & 115.02 & (498) & 0.68 & (0) & 0.78 & (0) \\
    \bottomrule
  \end{tabular}
  \caption{Runtimes (given in sec.; \#timeouts in brackets) on real-world instances.}
  \label{tab:real_world}
  \vspace{-1.0em}
\end{table}%
In order to draw conclusions about the efficiency of DynASP2, we
mainly inspected the cpu running time and number of timeouts using the
average over three runs per instance (three fixed seeds allow certain
variance~\cite{AbseherEtAl15a} for heuristic TD computation).  We
limited available memory (RAM) to 4GB (to run \sharpsat on large
instances), and cpu time to 300 seconds, and then compared DynASP2
with the dedicated \#SAT solvers \sharpsat~\cite{sat:Thurley06} and
Cachet~\cite{sat:SangBBKP04}, the QBF solver \depqbfz, and the ASP
solver
Clasp\shortversion{.}\longversion{~\cite{GebserKaufmannSchaub12a}.}
Figure~\ref{fig:random} illustrates runtime results as a cactus plot.
Table~\ref{tab:real_world} reports on the average running times,
numbers of solved instances and timeouts on the structured instance
sets.
\vspace{-0.6em}
\paragraph{Summary.}
Our empirical benchmark results confirm that DynASP2 exhibits
competitive runtime behavior if the input instance has small
treewidth. Compared to state-of-the-art \ASP and \QBF solvers, DynASP2
has an advantage in case of many solutions, whereas Clasp and \depqbfz
perform well if the number of solutions is relatively small. However,
DynASP2 is still reasonably fast on structured instances with few
solutions as it yields the result mostly within less than 10 seconds. We
observed that {\INC} seems to be the better algorithm in our setting, indicating
that the smaller width obtained by decomposing the incidence graph generally
outweighs the benefits of simpler solving algorithms for the primal graph.
However, if {\INC} and {\PRIM} run with graphs of similar width, {\PRIM}
benefits from its simplicity. A comparison to existing \#SAT solvers suggests
that, on random instances, they have a lower overhead (which is not surprising,
since our algorithms are built for ASP), but, after about 150 seconds, our
algorithms were still able to solve more instances than all other \#SAT
competitors.

\section{Conclusion}
\label{sec:conclusions}
In this paper, we presented novel DP algorithms for ASP, extending
previous work~\cite{JaklPichlerWoltran09} in order to cover the full
ASP syntax. Our algorithms are based on two graph representations of
programs and run in linear time with respect to the treewidth of these
graphs and weights used in the program.  Experiments indicate that our
approach seems to be suitable for practical use, at least for certain
classes of instances with low treewidth, and hence could fit into a
portfolio-based solver.

%
\bibliographystyle{plain}
\longversion{%

%
}
\shortversion{%
  
}

\newpage
\appendix

\section{Additional Examples}

In the following example, we briefly describe how we compute
counterwitnesses using Algorithm~\ref{fig:incinc} (\algo{INC}) for
selected interesting cases. The example is similar to
Example~\ref{ex:prim:min}, which, however, describes handling
counterwitnesses for Algorithm \algo{PRIM}.

\begin{example}
  We consider $\prog$ of Example~\ref{ex:running1} and
  ${\cal T}'=(\cdot,\chi)$ of Figure~\ref{fig:running1_inc} and
  explain how we compute tables~$\tab{1}$, $\ldots$, $\tab{18}$ in
  Figure~\ref{fig:running1_inc}~(right) 
  using $\dpa_\INC$.
  Table~$\tab{1}=\langle \emptyset, \emptyfunc, \emptyset \rangle$ as
  $\type(t_1)=\leaf$. Node~$t_2$ introduces atom~$c$, resulting in
  table
  $\{\langle \{c\}, \emptyfunc, \{(\emptyset, \emptyfunc)\} \rangle,
  \langle \emptyset, \emptyfunc, \emptyset \rangle \}$.  Then,
  node~$t_3$ introduces rule~$r_1$ and node~$t_4$ introduces
  rule~$r_2$. As a result, table~$\tab{4}$ additionally contains
  computed rule-states (see~$\SSR$) for witnesses and counterwitnesses
  of~$\tab{3}$.  Node~$t_5$ introduces atom~$b$, while~$t_6$
  removes~$b$.
  Next, we focus on table~$\tab{6}$, since rule-states for
  counterwitnesses require updates for choice rule~$r_1$
  (see~$\UpdateRedStates$). Witness $M_{6.2}=\{c\}$ is obtained by
  extending some witness~$M_{5.i}\supseteq\{b\}$ of~$\tab{5}$.  For
  counter\-witness~$C_{6.2.1}=\{c\}$ we require to
  remember~$\sigma_{6.2.1}(r_1)=1$ (see~$\UpdateRedStates$),
  since~$t_6$ removes~$b$ and~$C_{6.2.1}$ stems from
  some~$C_{5.i.{j_1}}$ with $b\not\in C_{5.i.{j_1}}$. The
  set~$C_{5.i.{j_1}}$ \emph{cannot} be a model of the GL
  reduct~${\{r_1\}}^{M_{5.i}}$ unless~$r_1$ is satisfied because of its body, 
  since $b\in M_{5.i}$ and $b\not\in C_{5.i.{j_1}}$. For
  choice rule~$r_1$, $\sigma_{6.2.1}(r_1) \neq \infty$ and
  $\sigma_{6.2.1}(r_1) \neq 0$ indicates that we can satisfy~$r_1$ 
  only by $B^+(r_1) \setminus M \neq \emptyset$ (see
  ${\cal P} \cup \SB \cdots, \rho(r) > 0 \SE$ in
  Definition~\ref{def:bagreduct}). The remaining
  counterwitness~$C_{6.2.2}=\emptyset$ was obtained by
  some~$C_{5.i.{j_2}}$ with $b\not\in C_{5.i.{j_2}}$, since
  $\sigma_{6.2.2}(r_2) = 0$). Further, $C_{6.2.3}=\emptyset$ stems
  from $C_{5.i.{j_3}}\supseteq\{b\}$, since $\sigma_{6.2.3}(r_2) = 1$.
\end{example}

\section{Omitted Proofs}

\subsection{Proof of Theorem~\ref{thm:prim:runtime} (Correctness
  result of $\algo{PRIM}$)}\label{proof:thm:prim:runtime}

\begin{proposition}\label{prop:prim:correctness}
  The algorithm ${\dpa}_{\PRIM}$ is correct.
\end{proposition}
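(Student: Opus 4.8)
The plan is to establish, by bottom-up induction along the nice TD $\TTT=(T,\chi)$, an exact description of the tables computed by $\PRIM$ (Algorithm~\ref{fig:prim}): for every node $t$ of $T$, the table $\tab{t}$ consists exactly of those tuples $\langle M,\CCC\rangle$ for which there is a set $\hat M\subseteq\att{t}$ with $\hat M\cap\chi(t)=M$ and $\hat M\models\progt{t}$, and such that $\CCC=\{\,\hat C\cap\chi(t)\mid \hat C\subsetneq\hat M,\ \hat C\models(\progt{t})^{\hat M}\,\}$ for that $\hat M$. Two witnessing sets $\hat M$ with the same projection $M$ may yield different families $\CCC$, all of which then occur in $\tab{t}$ --- consistent with what the tables look like, e.g.\ $\tab{11}$ in Figure~\ref{fig:running1_prim}. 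The first coordinate of this invariant is precisely the soundness and completeness statement for $\PRIMSAT$ already illustrated in Example~\ref{ex:sat}, so the new work lies in the counterwitness bookkeeping of the second coordinate. Granting the invariant, correctness of ${\dpa}_{\PRIM}$ follows at the root $n$: since $\chi(n)=\emptyset$, $\att{n}=\at(\prog)$ and $\progt{n}=\prog$, the table $\tab{n}$ contains $\langle\emptyset,\emptyset\rangle$ iff there is $\hat M\models\prog$ admitting no $\hat C\subsetneq\hat M$ with $\hat C\models\prog^{\hat M}$, i.e.\ iff $\prog$ has an answer set.

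For the induction I would distinguish the four node types. The \emph{leaf} case is immediate: $\chi(t)=\emptyset$, hence $\att{t}=\emptyset$ and $\progt{t}=\emptyset$, so the unique admissible set is $\emptyset$, which has no proper subset, giving $\tab{t}=\{\langle\emptyset,\emptyset\rangle\}$. In the \emph{remove} case (atom $a$, child $t'$) we have $\att{t}=\att{t'}$ and $\progt{t}=\progt{t'}$, so the admissible sets $\hat M$ and the candidate counterwitnesses $\hat C$ do not change, and taking $\chi(t)$-projections of a child tuple is exactly replacing $M$ by $\MAR{M}$ and each $C\in\CCC$ by $\MAR{C}$, which is Line~6.

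The \emph{introduce} case (atom $a$, child $t'$) uses two structural facts: $a$ occurs in no bag strictly below $t'$ (by connectedness), so $\att{t}=\att{t'}\cup\{a\}$ and no rule of $\progt{t'}$ mentions $a$; and every rule of $\prog_t$ has all its atoms in $\chi(t)$ (Observation~\ref{foot:clique_prim}), so such a rule is satisfied by a set iff by its $\chi(t)$-restriction, and the reducts of $\progt{t'}$ with respect to $\hat M$ and to $\hat M\setminus\{a\}$ agree. The witness-side side conditions ($\MAI{M}\models\prog_t$ or $M\models\prog_t$) give $\hat M^{+}\models\progt{t}$; for the counterwitnesses, a candidate $\hat C\subsetneq\hat M^{+}$ is classified by whether it contains $a$ and whether its projection equals $M$ or is a proper subset of $M$. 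For the true guess this yields exactly the three sources of $\CCC$ in Line~3 --- $M$ itself (``$a$ dropped''), an old $C\in\CCC$, an old $C$ extended by $a$ --- each kept iff it models $\prog_t^{\MAI{M}}$; for the false guess only the old $C\in\CCC$ survive, re-filtered through $\prog_t^{M}$ (Line~4). These are precisely the $\Mod(\cdot,\cdot)$ expressions in the algorithm.

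The \emph{join} case is the crux and, I expect, the main obstacle. For a join node $t$ with children $t',t''$ one first checks $\att{t'}\cap\att{t''}=\chi(t)$ and (for the primal graph) $\prog_{t'}=\prog_{t''}=\prog_t$, which makes the admissible sets $\hat M$ for $t$ correspond bijectively with pairs $(\hat M',\hat M'')$ of admissible sets for $t'$ and $t''$ that agree on $\chi(t)$; since rules of $\progt{t'}$ (resp.\ $\progt{t''}$) only mention atoms of $\att{t'}$ (resp.\ $\att{t''}$), satisfaction and the GL reduct restrict independently to the two sides. The delicate part is the counterwitnesses: a global $\hat C\subsetneq\hat M=\hat M'\cup\hat M''$ splits as $\hat C'=\hat C\cap\att{t'}$ and $\hat C''=\hat C\cap\att{t''}$, which agree with $\hat C$ on $\chi(t)$ and satisfy $\hat C'\subseteq\hat M'$, $\hat C''\subseteq\hat M''$, with $\hat C\subsetneq\hat M$ iff at least one inclusion is strict. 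The three cases --- both strict; only the left strict (so $\hat C''=\hat M''$, which models its own reduct, and the common projection is $M$); only the right strict --- match the common projection lying in $\CCC'\cap\CCC''$, in $\CCC'\cap\{M\}$, and in $\{M\}\cap\CCC''$ respectively, reproducing the Line~8 formula $(\CCC'\cap\CCC'')\cup(\CCC'\cap\{M\})\cup(\{M\}\cap\CCC'')$. The step needing most care, and the one that genuinely uses the convention that $M$ itself may represent a counterwitness, is the converse direction: that the glued set $\hat C=\hat C'\cup\hat C''$ restricts back to $\hat C'$ on $\att{t'}$ and to $\hat C''$ on $\att{t''}$ (using $M\subseteq\hat C'$ whenever the left projection is $M$, and symmetrically), so that no valid counterwitness is lost and none is spuriously created.
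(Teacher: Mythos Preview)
Your argument is correct and follows the same approach as the paper: both characterize the tables as projections of global witnesses together with all their proper-subset counterwitnesses, and verify this invariant inductively over the four node types. The paper presents this only as a sketch obtained by specializing the $\INC$ correctness proof (dropping the rule-state component~$\sigma$ from Definitions~\ref{def:partial-model}--\ref{def:localpartialsol}), whereas you state the invariant and the case analysis directly; the mathematical content is the same.
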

\shortversion{\begin{proof}[Sketch]}
\longversion{\begin{proof}[Proof (Sketch).]}
  Let $\prog$ be the given program and $\TTT=(T,\chi)$ the TD, where
  $T = (N,\cdot, n)$.  We obtain correctness by slightly modifying the
  proof of Theorem~\ref{thm:prim:runtime} as well as relevant
  definitions and propositions following
  Appendix~\ref{proof:thm:inc:correctness}. More precisely, we drop
  the mappings~$\sigma$ and relevant conditions for mappings~$\sigma$
  and replace them by satisfiability of the respective rules. By
  definition of a primal graph of a program, we know that for every
  rule~$r \in \prog$ there is a node~$t \in N$ such that
  $\chi(t) \subseteq \at(r)$. Hence, for a node~$t$ we can decide
  satisfiability of a rule directly, if bag~$\chi(t)$ contains all
  atoms of a rule, when computing the tables. We directly obtain
  completeness and soundness, which yields the proposition.
\end{proof}

\begin{proposition}\label{prop:prim:space}
  Given a program~$\prog$ and a TD ${\cal T} = (T, \chi)$ of the
  primal graph~$P(\prog)$ of width~$k$ with $T=(N, \cdot, \cdot)$. For
  every node $t\in N$, there are at most~$2^{k+1}\cdot2^{2^{k+1}}$
  tuples in table~$\tab{t}$, which is constructed by
  algorithm~${\dpa}_{\INC}$.
\end{proposition}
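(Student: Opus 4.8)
The plan is to establish a simple shape invariant for the tables produced by $\PRIM$ and then count. Recall from Algorithm~\ref{fig:prim} that every tuple produced by $\PRIM$ at a node $t$ is a pair $\langle M, \CCC\rangle$ where $M$ is a set of atoms and $\CCC$ a family of sets of atoms. I claim the invariant that $M \subseteq \chi(t)$ and $\CCC \subseteq 2^{M}$ holds for every tuple $\langle M, \CCC\rangle \in \tab{t}$ and every node $t \in N$. Granting this, the stated bound is immediate, so essentially all the work lies in verifying the invariant.

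First I would prove the invariant by induction on $t$ along a post-order traversal of $T$, distinguishing the four node types of Algorithm~\ref{fig:prim}. For a leaf, $\chi(t) = \emptyset$ (nice TD) and the sole tuple $\langle \emptyset, \emptyset\rangle$ clearly satisfies $\emptyset \subseteq \emptyset$ and $\emptyset \subseteq 2^{\emptyset}$. For an introduce node with child $t'$ introducing $a$ we have $\chi(t) = \chi(t') \cup \{a\}$; the new witnesses are $M$ or $\MAI{M} = M\cup\{a\}$ with $M \subseteq \chi(t')$ by induction, hence subsets of $\chi(t)$; and the new families are $\Mod(\{M\}\cup[\CCC\sqcup\{a\}]\cup\CCC, \prog_t^{\MAI{M}})$ resp.\ $\Mod(\CCC,\prog_t^{M})$, each of which is a subset of $2^{\MAI{M}}$ resp.\ $2^{M}$ because $\CCC\sqcup\{a\} \subseteq 2^{\MAI{M}}$, $\CCC \subseteq 2^{M} \subseteq 2^{\MAI{M}}$, $\{M\}\subseteq 2^{\MAI{M}}$, and $\Mod(\calA,\cdot)\subseteq\calA$. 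For a removal node with child $t'$ removing $a$ we have $\chi(t) = \chi(t')\setminus\{a\}$, witnesses become $\MAR{M} = M\setminus\{a\}\subseteq\chi(t)$, and families become $\{\MAR{C} \mid C\in\CCC\}\subseteq 2^{\MAR{M}}$. For a join node with children $t', t''$ we have $\chi(t)=\chi(t')=\chi(t'')$, the witness $M$ is unchanged, and the new family $(\CCC'\cap\CCC'')\cup(\CCC'\cap\{M\})\cup(\{M\}\cap\CCC'')$ is contained in $2^{M}$ since $\CCC',\CCC''\subseteq 2^{M}$ by induction and $\{M\}\subseteq 2^{M}$. This closes the induction.

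Finally I would count. Since $\Card{\chi(t)}\leq k+1$, there are at most $2^{k+1}$ distinct witnesses $M\subseteq\chi(t)$. For a fixed such $M$, every family with $\langle M,\CCC\rangle\in\tab{t}$ satisfies $\CCC\subseteq 2^{M}$, and $\Card{2^{M}}=2^{\Card{M}}\leq 2^{k+1}$, so there are at most $2^{2^{k+1}}$ choices for $\CCC$; since $\tab{t}$ is a set, no tuple is counted twice, whence $\Card{\tab{t}}\leq 2^{k+1}\cdot 2^{2^{k+1}}$. I do not expect a genuine obstacle here: the only step requiring a moment's care is the introduce case of the invariant, namely checking that forming $\CCC\sqcup\{a\}$ and then applying $\Mod$ keeps the counterwitness family a family of subsets of the (extended) witness; everything else is bookkeeping, and the witness part of the argument is essentially the reasoning already spelled out in Example~\ref{ex:sat}.
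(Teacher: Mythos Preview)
Your proposal is correct and follows essentially the same approach as the paper: bound the number of witnesses by $2^{k+1}$ via the bag size, and for each witness bound the number of counterwitness families by $2^{2^{k+1}}$ via $\CCC\subseteq 2^{M}$. The only difference is that you spell out, by a careful induction over node types, the shape invariant $M\subseteq\chi(t)$ and $\CCC\subseteq 2^{M}$, whereas the paper simply takes this invariant as part of the declared tuple format of $\PRIM$ (cf.\ the description preceding Example~\ref{ex:prim:min}) and proceeds directly to the count.
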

\begin{proof}
  Let $\prog$ be a program, $P(\prog)$ its primal graph, and
  $\TTT=(T,\chi)$ a TD of~$P(\prog)$ with $T = (N,\cdot,\cdot)$. For
  every node~$t \in T$, we have by definition of a tree decomposition
  and its width a maximum bag size of~$k+1$,~i.e.,
  $\Card{\chi(t)} - 1 \leq k$.  Therefore, we can have $2^{k+1}$ many
  witnesses and for each witness a subset of the set of witnesses
  consisting of at most~$2^{2^{k+1}}$ many counterwitnesses.
  Consequently, there are at most~$2^{k+1} \cdot 2^{2^{k+1}}$ tuples
  per node. Hence, the proposition is true.
\end{proof}

\noindent
Now, we are in situation to prove Theorem~\ref{thm:prim:runtime}.

\shortversion{\begin{proof}[of Theorem~\ref{thm:prim:runtime}]}
\longversion{\begin{proof}[Proof of Theorem~\ref{thm:prim:runtime}.]}
  Let $\prog$ be a program, $I(\prog)=(V,\cdot)$ its incidence graph,
  and $k$ be the treewidth of $P(\prog)$.
  Proposition~\ref{prop:prim:correctness} establishes correctness.
  Then, we can compute in time~$2^{\bigO{k^3}} \cdot \Card{V}$ a TD of
  width at most~$k$~\citesec{Bodlaender96}.  We take such a TD and
  compute in linear time a nice TD~\citesec{Kloks94a}. Let
  $\TTT = (T,\chi)$ be such a nice TD with $T = (N,\cdot,n)$.
  Since the number of nodes in~$N$ is linear in the graph size and
  since for every node~$t \in N$ the table~$\tab{t}$ is bounded by
  $2^{k+1}\cdot2^{2^{k+1}}$ according to
  Proposition~\ref{prop:prim:space}, we obtain a running time
  of~$\bigO{2^{2^{k+2}}\cdot\CCard{P(\prog)}}$. Consequently, the
  theorem sustains.
\end{proof}

\subsection{Proof of Theorem~\ref{thm:inc:correctness} (Correctness result of \algo{INC})}\label{proof:thm:inc:correctness}
In the following, we provide insights on the correctness of
Algorithm~\ref{fig:incinc} (\algo{INC}).  The correctness proof of
these algorithms need to investigate each node type separately.  We
have to show that a tuple at a node~$t$ guarantees existence of a
model for the program~$\prog_{\leq t}$, proving soundness. Conversely,
one can show that each candidate answer set is indeed evaluated while
traversing the TD, which provides completeness.
We employ this idea using the notions of (i)~\emph{partial solutions}
consisting of \emph{partial models} and the notion of (ii)~\emph{local
  partial solutions}.

\begin{definition}\label{def:partial-model}
  Let $\prog$ be a program, $\calT = (T, \chi)$ be a TD of the
  incidence graph~$I(\prog)$ of~$\prog$, where $T=(N,\cdot,\cdot)$,
  and $t\in N$ be a node. Further, let ${M},C \subseteq \atto$ be sets
  and ${\sigma}: \progt{t} \rightarrow \NAT_0 \cup \{\infty \}$ a
  mapping. The tuple $({C}, {\sigma})$ is a \emph{partial model
    for~$t$ under $M$} if the following conditions hold:
  \begin{enumerate}
  \item ${C} \models (\progtneq{t})^M$,
  \item for $r \in \progt{t}$ we have ${\sigma}(r) = 0$ or
    ${\sigma}(r) = \infty$,
  \item
    \begin{enumerate}
    \item for $r \in \disj(\progt{t})$ we have
      $B^-_r \cap M \neq \emptyset$ or
      $B^+_r \cap \atto \not\subseteq C$ or
      $H_r \cap C \neq \emptyset$ if and only if $\sigma(r)=\infty$,
    \item for $r \in \weight(\progt{t})$ we have 
      $\wght(r, (\at(r) \setminus \atto) \cup (B^-_r \setminus M) \cup
      (B^+_r \cap C)) < \bnd(r)$ or $H_r \cap C \neq \emptyset$ if and
      only if $\sigma(r)=\infty$, and
    \item for $r \in \choice(\progt{t})$ we have
      $B^-_r \cap M \neq \emptyset$ or
      $B^+_r \cap \atto \not\subseteq C$ or both $H_r \subseteq \atto$
      and $H_r \cap (M \setminus C) = \emptyset$ if and only if
      ${\sigma(r)} = \infty$.
    \end{enumerate}
  \end{enumerate}

\end{definition}

\begin{definition}\label{def:partialsol}
  Let $\prog$ be a program, $\calT = (T, \chi)$ where
  $T=(N,\cdot,n)$ be a TD of $I(\prog)$, and $t\in
  N$ be a node.  A \emph{partial solution for~$t$} is a tuple $({M},
  {\sigma}, {{\cal C}})$ where $({M},
  {\sigma})$ is a partial model under~$M$ and ${{\cal
      C}}$ is a set of partial models $({C}, {\rho})$ under
  ${M}$ with ${C} \subsetneq {M}$.
\end{definition}

\noindent The following lemma establishes correspondence between
answer sets and partial solutions.

\begin{lemma}\label{prop:partialsol_corr}
  Let $\prog$ be a program, $\calT = (T, \chi)$ be a TD of the
  incidence graph~$I(\prog)$ of program~$\prog$, where
  $T=(\cdot,\cdot,n)$, and $\chi(n) = \emptyset$.  Then, there exists
  an answer set ${M}$ for $\prog$ if and only if there exists a
  partial solution $\tabval = ({M}, \sigma, \emptyset)$ with
  $\sigma^{-1}(\infty) = \prog$ for root~$n$.
\end{lemma}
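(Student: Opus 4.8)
The plan is to prove both implications simply by \emph{unfolding} the definitions at the root~$n$; no induction over the tree decomposition is needed, since everything happens at one node. Recall from Section~\ref{sec:preliminaries} that $\att{n}=\at(\prog)$ and, since $\chi(n)=\emptyset$ forces $\prog_n=\emptyset$, also $\progt{n}=\progtneq{n}=\prog$; in particular $\at(r)\setminus\att{n}=\emptyset$ and $B^+_r\cap\att{n}=B^+_r$ for every $r\in\prog$. The key observation I would establish first is that, specialised to $t=n$ with $C=M$, conditions~2 and~3 of Definition~\ref{def:partial-model} together with $\sigma^{-1}(\infty)=\prog$ are \emph{equivalent} to $M\models\prog$: condition~3(a) collapses to ``$(H_r\cup B^-_r)\cap M\neq\emptyset$ or $B^+_r\not\subseteq M$'', condition~3(b) collapses to ``$\wght(r,(B^-_r\setminus M)\cup(B^+_r\cap M))<\bnd(r)$ or $H_r\cap M\neq\emptyset$'', and condition~3(c) is vacuously true for choice rules (its third disjunct holds since $M\setminus C=\emptyset$ and $H_r\subseteq\att{n}$). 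These are precisely the satisfaction conditions~(i)--(iii). Condition~1 ($M\models(\progtneq{n})^M=\prog^M$) is then automatic, since every model of a program models its GL reduct.

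For ``$\Rightarrow$'': given an answer set $M$ of $\prog$, I would take $\sigma(r)\eqdef\infty$ for all $r\in\prog$ and $\CCC\eqdef\emptyset$. By the observation, $(M,\sigma)$ is a partial model for $n$ under $M$. Since $M$ is subset-minimal with respect to $\prog^M$, no $C\subsetneq M$ satisfies $C\models\prog^M$; as condition~1 of being a partial model under $M$ demands exactly $C\models(\progtneq{n})^M=\prog^M$, the collection of partial models $(C,\rho)$ under $M$ with $C\subsetneq M$ is empty. Hence $\tabval=(M,\sigma,\emptyset)$ is a partial solution for $n$ with $\sigma^{-1}(\infty)=\prog$, as required.

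For ``$\Leftarrow$'': suppose $\tabval=(M,\sigma,\emptyset)$ is a partial solution for $n$ with $\sigma^{-1}(\infty)=\prog$. Since $\sigma(r)=\infty$ for every $r\in\progt{n}=\prog$, the observation yields $M\models\prog$. Towards subset-minimality of $M$ with respect to $\prog^M$, suppose some $C\subsetneq M$ satisfies $C\models\prog^M$. Then $C\subseteq\att{n}$, and setting $\rho(r)\eqdef\infty$ exactly when the condition listed in~3(a)/3(b)/3(c) holds for the pair $(C,M)$ and $\rho(r)\eqdef 0$ otherwise makes $(C,\rho)$ satisfy conditions~1--3, so $(C,\rho)$ is a partial model for $n$ under $M$ with $C\subsetneq M$. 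This contradicts $\CCC=\emptyset$, which by Definition~\ref{def:partialsol} comprises all such pairs. Therefore no such $C$ exists, and $M\models\prog$ together with subset-minimality with respect to $\prog^M$ makes $M$ an answer set of $\prog$.

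The two pieces that carry the actual content — and where I expect the only real care to be needed — are: (a) verifying symbol-for-symbol that the biconditionals in~3(a)--(c), after the $\at(r)\setminus\att{n}$ terms vanish and $C$ is taken to be $M$ (resp.\ $C\subsetneq M$), match the model-satisfaction conditions~(i)--(iii); and (b) the observation that the rule-state $\rho$ attached to a counterwitness $C$ is \emph{forced} by the pair $(C,M)$ via condition~3, so that emptiness of $\CCC$ is equivalent to non-existence of a proper subset of $M$ modelling $\prog^M$. For (b) I would treat the choice-rule case separately, since the GL reduct $\{a\leftarrow B^+_r\mid a\in H_r\cap M,\ B^-_r\cap M=\emptyset\}$ of a choice rule interacts with ``$C\models\prog^M$'' differently from the disjunctive and weight cases. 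Nothing here is deep, but the matching of the conditions must be done rule type by rule type.
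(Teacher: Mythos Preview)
Your proposal is correct and follows essentially the same approach as the paper's own proof, namely a direct unfolding of Definitions~\ref{def:partial-model} and~\ref{def:partialsol} at the root; you have simply filled in the rule-by-rule verifications that the paper leaves implicit. The one interpretive step you rely on---that $\CCC$ in Definition~\ref{def:partialsol} is the set of \emph{all} partial models $(C,\rho)$ under $M$ with $C\subsetneq M$, not merely some subset---is indeed the intended reading (and the one the paper's proof tacitly uses), so your argument goes through.
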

\begin{proof}
  Given an answer set ${M}$ of $\prog$ we construct
  $\tabval = ({M}, {\sigma}, \emptyset)$ with
  ${\sigma}(r) \eqdef \infty$ for $r\in \prog$ such that $\tabval$ is
  a partial solution for $n$ (according to
  Definition~\ref{def:partialsol}).  For the other direction,
  Definitions~\ref{def:partial-model} and \ref{def:partialsol}
  guarantee that ${M}$ is an answer set if there exists some
  tuple~$\tabval$. In consequence, the lemma holds.
\end{proof}

\noindent Next, we require the notion of local partial solutions
corresponding to the tuples obtained in Algorithm~\ref{fig:incinc}.

\begin{definition}
  Let $\prog$ be a program, $\calT = (T, \chi)$ a TD of $I(\prog)$,
  where $T=(N,\cdot,n)$, $t\in N$ be a node,
  ${M},C \subseteq \at(\prog)$ sets, and
  ${\sigma}: \prog \rightarrow \NAT_0 \cup \{\infty\}$ be a mapping.
  We define the \emph{local rule-state}
  $\sigma^{t,M,C} \eqdef \MARR{({\sigma} \squplus
    \sigma')}{\progtneq{t}}$ for ${C}$ under $M$ of node $t$ where
  $\sigma' : \prog_t \rightarrow \NAT_0 \cup \{ \infty \}$ by

  \begin{align*}
    \sigma'(r)\eqdef \begin{cases}
      \wght(r, (\atto \setminus \chi(t)) \cap [(B^-_r \setminus {M}) \cup (B^+_r \cap {C})])& r \in \weight(\prog_t)\quad\\
      \card{(\atto \setminus \chi(t)) \cap H_r \cap ({M} \setminus
        C)}& r \in \choice(\prog_t)
  \end{cases}
  \end{align*}
\end{definition}%
\begin{definition}\label{def:localpartialsol}
  Let $\prog$ be a program, $\calT = (T, \chi)$ a TD of the incidence
  graph~$I(\prog)$, where $T=(N,\cdot,n)$, and $t\in N$ be a node. A
  tuple $\tabval = \langle M, \sigma, {\cal C} \rangle$ is a
  \emph{local partial solution} for $t$ if there exists a partial
  solution
  ${\hat \tabval} = ({\hat M}, {\hat \sigma}, {\hat {\cal C}})$
  for~$t$ such that the following conditions hold:
\begin{enumerate}
\item $M = {\hat M} \cap \chi(t)$,
\item $\sigma = {\hat \sigma}^{t,{\hat M}, {\hat M}}$, and
\item
  ${\cal C} = \SB \langle {\hat C} \cap \chi(t), {\hat \rho}^{t,{\hat
      M},{\hat C}} \rangle \SM ({\hat C}, {\hat \rho}) \in {\hat {\cal
      C}} \SE$.
\end{enumerate}
We denote by ${\hat \tabval}^t$ the local partial solution $\tabval$
for~$t$ given partial solution ${\hat \tabval}$.

\end{definition}

\noindent The following proposition provides justification that it
suffices to store local partial solutions instead of partial solutions
for a node~$t \in N$.

\begin{lemma}\label{prop:partiallocalsol_corr}
  Let $\prog$ be a program, $\calT = (T, \chi)$ a TD of $I(\prog)$,
  where $T=(N,\cdot,n)$, and $\chi(n) = \emptyset$.  Then, there
  exists an answer set for $\prog$ if and only if there exists a local
  partial solution of the
  form~$\langle \emptyset, \emptyset, \emptyset \rangle$ for the
  root~$n \in N$.
\end{lemma}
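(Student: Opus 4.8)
The plan is to derive the statement from Lemma~\ref{prop:partialsol_corr} by pushing it through the correspondence between partial solutions and local partial solutions of Definition~\ref{def:localpartialsol}, using that at the root everything trivialises: $\chi(n)=\emptyset$, $\prog_n=\prog\cap\chi(n)=\emptyset$, and (recall) $\progt{n}=\progtneq{n}=\prog$, $\att{n}=\at(\prog)$. In particular, for any partial solution $\hat\tabval=(\hat M,\hat\sigma,\hat{\CCC})$ for~$n$ the induced local partial solution $\hat\tabval^{n}$ has first component $\hat M\cap\chi(n)=\emptyset$, and rule-state $\hat\sigma^{n,\hat M,\hat M}=\MARR{(\hat\sigma\squplus\sigma')}{\progtneq{n}}$, which discards every pair indexed by a rule of $\progtneq{n}=\prog$ and is hence the empty mapping; its counterwitness set $\{\langle\hat C\cap\chi(n),\hat\rho^{n,\hat M,\hat C}\rangle\mid(\hat C,\hat\rho)\in\hat{\CCC}\}$ is empty exactly when $\hat{\CCC}=\emptyset$.

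For the ``only if'' direction I would start from an answer set~$M$, invoke Lemma~\ref{prop:partialsol_corr} to get a partial solution $\hat\tabval=(M,\hat\sigma,\emptyset)$ for~$n$ (automatically with $\hat\sigma^{-1}(\infty)=\prog$), and read off from the computation above that $\hat\tabval^{n}=\langle\emptyset,\emptyset,\emptyset\rangle$; by Definition~\ref{def:localpartialsol} this is a local partial solution for~$n$.

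For the ``if'' direction I would take a witnessing partial solution $\hat\tabval=(\hat M,\hat\sigma,\hat{\CCC})$ for~$n$ of the local partial solution $\langle\emptyset,\emptyset,\emptyset\rangle$. By the computation above, conditions~1 and~2 of Definition~\ref{def:localpartialsol} carry no information at the root, while condition~3 forces $\hat{\CCC}=\emptyset$. It then remains to check $\hat\sigma^{-1}(\infty)=\prog$ so that Lemma~\ref{prop:partialsol_corr} applies and yields that $\hat M$ is an answer set of~$\prog$. I would prove that this holds for \emph{every} partial solution for~$n$: unfolding Definition~\ref{def:partial-model} at $t=n$, item~1 gives $\hat M\models(\progtneq{n})^{\hat M}=\prog^{\hat M}$, and since $\hat M\models r$ iff $\hat M\models r^{\hat M}$ for each disjunctive, choice, or weight rule~$r$ (immediate from the reduct definition in Section~\ref{sec:preliminaries}), this means $\hat M\models\prog$; specialising item~3 to $t=n$ (so the $B^+_r\cap\att{t}$ term becomes $B^+_r$ and the $\at(r)\setminus\att{t}$ term becomes $\emptyset$), the condition characterising $\hat\sigma(r)=\infty$ becomes exactly ``$\hat M$ satisfies $r$'' for disjunctive and weight rules and a tautology for choice rules, hence holds for every $r\in\prog$ and forces $\hat\sigma(r)=\infty$ for all~$r$, i.e.\ $\hat\sigma^{-1}(\infty)=\prog$.

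The main obstacle I anticipate is precisely this last step: since the root bag is empty, the local rule-state vanishes and conditions~1--2 of Definition~\ref{def:localpartialsol} become vacuous, so $\hat\sigma^{-1}(\infty)=\prog$ cannot be read off the local partial solution and must be recovered from the partial-model axioms. The linking fact ``$M\models r$ iff $M\models r^M$'' is routine per rule type but needs to be stated, since it is what connects item~1 of Definition~\ref{def:partial-model} (about the GL~reduct) to item~3 (about satisfaction of $\prog$); everything else is bookkeeping.
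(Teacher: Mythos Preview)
Your proposal is correct and follows the same route as the paper's own proof, which also derives the lemma from Lemma~\ref{prop:partialsol_corr} via Definition~\ref{def:localpartialsol} at the empty root bag. In fact you are more careful than the paper: its two-line argument leaves implicit exactly the step you flag as the main obstacle---verifying that every partial solution at the root satisfies $\hat\sigma^{-1}(\infty)=\prog$---which you correctly recover from item~3 of Definition~\ref{def:partial-model}.
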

\begin{proof}
  Since $\chi(n) = \emptyset$, every partial solution for the root~$n$
  is an extension of the local partial solution~$\tabval$ for the
  root~$n \in N$ according to Definition~\ref{def:localpartialsol}. By
  Lemma~\ref{prop:partialsol_corr}, we obtain that the lemma is true.
\end{proof}

\noindent
In the following, we abbreviate atoms occurring in bag~$\chi(t)$
by~$\at_t$,~i.e., $\at_t \eqdef \chi(t) \setminus \prog_t$.

\begin{proposition}[Soundness]\label{thm:soundness}
  Let $\prog$ be a program, $\calT = (T, \chi)$ a TD of incidence
  graph~$I(\prog)$, where $T=(N,\cdot,\cdot)$, and $t\in N$ a node.
  Given a local partial solution $\tabval'$ of child table $\tab{}'$
  (or local partial solution $\tabval'$ of table $\tab{}'$ and local
  partial solution $\tabval''$ of table $\tab{}''$), each tuple
  $\tabval$ of table $\tab{t}$ constructed using table algorithm
  $\INC$ is also a local partial solution.
\end{proposition}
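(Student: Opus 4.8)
The plan is to fix a node $t$ of $\calT$ and proceed by a case distinction on $\type(t)$. In each case, starting from the partial solution ${\hat\tabval}'=({\hat M}',{\hat\sigma}',{\hat\CCC}')$ (and, for a join node, additionally ${\hat\tabval}''$) that witnesses, via Definition~\ref{def:localpartialsol}, that the child tuple $\tabval'$ (resp.\ $\tabval''$) is a local partial solution, I would construct an explicit partial solution ${\hat\tabval}=({\hat M},{\hat\sigma},{\hat\CCC})$ for $t$ and verify the three conditions of Definition~\ref{def:localpartialsol} for the tuple $\tabval=\langle M,\sigma,\CCC\rangle$ returned by $\INC$. The leaf case is immediate, since $\progt{t}=\emptyset$ and thus $(\emptyset,\emptyset,\emptyset)$ is trivially a partial solution that projects to $\langle\emptyset,\emptyset,\emptyset\rangle$.

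For an introduce node I would take ${\hat M}\eqdef{\hat M}'\cup\{a\}$ (or ${\hat M}'$) when an atom $a$ is introduced, or ${\hat M}\eqdef{\hat M}'$ with the domain of ${\hat\sigma}$ extended by $r\mapsto 0$ or $r\mapsto\infty$ when a rule $r$ is introduced; the family ${\hat\CCC}$ is obtained from ${\hat\CCC}'$ by the same operation applied to each counterwitness. The content to check is that $\SSR$ together with the state-program construction of Definitions~\ref{def:bagprogram} and~\ref{def:bagreduct} records in $\sigma$ (and in the counterwitness rule-states) exactly whether the relevant bag-rules are ``satisfied by a superset'', i.e.\ that the local satisfiability test on $\dot\prog_t^{(t,\sigma)}$ agrees with conditions~3(a)--(c) of Definition~\ref{def:partial-model}; here the new bound $\max\{0,\bnd(r)-\sigma(r)-\wght(r,\at(r)\setminus\atto)\}$ of Definition~\ref{def:bagprogram}(3) carries the contribution of the atoms of $r$ already removed below $t$, and the auxiliary constraints $\leftarrow B_r$ for choice rules $r$ are precisely what makes this test sound for the counterwitnesses produced by Line~3. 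For a remove node, symmetrically, ${\hat M}\eqdef{\hat M}'$, ${\hat\sigma}\eqdef{\hat\sigma}'$, and the point is that $\sigma\squplus\UpdateStates(\prog_t,M,a)$ — respectively the $\UpdateRedStates$ variant for counterwitnesses — equals the local rule-state ${\hat\sigma}^{t,\hat M,\hat M}$ of ${\hat\tabval}$ at $t$: the removed atom $a$ passes from $\chi(t')$ into $\atto\setminus\chi(t)$, and $\UpdateStates$ adds exactly the weight (or choice-head) contribution of $a$ that the local rule-state now has to account for; for the rule-removal case one additionally invokes the side condition $\sigma(r)=\infty$, which is meaningful because by the time $r$ is removed every atom of $r$ has been seen, so $\sigma(r)=\infty$ certifies that $r$ is satisfied by every extension of $M$.

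The main obstacle is the join node, where $\chi(t)=\chi(t')=\chi(t'')$; I would set ${\hat M}\eqdef{\hat M}'\cup{\hat M}''$, take ${\hat\sigma}$ as the corresponding combination of ${\hat\sigma}'$ and ${\hat\sigma}''$, and ${\hat\CCC}$ as the family obtained by pairing counterwitnesses of ${\hat\CCC}'$ and ${\hat\CCC}''$ that agree on $\chi(t)$, together with the two ``mixed'' families appearing in the algorithm. The crucial structural facts, all consequences of the connectedness condition, are that $\progtneq{t}=\progtneq{t'}\cup\progtneq{t''}$ with $\progtneq{t'}\cap\progtneq{t''}=\emptyset$, that $\att{t'}\cap\att{t''}\subseteq\chi(t)$, and hence that $\atto\setminus\chi(t)$ is the disjoint union of $\att{t'}\setminus\chi(t)$ and $\att{t''}\setminus\chi(t)$. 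Using the first two, ${\hat M}$ restricted to $\att{t'}$ coincides with ${\hat M}'$ (they already agree on the bag), so $(\progtneq{t})^{\hat M}$ decomposes into the two reducts governed by $t'$ and $t''$, and ${\hat C}={\hat C}'\cup{\hat C}''$ models it; and the disjointness makes the weight sums and the choice-head counts additive, so the local rule-state of the combined partial model is exactly ${\hat\sigma'}^{t,\hat M,\hat M}\squplus{\hat\sigma''}^{t,\hat M,\hat M}$, which matches $\sigma'\squplus\sigma''$ as computed by $\INC$ (with $\infty$ absorbing, reflecting that a rule satisfied in one subtree stays satisfied). Checking conditions~3(a)--(c) of Definition~\ref{def:partial-model} for the union is then a routine per-rule-type verification, again relying on the exact shape of the $\leftarrow B_r$ constraints for choice rules. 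Dispatching all node types this way shows every constructed tuple is a local partial solution; together with Lemma~\ref{prop:partiallocalsol_corr} and induction along the post-order traversal this yields soundness of ${\dpa}_{\INC}$.
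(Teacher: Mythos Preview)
Your proposal is correct and follows essentially the same strategy as the paper: a case distinction on $\type(t)$, in each case lifting the witnessing partial solution(s) $\hat\tabval'$ (and $\hat\tabval''$) from the child(ren) to a partial solution $\hat\tabval$ for $t$ and verifying that it projects to the tuple produced by $\INC$. Your join construction is exactly the paper's $\bowtie$ operation, and you spell out more of the structural ingredients (the connectedness-based disjointness of $\att{t'}\setminus\chi(t)$ and $\att{t''}\setminus\chi(t)$, additivity of weight sums and choice-head counts) than the paper does, while the paper gives slightly more detail on the rule-removal case; but the argument is the same.
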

\begin{proof}
  Let $\tabval'$ be a local partial solution for $t'\in N$ and
  $\tabval$ a tuple for node~$t \in N$ such that $\tabval$ was derived
  from~$\tabval'$ using table algorithm \INC. Hence, node~$t'$ is the
  only child of $t$ and~$t$ is either removal or introduce node.

  Assume that $t$ is a removal node and
  $r\in \prog_{t'}\setminus \prog_t$ for some rule~$r$.
  Observe that $\tabval = \langle M, \sigma, {\cal C} \rangle$ and
  $\tabval' = \langle M, \sigma', {\cal C'} \rangle$ are the same in
  witness~$M$.  According to Algorithm~\ref{fig:incinc} and since
  $\tabval$ is derived from $\tabval'$, we have $\sigma'(r) =
  \infty$. Similarly, for any
  $\langle C', \rho' \rangle \in {\cal C'}$, $\rho'(r) = \infty$.
  Since $\tabval'$ is a local partial solution, there exists a partial
  solution~${\hat \tabval'}$ of $t'$, satisfying the conditions of
  Definition~\ref{def:localpartialsol}.  Then, ${\hat \tabval'}$ is
  also a partial solution for node $t$, since it satisfies all
  conditions of Definitions~\ref{def:partial-model} and
  \ref{def:partialsol}.  Finally, note that
  $\tabval = ({\hat {\tabval}'})^t$ since the projection of
  ${\hat \tabval'}$ to the bag $\chi(t)$ is $\tabval$ itself. In
  consequence, the tuple~$\tabval$ is a local partial solution.

  For~$a\in\at_{t'}\setminus\at_t$ as well as for introduce nodes, we
  can analogously check the proposition.

  Next, assume that $t$ is a join node. Therefore, let $\tabval'$ and
  $\tabval''$ be local partial solutions for $t',t''\in N$,
  respectively, and $\tabval$ be a tuple for node $t\in N$ such that
  $\tabval$ can be derived using both $\tabval'$ and $\tabval''$ in
  accordance with the \INC algorithm.  Since $\tabval'$ and
  $\tabval''$ are local partial solutions, there exists partial
  solution
  ${\hat \tabval'} = ({\hat M'}, {\hat \sigma'}, {\hat {\cal C'}})$
  for node $t'$ and partial solution
  ${\hat \tabval''} = ({\hat M''}, {\hat \sigma''}, {\hat {\cal
      C''}})$ for node $t''$.  Using these two partial solutions, we
  can construct
  ${\hat \tabval} = ({\hat M'} \cup {\hat M''}, {\hat \sigma'}
  \squplus {\hat \sigma''}, {\hat {\cal C'}} \bowtie {\hat {\cal
      C''}})$ where $\bowtie (\cdot, \cdot)$ is defined in accordance
  with Algorithm~\ref{fig:incinc} as follows:
  \begin{align*} {\hat {\cal C'}} \bowtie {\hat {\cal C''}} \eqdef
    &\SB({\hat C'} \cup {\hat C''}, {\hat \rho'} \squplus {\hat
      \rho''}) \SM ({\hat C'}, {\hat \rho'}) \in {\hat {\cal C'}},
    ({\hat C''}, {\hat \rho''}) \in {\hat {\cal C''}}, {\hat C'}
      \cap \at_t = {\hat C''} \cap \at_t\SE\cup \\
    &\SB({\hat C'} \cup {\hat M''}, {\hat \rho'} \squplus 
      {\hat \sigma''}) \SM ({\hat C'}, {\hat \rho'}) \in {\hat {\cal C'}}, 
      {\hat C'} \cap \at_t = {\hat M''} \cap \at_t\SE \cup \\
    &\SB({\hat M'} \cup {\hat C''}, {\hat \sigma'} \squplus 
      {\hat \rho''}) \SM ({\hat C''}, {\hat \rho''}) \in {\hat {\cal C''}}, 
      {\hat M'} \cap \at_t = {\hat C''} \cap \at_t \SE.
  \end{align*}
  Then, we check all conditions of Definitions~\ref{def:partial-model}
  and \ref{def:partialsol} in order to verify that ${\hat
    \tabval}$ is a partial solution for
  $t$. Moreover, the projection ${\hat \tabval}^t$ of ${\hat
    \tabval}$ to the bag $\chi(t)$ is exactly
  $\tabval$ by construction and hence, $\tabval = {\hat
    \tabval}^t$ is a local partial solution.

  Since we have provided arguments for each node type, we established
  soundness in terms of the statement of the proposition.

\end{proof}

\begin{proposition}[Completeness]\label{prop:completeness}
  Let $\prog$ be a program, $\calT = (T, \chi)$ where
  $T=(N,\cdot,\cdot)$ be a TD of $I(\prog)$ and $t\in N$ be a
  node. Given a local partial solution $\tabval$ of table $\tab{t}$,
  either $t$ is a leaf node, or there exists a local partial solution
  $\tabval'$ of child table $\tab{}'$ (or local partial solution
  $\tabval'$ of table $\tab{}'$ and local partial solution~$\tabval''$
  of table $\tab{}''$) such that $\tabval$ can be constructed by
  $\tabval'$ (or $\tabval'$ and $\tabval''$, respectively) and using
  table algorithm~${\INC}$.
\end{proposition}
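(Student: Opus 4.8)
The plan is to prove the exact converse of the soundness statement (Proposition~\ref{thm:soundness}), again by a case distinction on $\type(t)$ and reusing the notions of partial model and partial solution (Definitions~\ref{def:partial-model} and~\ref{def:partialsol}) together with the projection $\cdot^{t}$ of Definition~\ref{def:localpartialsol}. If $t$ is a leaf, the first alternative of the statement applies and there is nothing to show. Otherwise, starting from the given local partial solution $\tabval=\langle M,\sigma,\CCC\rangle$ for $t$, I would fix (via Definition~\ref{def:localpartialsol}) a partial solution $\hat\tabval=(\hat M,\hat\sigma,\hat{\cal C})$ for $t$ with $\tabval=\hat\tabval^{\,t}$. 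The task in each remaining case is then: (a)~build a partial solution $\hat\tabval'$ for the child $t'$ (and, for a join, also $\hat\tabval''$ for $t''$); (b)~put $\tabval'\eqdef{\hat\tabval'}^{\,t'}$, which is automatically a local partial solution, and likewise $\tabval''$; and (c)~verify that the matching line of Algorithm~\ref{fig:incinc} applied to $\tabval'$ (resp.\ to $\tabval'$ and $\tabval''$) yields precisely $\tabval$.

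For an introduce node I would take $\hat\tabval'$ to be $\hat\tabval$ itself, shrinking the rule-state domain only when a rule $r$ is introduced (then $\progtneq{t'}=\progtneq{t}$ while $\progt{t}$ adds $r$ as its only extra rule, and $\att{t'}=\att{t}$). That $\hat\tabval'$ still satisfies Definitions~\ref{def:partial-model} and~\ref{def:partialsol} for $t'$ is immediate since an introduced atom, resp.\ rule, imposes no new constraint on atoms strictly below $t'$; in particular, by the connectedness condition an introduced rule $r$ has all its below-$t$ atoms already in $\chi(t)$, so rebuilding its rule-state through $\SSR(\{\dot r\}^{(t,\cdot)},M)$ in the rule-introduce line recovers $\hat\sigma(r)$ (and each $\hat\rho(r)$) --- this is exactly where the auxiliary $\leftarrow B_r$ of Definitions~\ref{def:bagprogram} and~\ref{def:bagreduct} matches the ``$H_r\subsetneq\att{t}$'' clauses of condition~3 in Definition~\ref{def:partial-model}. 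The remaining fields are reproduced because the single fresh atom's contribution $\SSR(\dot\prog_t^{(t,\cdot)},\cdot)$ combined by $\squplus$ is precisely what the atom-introduce line computes. The removal cases are dual: when an atom $a$ is removed I reuse $\hat\tabval$ as a partial solution for $t'$ (whose bag now also contains $a$), and $\UpdateStates$, $\UpdateRedStates$ capture exactly the contribution of $a$ to the weight- and choice-rule states, so applying them to $\tabval'$ re-derives $\sigma$ and every $\rho$; when a rule $r$ is removed, conditions~1 and~3 of Definition~\ref{def:partial-model} together with the fact that $\at(r)\subseteq\att{t}$ at a rule-removal node force $\hat\sigma(r)=\infty$ and $\hat\rho(r)=\infty$ for every counterwitness, which is exactly the side condition under which the rule-removal line keeps a tuple.

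The join case is the technical core. Here $\chi(t)=\chi(t')=\chi(t'')$, and the connectedness condition yields $\att{t'}\cap\att{t''}=\chi(t)$ and $\att{t'}\cup\att{t''}=\att{t}$. I would split the witness as $\hat M'\eqdef\hat M\cap\att{t'}$ and $\hat M''\eqdef\hat M\cap\att{t''}$ (so $\hat M'\cup\hat M''=\hat M$ and $\hat M'\cap\chi(t)=\hat M''\cap\chi(t)=M$), split each $(\hat C,\hat\rho)\in\hat{\cal C}$ analogously into $\hat C\cap\att{t'}$ and $\hat C\cap\att{t''}$ with rule-states restricted to $\progt{t'}$ resp.\ $\progt{t''}$ and recomputed through the local-rule-state definition, and assemble $\hat\tabval'$, $\hat\tabval''$. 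That $\hat\tabval'$ and $\hat\tabval''$ are partial solutions for $t'$, $t''$ uses $\progtneq{t}=\progtneq{t'}\cup\progtneq{t''}$, where every such rule --- and all of its atoms --- lies below exactly one child, so the reduct condition $\hat C\models(\progtneq{t})^{\hat M}$ factorises over the two sides and conditions~2 and~3 transfer rule by rule. One then checks that the two projections of any $\hat C$ agree on $\at_{t}$, so the matching in the join line pairs them; that $\hat\sigma'\squplus\hat\sigma''$ and $\hat\sigma$ (and likewise $\hat\rho'\squplus\hat\rho''$ and $\hat\rho$) coincide after projection to $\chi(t)$, because $\squplus$ adds the removed-atom contributions of the two branches and keeps a rule satisfied once either branch satisfies it; and that the three unions forming the new counterwitness family in the join line produce exactly $\{\langle\hat C\cap\chi(t),\cdot\rangle:(\hat C,\hat\rho)\in\hat{\cal C}\}$, so $\tabval$ is recovered.

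I expect the main obstacle to be precisely this join bookkeeping: showing that the ``if and only if $\sigma(r)=\infty$'' characterisations in condition~3 of Definition~\ref{def:partial-model} survive splitting and recombination for each of the three rule types, including the corner cases of choice rules with $H_r\subsetneq\att{t}$ that inject the constraints $\leftarrow B_r$ into the state-programs. This should be a careful but routine check once one uses that every rule of $\progtneq{t}$ has all of its atoms inside a single child's atoms-below, whereas every rule of $\prog_t$ has all of its relevant below-$t$ atoms already in $\chi(t)$.
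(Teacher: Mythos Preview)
Your proposal is correct and follows essentially the same approach as the paper: fix a partial solution $\hat\tabval$ witnessing the given local partial solution $\tabval$, argue that (a suitable restriction of) $\hat\tabval$ is a partial solution for the child(ren), project via $(\cdot)^{t'}$ to obtain the required local partial solution(s), and verify that the relevant line of Algorithm~\ref{fig:incinc} reconstructs $\tabval$. The paper's own proof is far terser---it spells out only the rule-removal case and dismisses the remaining node types with ``by similar arguments''---so your write-up is in fact more complete, in particular for the join case, which you correctly identify as the place requiring the most bookkeeping. One small imprecision: for an atom-introduce node you cannot literally take $\hat\tabval'=\hat\tabval$, since the introduced atom $a$ may lie in $\hat M$ but not in $\att{t'}$; you must intersect with $\att{t'}$ first (just as you do in the join case), after which your argument goes through.
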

\begin{proof}
  Let $t\in N$ be a removal node and
  $r\in \prog_{t'} \setminus \prog_t$ with child node~$t'\in N$.  We
  show that there exists a tuple~$\tabval'$ in table~$\tab{t'}$ for
  node $t'$ such that $\tabval$ can be constructed using $\tabval'$ by
  \INC (Algorithm~\ref{fig:incinc}). Since $\tabval$ is a local
  partial solution, there exists a partial solution
  ${\hat \tabval} = ({\hat M}, {\hat \sigma}, {\hat {\cal C}})$ for
  node~$t$, satisfying the conditions of
  Definition~\ref{def:localpartialsol}.  Since $r$ is the removed
  rule, we have ${\hat \sigma}(r) = \infty$. By similar arguments, we
  have $\hat \rho(r) = \infty$ for any tuple
  $({\hat C}, {\hat \rho}) \in {\hat {\cal C}}$. Hence,
  ${\hat \tabval}$ is also a partial solution for~$t'$ and we define
  $\tabval' \eqdef {\hat \tabval}^{t'}$, which is the projection of
  ${\hat \tabval}$ onto the bag of~$t'$. Apparently, the
  tuple~$\tabval'$ is a local partial solution for node $t'$ according
  to Definition~\ref{def:localpartialsol}. Then, $\tabval$ can be
  derived using \INC algorithm and $\tabval'$.  By similar arguments,
  we establish the proposition for~$a\in \at_{t'}\setminus\at_t$ and
  the remaining (three) node types. Hence, the propositions sustains.
\end{proof}

\noindent
Now, we are in situation to prove Theorem~\ref{thm:inc:correctness}.

\shortversion{\begin{proof}[of Theorem~\ref{thm:inc:correctness}]}
\longversion{\begin{proof}[Proof of Theorem~\ref{thm:inc:correctness}.]}
  %
  We first show soundness. Let $\TTT = (T, \chi)$ be the given TD,
  where $T = (N,\cdot ,n)$.
  By Lemma~\ref{prop:partiallocalsol_corr} we know that there is an
  answer set for $\prog$ if and only if there exists a local partial
  solution for the root~$n$. Note that the tuple is of the
  form~$\langle \emptyset, \emptyfunc, \emptyset \rangle$ by
  construction.  Hence, we proceed by induction starting from the leaf
  nodes. In fact, the
  tuple~$\langle \emptyset, \emptyset, \emptyset \rangle$ is trivially
  a partial solution by Definitions~\ref{def:partial-model}
  and~\ref{def:partialsol} and also a local partial solution of
  $\langle \emptyset, \emptyset, \emptyset \rangle$ by
  Definition~\ref{def:localpartialsol}.  We already established the
  induction step in Proposition~\ref{thm:soundness}.
  Hence, when we reach the root~$n$, when traversing the TD in
  post-order by Algorithm~$\dpa_{\INC}$, we obtain only valid tuples
  inbetween and a tuple of the
  form~$\langle \emptyset, \emptyset, \emptyset \rangle$ in the table
  of the root~$n$ witnesses an answer set.
  Next, we establish completeness by induction starting from the
  root~$n$. Let therefore, $M$ be an arbitrary answer set of~$\prog$.
  By Lemma~2, we know that for the root~$n$ there exists a local partial solution of
  the form~$\langle \emptyset, \emptyset, \emptyset \rangle$ for partial 
  solution~$\langle M, \sigma, \emptyset \rangle$ with $\sigma(r)=\infty$ for $r\in\prog$.  
  We already established the induction step in
  Proposition~\ref{prop:completeness}. 
  Hence, we obtain some (corresponding) tuples for every
  node~$t$. Finally, stopping at the leaves~$n$.
  In consequence, we have shown both soundness and completeness
  resulting in the fact that Theorem~\ref{thm:inc:correctness} is
  true.
\end{proof}

Theorem~\ref{thm:inc:correctness} states that we can decide the
problem~\AspCons by means of Algorithm~$\dpa_{\INC}$, which uses
Algorithm~\ref{fig:incinc}.

\subsection{Proof of Theorem~\ref{thm:inc:runtime} (Worst-case Runtime
  Bounds of \algo{INC})}\label{proof:inc:runtime}

First, we give a proposition on worst-case space requirements in
tables for the nodes of our algorithm.

\begin{proposition}\label{prop:inc:space}
  Given a program \prog, a TD ${\cal T} = (T, \chi)$ with
  $T=(N, \cdot, \cdot)$ of the incidence graph~$I(\prog)$, and a node
  $t\in N$. Then, there are at most
  $2^{k+1}\cdot\ell^{k+1}\cdot2^{2^{k+1}\cdot\ell^{k+1}}$ tuples in
  $\tab{t}$ using algorithm ${\dpa}_{\INC}$ for width $k$ of
  ${\cal T}$ and bound
  $\ell = max\{3, \bnd(r) : r \in \weight(\prog)\}$.
\end{proposition}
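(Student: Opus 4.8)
The plan is a direct counting argument over the three components of a tuple $\langle M,\sigma,\CCC\rangle$ that $\INC$ stores, using only that every bag of ${\cal T}$ has at most $k+1$ vertices. First I would observe that, for the incidence graph, $\chi(t)$ contains at most $k+1$ vertices, each of which is either an atom or a rule; hence $\at(\prog)\cap\chi(t)$ has at most $k+1$ elements and so does $\prog_t=\prog\cap\chi(t)$. A witness $M$ is a subset of $\at(\prog)\cap\chi(t)$, so there are at most $2^{k+1}$ choices for $M$.

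Next I would bound the number of rule-states. A rule-state $\sigma$ (and, in a pair $(C,\rho)\in\CCC$, the rule-state $\rho$) is a map $\prog_t\to\NAT_0\cup\{\infty\}$, hence determined by its value on each of the $\le k+1$ rules of $\prog_t$. The key sub-claim is that for a fixed rule $r$ at most $\ell$ values ever arise: for $r\in\disj(\prog_t)$ the value is confined to $\{0,\infty\}$; for $r\in\choice(\prog_t)$ only the distinctions ``$=0$'', ``$>0$'', ``$=\infty$'' are used (Definition~\ref{def:bagreduct}), so positive head-counters may be saturated and at most $3\le\ell$ values survive; and for $r\in\weight(\prog_t)$ the value records an accumulated body weight that enters only through the rewritten bound $\max\{0,\bnd(r)-\sigma(r)-\wght(r,\at(r)\setminus\atto)\}$ of Definition~\ref{def:bagprogram}, so every accumulated weight that reaches $\bnd(r)$ may be clamped, leaving at most $\ell$ canonical values (using $\bnd(r)\le\ell$). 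Multiplying over the $\le k+1$ rules then gives at most $\ell^{k+1}$ rule-states.

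Finally I would bound $\CCC$: each of its elements is a pair $(C,\rho)$ with $C\subseteq\at(\prog)\cap\chi(t)$ and $\rho$ a rule-state, so by the two previous steps there are at most $2^{k+1}\cdot\ell^{k+1}$ such pairs, and $\CCC$ is an arbitrary set of them, giving at most $2^{2^{k+1}\cdot\ell^{k+1}}$ possibilities. Combining the three bounds yields at most $2^{k+1}\cdot\ell^{k+1}\cdot2^{2^{k+1}\cdot\ell^{k+1}}$ tuples in $\tab{t}$, and since none of the bounds depends on $t$, this holds for every node of $N$. I expect the only delicate point to be the per-rule claim for weight and choice rules: one has to verify that the $\squplus$ summation together with the bound rewriting in Definitions~\ref{def:bagprogram} and~\ref{def:bagreduct} genuinely makes all ``out-of-range'' rule-state values behave identically, so that at most $\ell$ values per rule need to be retained; the remaining steps are elementary bookkeeping about bag sizes.
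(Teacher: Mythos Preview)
Your proposal is correct and follows essentially the same counting argument as the paper: bound the witnesses by $2^{k+1}$, bound the rule-states per rule by a case split on disjunctive/choice/weight rules (yielding at most $\ell^{k+1}$ rule-states), and then bound $\CCC$ as a subset of all $(C,\rho)$ pairs. The paper likewise notes that the algorithm must be ``easily modified'' so that rule-state values are clamped to the relevant range, which is exactly the saturation point you flag as the only delicate step.
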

\shortversion{\begin{proof}{(Sketch)}}%
\longversion{\begin{proof}[Proof (Sketch).]}
  Let $\prog$ be the given program, ${\cal T} = (T, \chi)$ a TD of the
  incidence graph~$I(\prog)$, where $T=(N, \cdot, \cdot)$, and
  $t\in N$ a node of the TD. Then, by definition of a decomposition of
  the primal graph for each node~$t\in N$, we
  have~$\Card{\chi(t)} - 1 \leq k$.  In consequence, we can have at
  most $2^{k+1}$ many witnesses, and for each witness a subset of the
  set of witnesses consisting of at most~$2^{2^{k+1}}$ many
  counterwitnesses.  Moreover, we observe that
  Algorithm~\ref{fig:incinc} can be easily modified such that a state
  $\sigma: \prog_t \rightarrow \NAT_0 \cup \{\infty\}$ for node
  $t\in N$ assigns each weight rule $r\in \weight(\prog)$ a
  non-negative integer $\sigma(r) \leq \bnd(r) + 1$, each choice rule
  $r\in \choice(\prog)$ a non-negative integer $\sigma(r) \leq 2$ and
  each disjunctive rule $r \in \disj(\prog)$ a non-negative integer
  $\sigma(r) \leq 1$. This is the case since we need to model
  $\sigma(r) = 0$ and $\sigma(r) = \infty$ for each disjunctive rule
  $r$.  Moreover, for choice rules $r$, it suffices to additionally
  model whether $1 \leq \sigma(r) < \infty $, and for weight rules
  $r$, we require to remember any weight
  $1 \leq \sigma(r) \leq \bnd(r)$. In total, we need to distinguish
  $\ell^{k+1}$ different rule-states for each witness of a tuple in
  the table~$\tab{t}$ for node~$t$.  Since for each witness in the
  table~$\tab{t}$ for node~$t \in N$ we remember rule-states for at
  most~$k+1$ rules, we store up to $\ell^{k+1}$ many combinations per
  witness.  In total we end up with at most
  $2^{2^{k+1} \cdot \ell^{k+1}}$ many counterwitnesses for each
  witness and rule-state in the worst case. Thus, there are at most
  $2^{k+1} \cdot \ell^{k+1} \cdot 2^{2^{k+1}\cdot \ell^{k+1}}$ tuples
  in table~$\tab{t}$ for node~$t$. In consequence, we established the
  proposition.
\end{proof}

\shortversion{\begin{proof}[of Theorem~\ref{thm:inc:runtime}]}
\longversion{\begin{proof}[Proof of Theorem~\ref{thm:inc:runtime}.]}
  Let $\prog$ be a program, $I(\prog)=(V,\cdot)$ its incidence graph, and
  $k$ be the treewidth of $P(\prog)$.
  Then, we can compute in time~$2^{\bigO{k^3}} \cdot \Card{V}$ a TD of
  width at most~$k$~\citesec{Bodlaender96}.  We take such a TD and
  compute in linear time a nice TD~\citesec{Kloks94a}. Let
  $\TTT = (T,\chi)$ be such a nice TD with $T = (N,\cdot,\cdot)$.
  Since the number of nodes in~$N$ is linear in the graph size and
  since for every node~$t \in N$ the table~$\tab{t}$ is bounded by
  $2^{k+1} \cdot \ell^{k+1} \cdot 2^{2^{k+1}\cdot \ell^{k+1}}$
  according to Proposition~\ref{prop:inc:space}, we obtain a running
  time of~$\bigO{2^{2^{k+2}\cdot \ell^{k+1}}
    \CCard{I(\prog)}}$. Consequently, the theorem sustains.
\end{proof}

\subsection{Correctness of the Algorithm~$\dpa_{{\#O\INC}}$}

The following propositions states that we can use
Algorithm~$\dpa_{{\#O\INC}}$ to actually count optimal answer sets.

\begin{proposition}
  The algorithm~$\dpa_{{\#O\INC}}$ is correct.
\end{proposition}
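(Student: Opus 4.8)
The plan is to reduce to Theorem~\ref{thm:inc:correctness}. Since the first three components $\langle M,\sigma,\CCC\rangle$ of every tuple are manipulated by $\dpa_{{\#O\INC}}$ exactly as by $\dpa_{\INC}$, the correspondence between the tuples in $\tab{t}$ and the (canonical) local partial solutions for node~$t$ carries over unchanged; in particular, by Lemma~\ref{prop:partiallocalsol_corr}, the root table contains a tuple with first three components $\langle\emptyset,\emptyfunc,\emptyset\rangle$ iff $\prog$ has an answer set, and the partial solutions of the root projecting to that tuple are in bijection with the answer sets of $\prog$ (an answer set $M$ giving the partial solution $(M,\sigma,\emptyset)$ with $\sigma\equiv\infty$). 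What remains is to account for the two new integers. For a partial solution $\hat{\tabval}=(\hat M,\hat\sigma,\hat\CCC)$ of $t$, call $\oo(\prog,\hat M,\att{t})$ its \emph{cost below~$t$}; since every optimization rule has a single body literal, this equals $\sum\cst(r)$ over the optimization rules whose triggering atom already lies in $\att{t}$ and is set by $\hat M$ so as to fire~$r$. The key claim, to be proven by induction along the post-order traversal, is the invariant: whenever $\langle M,\sigma,\CCC,c,n\rangle\in\tab{t}$, we have $c=\min\{\oo(\prog,\hat M,\att{t})\mid \hat{\tabval}^{t}=\langle M,\sigma,\CCC\rangle\}$ and $n=\#\{\hat{\tabval}\mid \hat{\tabval}^{t}=\langle M,\sigma,\CCC\rangle,\ \oo(\prog,\hat M,\att{t})=c\}$.

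I would carry out the induction step by inspecting each clause of Algorithm~\ref{fig:opt}. At a leaf the empty partial solution is unique of cost~$0$, giving $c=0,n=1$. When an atom~$a$ is introduced, each child partial solution splits into one with $a$ false, adding $\oo(\prog,\emptyset,\{a\})$, and one with $a$ true, adding $\oo(\prog,\{a\},\{a\})$; the standing assumption that the optimization rule containing~$a$ lies in $\chi_t$ makes this contribution computable locally and, since an introduced atom's membership in the witness is never changed afterwards, commits it at this node, while the counts are inherited because the map to the two derived tuples is injective on $(M,\sigma,\CCC)$. Introducing or removing a rule decides no atom, so $c$ and $n$ are copied. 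When an atom~$a$ is removed, the child tuples whose witnesses agree after deleting~$a$ (and after updating $\sigma$ and $\CCC$) collapse to one tuple; $\kmin$ retains exactly the cost-minimal summands and $\cnt$ sums their counts, which by the child invariant is the number of cost-minimal extensions. At a join node, the connectedness condition gives $\att{t'}\cap\att{t''}=\at(\prog)\cap\chi_t$, so an optimization rule whose trigger lies in $\chi_t$ is counted in \emph{both} child costs; hence $c'+c''-\oo(\prog,M,\chi_t)$ is the cost of the recombined extension, a below-$t$ extension $\hat M$ decomposes uniquely as $\hat M\cap\att{t'}$ and $\hat M\cap\att{t''}$, and the $\kmin/\cnt$ wrapper therefore computes $\min(c'+c''-\oo(\prog,M,\chi_t))$ and $\sum n'n''$ over the compatible child tuples of minimal combined cost. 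Finally, at the root $\chi(n)=\emptyset$ and $\att{n}=\at(\prog)$, so the (necessarily unique) surviving tuple $\langle\emptyset,\emptyfunc,\emptyset,c,n\rangle$ has $c$ equal to the minimum of $\oo(\prog,M,\at(\prog))$ over all answer sets~$M$ and $n$ equal to the number of answer sets attaining it, i.e.\ the number of optimal answer sets.

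I expect the join case to be the main obstacle, for two reasons. First, I must show that if a below-$t$ partial solution $\hat M$ of minimal cost projects to $\langle M,\sigma,\CCC\rangle$, then each of its two restrictions is cost-minimal in its child class; this rests on the fact that replacing one restriction by a cost-cheaper one with the same bag-projection and recombining still yields a valid partial solution for~$t$, which is precisely the recombination construction ($\bowtie$ together with the verification of Definitions~\ref{def:partial-model}--\ref{def:partialsol}) already established in the proof of Theorem~\ref{thm:inc:correctness}, so I would reuse it verbatim. Second, I must check that distinct compatible pairs of child tuples contribute disjoint families of recombinations, which is immediate since the pair is recovered from $\hat M$ by restriction, so that summing the products $n'n''$ overcounts nothing. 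The remaining, routine, obligation is the bookkeeping identity that the contributions added at introduce nodes minus the per-bag overlaps subtracted at join nodes telescope along the tree to exactly $\oo(\prog,\hat M,\att{t})$: this follows because each optimization rule is triggered by a single literal, so along the connected subtree of bags containing that literal's atom the number of introduce nodes exceeds the number of join nodes by exactly one.
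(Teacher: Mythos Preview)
Your proposal is correct and follows essentially the same route as the paper: reuse the correctness machinery of Theorem~\ref{thm:inc:correctness} for the first three tuple components and add, by induction over node types, the bookkeeping that $c$ and $n$ track the minimum cost and the multiplicity of cost-minimal (canonical) partial solutions projecting to the current local tuple. The paper's own proof is only a short sketch (``extend Definitions~\ref{def:partial-model}--\ref{def:localpartialsol} by counters and redo Propositions~\ref{thm:soundness} and~\ref{prop:completeness}''), so your explicit invariant, the node-by-node case analysis, and in particular your treatment of the join case (the exchange argument via the $\bowtie$ recombination and the disjointness observation for $\sum n'n''$) go well beyond what the paper actually writes down, but they are exactly what the sketch is asking the reader to supply.
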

\shortversion{\begin{proof}[Sketch]}
\longversion{\begin{proof}[Proof (Sketch).]}
  We follow the proof of Theorem~\ref{thm:inc:correctness}. First, we
  additionally need to take care of the optimization rules obtained by
  extending
  Definitions~\ref{def:partial-model}--\ref{def:localpartialsol}, the
  lemmas and propositions accordingly. In order to handle the
  counting, we have to extend
  Definitions~\ref{def:partial-model}--\ref{def:localpartialsol} by
  counters. Further, we additionally need to ensure and prove in the
  induction steps, which are established by
  Propositions~\ref{thm:soundness} and \ref{prop:completeness}, that
  any fixed partial solution is obtained from child to parent via a
  corresponding local partial solution by the
  algorithm. 
\end{proof}


\newpage
\section{Experiments}\label{appendix:experiments}

\subsection{Solvers}
The solvers tested include our own prototypical implementation, which
we refer to as DynASP, and the existing solvers
\begin{itemize}
\item Cachet~1.21~\cite{sat:SangBBKP04}, which is a SAT model counter,
\item \depqbfz\footnote{See
    \url{https://github.com/hmarkus/depqbf/tree/depqbf0}}, which is
  the solver DepQBF~\citesec{LonsingBiere10} where we added a naive
  implementation using methods described
  by~Lonsing~\citesec{Lonsing15},
\item Clasp~3.1.4~\citesec{GebserKaufmannSchaub12a}, which is an ASP
  solver, and
\item \sharpsat~12.08~\cite{sat:Thurley06}, which is a SAT model
  counter.
\end{itemize}

\subsection{Environment}
We ran the experiments on an Ubuntu 12.04 Linux cluster of 3 nodes
with two AMD Opteron 6176 SE CPUs of 12 physical cores each at 2.3Ghz
clock speed and 128GB RAM. Input instances were given to the solvers
via shared memory.  All solvers have been compiled with gcc version
4.9.3.  Available memory was limited to 4GB RAM, which was necessary
to run \sharpsat on larger instances, and CPU time to 300 seconds.
We used default options for cachet and \sharpsat, \mbox{``--qdc''} for
\depqbfz, ``--stats=2 --opt-mode=optN -n 0 --opt-strategy=usc -q'' and
no solution printing/recording for clasp. We also benchmarked clasp
with the flag ``bb''. However, ``usc'' outperformed ``bb'' on all our
benchmarks. All solvers have been executed in single core mode.

\subsection{Instances}
We used both random and structured instances for benchmark sets, which
we briefly describe below. The benchmark sets, including instances and
encodings, as well as results are available online on
github\footnote{See
  \url{https://github.com/daajoe/lpnmr17_experiments}.}.

The random instances (\pname{Sat-TGrid}, \pname{2QBF-TGrid},
\pname{ASP-TGrid}, \pname{2ASP-TGrid}) were designed to have a high
number of variables and solutions and treewidth at most three. 
The instances are constructed as follows:
%
Let $k$ and $\ell$ be some positive integers and $p$ a rational number
such that $0<p\leq 1$.
An instance~$F$ of \pname{Sat-TGrid$(k,l,p)$} consists of the
set~$V=\SB (1,1),\ldots, (1,\ell),(2,\ell),\ldots, (k,\ell) \SE$ of
variables and with probability~$p$ for each variable~$(i,j)$ such that
$1<i\leq k$ and $1<j\leq \ell$ a clause $s_1(i,j)$, $s_2(i - 1,j)$,
$s_3(i,j - 1)$, a clause $s_4(i,j)$, $s_5(i - 1,j)$, $s_6(i-1,j - 1)$,
and a clause $s_7(i,j)$, $s_8(i - 1,j - 1)$, $s_9(i,j - 1)$ where
$s_i \in \{-,+\}$ is selected with probability one half. In that way,
such an instance has an underlying dependency graph that consists of
various triangles forming for probability~$p=1$ a graph that has a
grid as subgraph.  Let $q$ be a rational number such that $0<q\leq
1$. An instance of the set~\pname{2Qbf\hy TGrid$(k,l,p,q)$} is of the
form~$\exists V_1. \forall V_2. F$ where a variable belongs to~$V_1$
with probability~$q$ and to $V_2$ otherwise. Instances of the
sets~\pname{ASP-TGrid} or \pname{2ASP-TGrid} have been constructed in
a similar way, however, as an \ASP program instead of a formula. Note
that the number of answer sets and the number of satisfiable
assignments correspond.  We fixed the parameters to $p=0.85$, $k=3$,
and $l\in \{40,80,\ldots,400\}$ to obtain instances that have with
high probability a small fixed width, a high number of variables and
solutions. Further, we took fixed random seeds and generated 10
instances to ensure a certain randomness.

The structured instances model various graph problems (\pname{2Col},
\pname{3Col}, \pname{Ds}, \pname{St} \pname{cVc}, \pname{sVc}) on real
world mass transit graphs of 82 cities, metropolitan areas, or
countries.  The graphs were extracted from publicly available mass
transit data feeds~\citesec{gtfs} using
gtfs2graphs~\citesec{Fichte16c} and split by transportation
type,~e.g., train, metro, tram. We excluded bus networks as size and
treewidth were too large.  For an input graph, the \pname{2Col}
encoding counts all minimal sets~$S$ of vertices s.t.\ there are two
sets~$F$ and $S$ where no two neighboring vertices~$v$ and $w$ belong
to~$F$; \pname{3Col} counts all 3-colorings; \pname{Ds} counts all
minimal dominating sets; \pname{St} counts all Steiner trees;
\pname{cVc} counts all minimal vertex covers; and \pname{sVc} counts
all subset-minimal vertex covers.
Since we cannot expect to solve instances of high treewidth efficiently, we
restricted the instances to those where we were able to find decompositions of
width below 20 within 60 seconds. 

\subsection{Extended Discussion on the Results}
In order to draw conclusions about the efficiency of our approach, we
mainly inspected the total cpu running time and number of timeouts on
the random and structured benchmark sets.  Note that we did not record
I/O times. The runtime for \dynasp{\cdot} includes decomposition times
using heuristics from~\citesec{Dell16b,DermakuEtAl08}. We randomly
generated three fixed seeds for the decomposition computation to allow
a certain variance~\cite{AbseherEtAl15a}. When evaluating the
results, we took the average over the three runs per instance.
Figure~\ref{fig:random} illustrates solver runtime on the various random
instance sets and a selected structured instance set as a cactus plot.
Table~\ref{tab:real_world} reports on the average running times, number of
solved instances, and number of timeouts of the solvers on the structured
instance sets.

\subsubsection{Results.}
%
%
%
\pname{SAT-TGrid} and \pname{Asp-TGrid}:
Cachet solved 125 instances.
Clasp always timed out. A reason could be the high number of solutions as
Clasp counts the models by enumerating them (without printing them).
\dynasp{\cdot} solved each instance within at most 270 seconds (on
average 67 seconds). The best configuration with respect to runtime
was \algo{PRIM}. However, the running times of the different
configurations were close. We observed as expected a sub-polynomial
growth in the runtime with an increasing number of solutions.
\sharpsat timed out on 3 instances and ran into a memory out on 7
instances, but solved most of the instances quite fast. Half of the
instances were solved within 1 second and more than 80\% of the
instances within 10 seconds, and about 9\% of the instances took more
than 100 seconds. The number of solutions does not have an impact on
the runtime of \sharpsat.
\sharpsat was the fastest solver in total. However, \dynasp{\cdot}
solved all instances. The results are illustrated in the two left
graphs of Figure~\ref{fig:random}.

\pname{2QBF-TGrid} and \pname{2ASP-TGrid}: Clasp solved more than half
of the instances in less than 1 second, however, timed out on 59
instances. \depqbfz shows a similar behavior as Clasp, which is not
surprising as both solvers count the number of solutions by
enumerating them and hence the number of solutions has a significant
impact on the runtime of the solver. However, Clasp is faster
throughout than \depqbfz.
DynASP2(\algo{INC}) solved half of the instances within less than 1
second, about 92\% of the instances within less than 10 seconds, and
provided solutions also if the instance had a large number of answer
sets. DynASP2(\algo{PRIM}) quickly produced timeouts due to large
rules in program that produced a significantly larger width of the
computed decompositions.

Structured instances: 
Clasp solved most of the structured instances reasonably
fast. However, the number of solutions has again, similar to the
random setting, a significant impact on its performance. If the number
of solutions was very high, then Clasp timed out. If the instance has
a small number of solutions, then Clasp yields the number almost
instantly. However, \dynasp{\cdot} also provided a solution within a
second.
\dynasp{\cdot} solved for each set but the set~\pname{St} more than
80\% of the instances in less than 1 second and the remaining
instances in less than 100 seconds. For \pname{St} the situation was
different. Half of the instances were solved in less than 10 seconds
and a little less than the other half timed out.  Similar to the
random setting, \dynasp{\cdot} ran still fast on instances with a
large number of solutions.

\bibliographystylesec{plain}
\bibliographysec{references_appendix}

\end{document}